\newcommand{\pp}[2]{\frac{\partial #1}{\partial #2}}
\newcommand{\ff}[2]{\frac{#1}{#2}}
\newcommand{\be}[0]{\beta }
\newcommand{\al}[0]{\alpha }
\newcommand{\hh}[2]{\frac{d #1}{d #2}}
\newcommand{\ma}[1]{\mathrm{#1}}
\newcommand{\ppp}[2]{\frac{\partial^2 #1}{\partial #2^2}}
\newcommand{\mb}[1]{\mathbb{#1}}
\newcommand{\pur}[1]{\partial_{#1}}
\newcommand{\lr}[1]{\left( {#1} \right)}
\newcommand{\tib}[0]{\tilde{\beta}}
\newtheorem{assumption}{Assumption}[section]
\newtheorem{theorem}{Theorem}
\newtheorem{corollary}[assumption]{Corollary}
\newtheorem{lem}{Exercise}
\newtheorem{lemma}[assumption]{Lemma}
\newtheorem{definition}{Definition}
\newcommand{\fel}[3]{\feynmandiagram [baseline=-0.5ex, small, horizontal=a to b] {
		a [particle=$#1$]-- [ edge label=$#2$] b [particle=$#3$],
	};}
\newcommand{\fed}[4]{\feynmandiagram [baseline=-0.5ex, small, layered layout, horizontal=a to b] {
		a [particle=$#1$]--[ edge label=$#2$] c[dot]-- [ edge label=$#3$] b [particle=$#4$]  ],
	};}
\newcommand{\mlg}[3]{
	\begin{tikzpicture}[baseline= #1 ex]
	\def\sc{#2};
	\def\ss{#3};
	\begin{feynman}
	\vertex (a1);
	\vertex [blob,style={pattern= north west lines}, scale=\sc* \ss, right=-0.00*\ss*\sc cm of a1] (t2) {};
	\vertex [blob,style={pattern= north east lines}, scale=\sc* \ss, right=-0.00*\ss*\sc cm of a1] (t1) {};
	\diagram* {
		(a1) };
	\end{feynman}
	\end{tikzpicture}
}
 \newcommand{\gol}[4]{
	\begin{tikzpicture}[baseline= #1 ex]
	\def\sc{#2};
	\def\ss{#3};
	\begin{feynman}
	\vertex (a1);
	\vertex [right=\sc*0.5 cm of a1] (a2) ;
	\vertex [blob,style={pattern= north west lines}, scale=\sc* \ss, right=-0.004*\ss*\sc cm of a2] (t2) {};
	\vertex [blob,style={pattern= north east lines}, scale=\sc* \ss, right=-0.004*\ss*\sc cm of a2] (t1) {};
	\diagram* {
		(a1) -- [insertion={[size=#4 cm]0}] (a2)};
	\end{feynman}
	\end{tikzpicture}
}
\newcommand{\chups}[4]{
	\begin{tikzpicture}[baseline= #1 ex]
	\def\sc{#2};
	\def\ss{#3};
	\begin{feynman}
	\vertex (a1);
	\vertex [right=\sc cm of a1] (a2);
	\vertex [right=\sc cm of a2] (a3);
	\vertex at ($(a2)!0.5!(a3)! \sc*0.5 cm !90:(a2)$) (z1);
	\vertex at ($(a2)!0.5!(a3)!-\sc*0.5 cm!90:(a2)$) (z2);
	\diagram* {
		(a2) -- [quarter right] (z1) -- [quarter right] (a3),
		(a2) -- [quarter left] (z2) -- [quarter left] (a3),
		a1 -- [insertion={[size=#4 cm]0}] a2};
	\end{feynman}
	\end{tikzpicture}
}
\newcommand{\och}[3]{
	\begin{tikzpicture}[baseline= #1 ex]
	\def\sc{#2};
	\def\ss{#3};
	\begin{feynman}
	\vertex (a1);
	\vertex [right=\sc cm of a1] (a2) ;
	\vertex at ($(a1)!0.5!(a2)! \sc*0.5 cm !90:(a2)$) (z1);
	\vertex at ($(a1)!0.5!(a2)!-\sc*0.5 cm!90:(a2)$) (z2);
	\vertex[right=\sc cm of a2] (a3);
	\vertex[right= \sc cm of a3] (a4);
	\vertex at ($(a3)!0.5!(a4)!\sc*0.5 cm!90:(a4)$) (z3);
	\vertex at ($(a3)!0.5!(a4)!-\sc*0.5 cm!90:(a4)$) (z4);
	\vertex [dot, scale=\sc* \ss, right=\sc cm-0.133*\ss*\sc cm of a1] (t2) {a};
	\vertex [dot, scale=\sc* \ss, right=\sc cm-0.133*\ss*\sc cm of a2] (t3) {a};
	\diagram* {
		(a1) -- [quarter left] (z1) -- [quarter left] (a2),
		(a1) -- [quarter right] (z2) -- [quarter right] (a2),
		(a3) -- [quarter left] (z3) -- [quarter left] (a4),
		(a3) -- [quarter right] (z4) -- [quarter right] (a4),
		a2--a3};
	\end{feynman}
	\end{tikzpicture}
}
\newcommand{\lemm}[3]{
	\begin{tikzpicture}[baseline= #1 ex]
	\def\sc{ #2 };
	\def \ss{#3 };
	\begin{feynman}
	\vertex (a1);
	\vertex[right=\sc cm of a1] (a2);
	\vertex [dot, scale=\sc* \ss, right=\sc cm-0.133*\ss*\sc cm of a1] (t2) {a};
	\vertex [dot, scale=\sc* \ss, left=\sc cm-0.133*\ss*\sc cm of a2] (t3) {a};
	\diagram*{
		(a1) -- [half left, in=90, out=90] (a2),(a1) -- [half right, in=-90, out=-90] (a2), (a1)--(a2)
	};
	\end{feynman}
	\end{tikzpicture}
}
\newcommand{\pov}[3]{
	\begin{tikzpicture}[baseline= #1 ex]
	\def\sc{#2};
	\def\ss{#3};
	\begin{feynman}
	\vertex (a1);
	\vertex [right=\sc*0.5 cm of a1] (a2) ;
	\vertex [right=\ss*0.75*\sc cm of a2] (a3) ;
	\vertex [right=\sc*0.5 cm of a3] (a4) ;
	\vertex [blob, style={pattern= north west lines}, scale=\sc* \ss, right=-0.004*\ss*\sc cm of a2] (t2) {};
	\vertex [blob, style={pattern= north east lines}, scale=\sc* \ss, right=-0.004*\ss*\sc cm of a2] (t1) {};
	\diagram* {
		(a1) -- (a2), (a3) -- (a4)};
	\end{feynman}
	\end{tikzpicture}
}
 \newcommand{\mg}[3]{
	\begin{tikzpicture}[baseline= #1 ex]
	\def\sc{#2};
	\def\ss{#3};
	\begin{feynman}
	\vertex (a1);
	\vertex [blob,style={pattern= north west lines}, scale=\sc* \ss, right=-0.00*\ss*\sc cm of a1] (t2) {};
	\diagram* {
		(a1) };
	\end{feynman}
	\end{tikzpicture}
}
\newcommand{\gl}[4]{
	\begin{tikzpicture}[baseline= #1 ex]
	\def\sc{#2};
	\def\ss{#3};
	\begin{feynman}
	\vertex (a1);
	\vertex [right=\sc*0.5 cm of a1] (a2) ;
	\vertex [blob,style={pattern= north west lines}, scale=\sc* \ss, right=-0.004*\ss*\sc cm of a2] (t2) {};
	\diagram* {
		(a1) -- [insertion={[size=#4 cm]0}] (a2)};
	\end{feynman}
	\end{tikzpicture}
}
\begin{document}
\begin{center}
{\Large\textbf{Some Basic Tools of QFT}}
\vspace{0.5cm}
		
{\large A.~V.~Ivanov$^\dag$ and M.~A.~Russkikh$^\ddag$}
		
\vspace{0.5cm}
		
{\it St. Petersburg Department of Steklov Mathematical Institute of
Russian Academy of Sciences,}\\{\it 27 Fontanka, St. Petersburg 191023, Russia}\\
{\it Leonhard Euler International Mathematical Institute, 10 Pesochnaya nab.,}\\
{\it St. Petersburg 197022, Russia}\\
$^\dag${\it E-mail: regul1@mail.ru} ~~~~~~
$^\ddag${\it E-mail: russkix-maksim@mail.ru}	
\end{center}
	
\date{\vskip 20mm}
\vskip 10mm
\begin{abstract}
This work has a methodological nature and is a set of lecture notes for undergraduate students. It is devoted to the study of the basic tools of quantum field theory on the example of the simplest cubic "toy" model. We introduce such concepts as the functional integral, the generating functions, the background field method and the Feynman diagram technique, and also consider relations among them. The model under consideration allows us to perform all the calculations explicitly, which significantly increases the visibility and clarity of the notes. We also provide neat proofs that are usually omitted in standard educational publications.
\end{abstract}

\tableofcontents
\newpage
\section{Introduction}
Quantum field theory plays one of the key roles in modern theoretical physics, see \cite{bogoliubov-1980,peskin-1995,wienberg-1995,zee-2003,schwartz-2014}. It has a long history of development and includes many auxiliary methods, such as a functional integration \cite{daniell-1919,takhtajan-2008,ivanov-2019}, the Feynman diagram technique \cite{feynman-1948,feynman-2005}, generating functions \cite{lando-2003,zinn-2005}, the background field method \cite{abbott-1982,arefeva-1974,faddeev-2009,ivanov-kharuk-2019}, and many others \cite{vasiliev-1998}. These methods are useful and popular tools in modern science.

At the moment, there are a large number of scientific and educational publications devoted to this topic. However, as a rule, such textbooks, partially mentioned above, contain a very long and thorough introduction, and most of the methods are formulated in a general form. This way reduces the visibility and, in our opinion, does not give the reader a sufficient understanding of the subtleties. At the same time, the choice of complex models as an example does not allow to perform all the calculations explicitly.

In our work, we propose a sequential description of some basic tools of quantum field theory using a cubic "toy" model as an example. This approach allows us to make calculations explicitly, while leaving the necessary key laws and properties intact. It should be noted that such the model has already appeared in the methodological notes earlier \cite{cvitanovic-1983}. The difference of our attempt is that we consider the model in more detail and prove all the necessary diagram identities.

The work has the following structure. First, in Section \ref{pro}, we introduce a definition of a special integral on an infinite-dimensional space and consider its basic properties. In particular, we present a standard calculation method. Then we introduce the cubic model and discuss its properties, such as the perturbative decomposition and the connection with Airy functions. We also pay attention to the possibility of factorization, that leads to significant simplifications.

In Section \ref{diag}, we explain the diagram technique. At the same time, we consider it from a mathematical and physical point of view. As it is known, such approaches have some differences. We also prove several important theorems: on multiple exponential differentiation and on the relation between connected and strongly connected diagrams.

Then, in Section \ref{gen}, we give definitions of a set of generating functions, derive the equations that they satisfy, including the quantum equation of motion, and find the solutions. In Section \ref{res}, we give a diagrammatic interpretation for the equations and the generating functions, and also prove them. 

Further, in Sections \ref{back} and \ref{conc}, we explain the background field method by direct calculations and give some final comments. In particular, we present a method of transition from the simplest cubic model to an arbitrary one.

\section{Problem statement}\label{pro}
\subsection{Functional integral}
Let us introduce an infinite-dimensional space $V=\mathbb{R}^{\infty}$, that is the infinite product of copies of $\mathbb{R}$. An element $x\in V$ is the sequence $\{x_k\}_{k\geqslant1}$. Next, for two elements $x, y\in V$ we define a scalar product by the formula $(x,y)=\sum_{k\geqslant1}x_ky_k$, as well as two special maps: $V\times V\to V$, that operates according to the rule $xy=\{ x_k y_k\}_{k\geqslant1}$, and $V\times \mathbb{R}\to V$, that is defined by $x^c=\{ x_k^c\}_{k\geqslant1}$ for all $c\in\mathbb{R}$.
Then, for all $x \in V$ we define $\mathbb{X}_N\in V$, where $N \in \mathbb{N}$, as follows: $(\mathbb{X}_N)_k=x_k$, if $k \leqslant N $, and $(\mathbb{X}_N) _k=0$, if $k> N$. We also introduce the set of positive sequences
$\mathcal{B}=\{\beta\in V|\,\beta_k>0, k\in \mathbb{N}\}$.

\begin{definition}
\label{def11}
Let $ \beta\in\mathcal{B}$, and $F:V\to\mathbb{C}$ be a functional on $V$, then a functional integral of $F$ over $V$ equals to the number
	\begin{equation}
	\label{defint}
	\Phi_\beta(F\,)=\lim_{N\rightarrow \infty} 
	\Bigg(\prod\limits_{k=1}^N\frac{\beta_k}{2\pi}\Bigg)^{1/2}
	\int_{\mathbb{R}^N}F\left( \mathbb{X}_N\right)e^{-\frac{1}{2}\left(\beta\mathbb{X}_N,\mathbb{X}_N\right)}\,d\mathbb{X}_N,
	\end{equation}
	where $d\mathbb{X}_N=dx_1...dx_N$.
\end{definition}

As a rule, the latter construction has properties of the standard integral. Indeed, let $F$ and $G$ be two functionals on $V$, and $a, b \in\mathbb{C}$, then we can write out the linearity, conjugation, and monotonicity:
\begin{equation*}
\Phi_\beta(aF+bG\,)=a\Phi_\beta(F\,)+b\Phi_\beta(G\,),\,\,\,\,
\Phi_\beta(\overline{F\,})=\overline{\Phi_\beta(F\,)}\,,
\end{equation*}
\begin{equation*}
F(x)\geqslant G(x)\,\,\mbox{for all}\,\,x\in V\Rightarrow \Phi_\beta(F\,)\geqslant \Phi_\beta(G\,).
\end{equation*}

Proof of such properties for a particular class of functionals is a separate task and does not fall within the scope of this paper.

As examples, we give two types of functionals that often occur in calculations
\begin{equation}
\Phi_\beta\left(1\right)
=\lim_{N\rightarrow \infty} \prod\limits_{k=1}^N\left(\sqrt{ \frac{\beta_k}{2\pi}}\int_{\mathbb{R}}e^{-\frac{1}{2}\beta_kx_k^2}\,dx_k\right)=\lim_{N\rightarrow \infty} \prod\limits_{k=1}^N1=1,
\end{equation}
\begin{equation}
\label{examp}
\Phi_\beta\left(e^{(x,y)}\right)
=\lim_{N\rightarrow \infty} \prod\limits_{k=1}^N\left(\sqrt{ \frac{\beta_k}{2\pi}}\int_{\mathbb{R}}e^{x_k y_k}e^{-\frac{1}{2}\beta_k x_k^2}\,dx_k\right)=\lim_{N\rightarrow \infty} \prod\limits_{k=1}^N e^{\frac{y^2_k}{2\beta_k}}=e^{\frac{1}{2}(\tib y,y)},
\end{equation}
where $\tib_k=1/\beta_k$.
\begin{lem}
Find some additional examples of functionals, that can be integrated explicitly.
\end{lem}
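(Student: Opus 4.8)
The plan is to exploit the one feature that makes Definition~\ref{def11} computable, namely multiplicativity: if a functional factorizes over coordinates, $F(x)=\prod_{k\geqslant1}f_k(x_k)$ (or is a finite linear combination, or a suitable limit, of such products), then the integral in~(\ref{defint}) splits into a product of one-dimensional Gaussian integrals
\[
\sqrt{\tfrac{\beta_k}{2\pi}}\int_{\mathbb{R}}f_k(x_k)\,e^{-\frac12\beta_k x_k^2}\,dx_k .
\]
So the exercise reduces to choosing $f_k$ for which this elementary integral is explicit and for which the resulting infinite product converges; I would collect three families.

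First, all Gaussian moments: differentiating the identity~(\ref{examp}) in the components of $y$ and then setting $y=0$ produces $\Phi_\beta(x_{k_1}\cdots x_{k_m})$ for every $m$. The odd moments vanish and the even ones are given by Wick's theorem — a sum over pairings of the indices $k_1,\dots,k_m$, each pair contributing a factor $\tib$ (with a Kronecker delta when the indices coincide); for instance $\Phi_\beta(x_k^2)=\tib_k$, $\Phi_\beta(x_k^4)=3\tib_k^2$, $\Phi_\beta(x_jx_k)=\tib_j\delta_{jk}$. Each such functional depends on finitely many coordinates, so convergence is automatic. Second, a rescaling of the Gaussian weight: for $\alpha\in V$ with $\beta_k+\alpha_k>0$ for all $k$, completing the square gives
\[
\Phi_\beta\bigl(e^{-\frac12(\alpha x,x)}\bigr)=\lim_{N\to\infty}\prod_{k=1}^N\Bigl(1+\tfrac{\alpha_k}{\beta_k}\Bigr)^{-1/2},
\]
which exists and is nonzero exactly when $\sum_{k\geqslant1}\alpha_k/\beta_k$ converges; combining this with a linear shift evaluates $\Phi_\beta\bigl(e^{-\frac12(\alpha x,x)+(x,y)}\bigr)$ as well. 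Third, the complex and trigonometric analogues of~(\ref{examp}): replacing $y$ by $iy$ yields $\Phi_\beta\bigl(e^{i(x,y)}\bigr)=e^{-\frac12(\tib y,y)}$, and taking real and imaginary parts, $\Phi_\beta\bigl(\cos(x,y)\bigr)=e^{-\frac12(\tib y,y)}$ and $\Phi_\beta\bigl(\sin(x,y)\bigr)=0$, valid whenever $(\tib y,y)=\sum_{k\geqslant1}y_k^2/\beta_k<\infty$.

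The only genuine obstacle is bookkeeping the convergence: in each example one must record the admissible range of the auxiliary data ($y$, $\alpha$) relative to $\beta$ for which the product in~(\ref{defint}) has a limit, and, for the moments, justify differentiating under the $N\to\infty$ sign — routine, since the one-dimensional integrands are dominated uniformly in $N$ on compact sets of $y$. Everything else is the evaluation of elementary Gaussian integrals.
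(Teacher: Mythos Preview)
Your proposal is correct and well organized. Note, however, that in the paper this statement is an \emph{Exercise} (the environment \texttt{lem} is declared as ``Exercise'' in the preamble), and the paper gives no solution of its own to compare against; it simply leaves the task to the reader after the two worked examples $\Phi_\beta(1)=1$ and $\Phi_\beta(e^{(x,y)})=e^{\frac12(\tib y,y)}$. Your three families --- polynomial moments via Wick pairings, quadratic perturbations $e^{-\frac12(\alpha x,x)}$ with the convergence criterion $\sum_k \alpha_k/\beta_k<\infty$, and the Fourier/trigonometric variants of~(\ref{examp}) --- are exactly the kind of explicit examples the exercise is asking for, and your remarks on the admissible ranges of the parameters are the right bookkeeping. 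Nothing is missing.
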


\subsection{Method of calculation}
Now let us consider one way to work with the functional integral. The fact is that in most cases it is impossible to calculate the integral explicitly, so you have to use  an asymptotic expansion in a small parameter.
\begin{lemma}
\label{lem1}
Let $x,y\in V$, and $F(x)$ be a polynomial of a finite degree, then we have
\begin{align}
F(x)=F(\partial_y)e^{(y,x)}\Big|_{y=0}.
\end{align}
\end{lemma}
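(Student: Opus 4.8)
The plan is to reduce to the case of a single monomial by linearity, and then to compute the action of the differential operator $F(\partial_y)$ directly, exploiting the elementary fact that differentiating $e^{(y,x)}$ in a variable $y_k$ simply brings down a factor $x_k$.

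First, since both sides of the asserted identity are linear in $F$, and a polynomial of finite degree on $V$ is a finite linear combination of monomials $x^{\alpha}=\prod_{k\geqslant1}x_k^{\alpha_k}$ (with $\alpha$ a multi-index having only finitely many nonzero entries and $\sum_k\alpha_k$ bounded by $\deg F$), it suffices to prove the claim for $F(x)=x^{\alpha}$, i.e. to show
\[
\partial_y^{\alpha}\,e^{(y,x)}\Big|_{y=0}=x^{\alpha},\qquad \partial_y^{\alpha}:=\prod_{k\geqslant1}\Big(\frac{\partial}{\partial y_k}\Big)^{\alpha_k}.
\]
It is precisely the hypothesis that $F$ has finite degree which guarantees that $F(\partial_y)$ is a well-defined, finite-order differential operator, so that the left-hand side makes sense; moreover, for fixed $x$ only finitely many of the variables $y_k$ actually occur, so every manipulation below takes place in an ordinary finite-dimensional setting.

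Second, observe that for each index $k$ one has $\dfrac{\partial}{\partial y_k}e^{(y,x)}=x_k\,e^{(y,x)}$, because $(y,x)=\sum_{j\geqslant1}y_jx_j$ depends on $y_k$ linearly with coefficient $x_k$. Since the $x_j$ are constants with respect to $y$, no product-rule terms appear upon iterating, and a straightforward induction on $|\alpha|=\sum_k\alpha_k$ (using that partial derivatives in distinct variables commute, so the order of application is irrelevant) yields $\partial_y^{\alpha}e^{(y,x)}=x^{\alpha}\,e^{(y,x)}$. Setting $y=0$ and using $e^{(0,x)}=1$ gives $\partial_y^{\alpha}e^{(y,x)}\big|_{y=0}=x^{\alpha}$, which is the desired identity for monomials; summing over the monomials of $F$ then gives $F(\partial_y)e^{(y,x)}\big|_{y=0}=F(x)$.

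There is no real obstacle here; the only points meriting care are the bookkeeping with multi-indices and the remark that "polynomial of finite degree" is exactly what ensures $F(\partial_y)$ is a genuine (finite-order) differential operator, so that one never has to interpret an infinite product of derivatives.
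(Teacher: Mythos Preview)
Your proof is correct and follows essentially the same approach as the paper, which simply invokes ``exponential differentiation'' and the relation $[\partial_{y_i},y_j]=\delta_{ij}$ in a single line. You have merely spelled out in detail what that one-liner means: reduce to monomials by linearity, use $\partial_{y_k}e^{(y,x)}=x_k\,e^{(y,x)}$, iterate, and evaluate at $y=0$.
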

\begin{proof}
It is enough to use the exponential differentiation and the equality $[\partial_{x^i},x_j]=\delta_{ij}$.
\end{proof}
\begin{lemma}
\label{lem2}
Let $y\in V$, and $F$ be a functional on $V$, then we have
\begin{equation}
\label{phif}
\Phi_{\beta}(F\,)=F(\partial_{y})e^{\frac{1}{2}(\tib y,y)} \Big|_{y=0}.
\end{equation}
\end{lemma}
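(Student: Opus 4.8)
The plan is to combine Lemma \ref{lem1} with the explicit computation in \eqref{examp}. First I would apply Lemma \ref{lem1} to the integrand, rewriting $F(x)=F(\partial_y)e^{(y,x)}\big|_{y=0}$, which is legitimate since $F$ is a polynomial of finite degree. Inserting this expression for $F$ into the functional integral of Definition \ref{def11} gives $\Phi_\beta(F\,)=\Phi_\beta\big(F(\partial_y)e^{(y,x)}\big|_{y=0}\big)$, where the operator $F(\partial_y)$ and the evaluation $\big|_{y=0}$ act only on the auxiliary variable $y$, while $\Phi_\beta$ integrates over $x\in V$.

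The second step is to move $F(\partial_y)$ and the evaluation map outside of $\Phi_\beta$, i.e.\ to justify $\Phi_\beta\big(F(\partial_y)e^{(y,x)}\big|_{y=0}\big)=F(\partial_y)\,\Phi_\beta\big(e^{(y,x)}\big)\big|_{y=0}$. By linearity of $\Phi_\beta$ it suffices to do this monomial by monomial: for each multi-index $\alpha$ one has to check that $\partial_y^{\alpha}$ commutes with the finite-dimensional Gaussian integrals and with the limit $N\to\infty$ appearing in \eqref{defint}. Finally, substituting the already-known value $\Phi_\beta(e^{(x,y)})=e^{\frac{1}{2}(\tib y,y)}$ from \eqref{examp}, one obtains $\Phi_\beta(F\,)=F(\partial_y)e^{\frac{1}{2}(\tib y,y)}\big|_{y=0}$, which is \eqref{phif}.

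The main obstacle is precisely the interchange in the second step: one must ensure that differentiation with respect to $y$ commutes both with the Gaussian integration over $\mathbb{R}^N$ and with the limit $N\to\infty$ defining $\Phi_\beta$. For a genuine polynomial $F$ this is harmless, since $F$ involves only finitely many coordinates $x_k$, the operator $F(\partial_y)$ is of finite order with constant coefficients, and the Gaussian weight $e^{-\frac12(\beta\mathbb{X}_N,\mathbb{X}_N)}$ decays rapidly enough that differentiating under the integral sign is justified by dominated convergence for each fixed derivative order; the evaluation at $y=0$ is then continuous. For a general functional $F$ the identity \eqref{phif} should be understood order by order in the perturbative expansion, consistently with the way such expressions are treated throughout the rest of the paper.
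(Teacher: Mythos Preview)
Your proposal is correct and follows essentially the same route as the paper: apply Lemma~\ref{lem1} to replace $F(x)$ by $F(\partial_y)e^{(y,x)}\big|_{y=0}$, pull the differential operator and the evaluation outside the Gaussian integration, and then invoke the computed value \eqref{examp}. The paper carries this out explicitly at the finite-$N$ level before passing to the limit, whereas you phrase it in terms of commuting $F(\partial_y)$ with $\Phi_\beta$; these amount to the same argument, and your added remarks on justifying the interchange and on the polynomial-versus-general-$F$ distinction are more careful than what the paper itself spells out.
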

\begin{proof} 
Let us use Definition \ref{def11}, Lemma \ref{lem1}, and the result from example (\ref{examp}). Then we have the following chain of equalities
\begin{align*}
\Phi_{\be}(F\,)&
=\lim_{N\rightarrow \infty} \Bigg(\prod\limits_{k=1}^N \frac{\be_k}{2\pi}\Bigg)^{\frac{1}{2}}
\int_{\mathbb{R}^N}F(\partial_{\mathbb{Y}_N})
e^{(\mathbb{Y}_N,\mathbb{X}_N)}\Big|_{\mathbb{Y}_N=0}
e^{-\frac{1}{2}(\beta\mathbb{X}_N,\mathbb{X}_N)}\,d\mathbb{X}_N\\
&=\lim_{N\rightarrow \infty}F(\pur{\mb{Y}_N}) \prod\limits_{k=1}^N e^{\ff{y_k^2}{2 \be_k}}\Big|_{y_k=0}=F(\partial_{y})e^{\frac{1}{2}(\tib y,y)}\Big|_{y=0},
\end{align*}
from which the statement follows.
\end{proof}
It is worth noting that in physical applications the functional $F$ usually contains a small parameter. In view of this, we can use the Taylor expansion in powers of this small parameter. Moreover, the coefficients of the expansion are commonly polynomials of a finite degree. Thus, applying Lemma \ref{lem2} to each coefficient, we obtain the asymptotic series over the small parameter.
\begin{lem}
Apply Lemma \ref{lem2} to the functional $\exp(\alpha,y)$, where $\alpha\in\mathcal{B}$. Calculate it explicitly and find such $\alpha$, for which the answer is finite.
\end{lem}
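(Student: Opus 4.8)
The plan is to apply Lemma~\ref{lem2} with $F$ the exponential functional $F(z)=e^{(\alpha,z)}$ and to recognise the resulting operator $F(\partial_z)=e^{(\alpha,\partial_z)}$ as a translation. Since the coordinate derivatives commute, $e^{(\alpha,\partial_z)}=\prod_{k\geqslant1}e^{\alpha_k\partial_{z_k}}$, and each factor acts on a smooth function of $z_k$ by the shift $z_k\mapsto z_k+\alpha_k$ (the one–variable Taylor identity $e^{c\,\partial_t}g(t)=g(t+c)$). Hence $e^{(\alpha,\partial_z)}g(z)=g(z+\alpha)$, and taking $g(z)=e^{\frac12(\tib z,z)}$ and setting $z=0$ afterwards gives
\begin{equation*}
\Phi_\beta\bigl(e^{(\alpha,\cdot)}\bigr)=e^{\frac12(\tib(z+\alpha),\,z+\alpha)}\Big|_{z=0}=e^{\frac12(\tib\alpha,\alpha)}=\exp\Bigl(\tfrac12\sum_{k\geqslant1}\tfrac{\alpha_k^2}{\beta_k}\Bigr).
\end{equation*}
A quick sanity check: putting $y=\alpha$ in example~(\ref{examp}) produces the same number, so the two routes agree.

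To make this rigorous — $e^{(\alpha,z)}$ is not a polynomial, whereas Lemma~\ref{lem2} was obtained through the polynomial Lemma~\ref{lem1} — I would work at level $N$ first: in Definition~\ref{def11} the integrand $F(\mathbb{X}_N)=e^{(\alpha,\mathbb{X}_N)}$ is handled by the finite product of one–variable Gaussian integrals exactly as in example~(\ref{examp}), producing $\prod_{k=1}^N e^{\alpha_k^2/(2\beta_k)}$, and only then pass to $N\to\infty$. Equivalently, one may expand $e^{(\alpha,z)}=\sum_{m\geqslant0}\frac1{m!}(\alpha,z)^m$, apply Lemma~\ref{lem2} term by term to these finite-degree polynomials, and resum. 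Reading off the closed form, the answer is finite precisely when the series $\sum_{k\geqslant1}\tib_k\alpha_k^2=\sum_{k\geqslant1}\alpha_k^2/\beta_k$ converges; since $\alpha\in\mathcal{B}$ only forces $\alpha_k>0$, this is a genuine constraint — it holds, for instance, for $\alpha_k=q^k$ with $0<q<1$ and $\beta_k\equiv1$, giving $\Phi_\beta(e^{(\alpha,\cdot)})=\exp\bigl(q^2/(2(1-q^2))\bigr)$.

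The substantive point, and the only place care is needed, is the passage from the finite-$N$ expression to the infinite product: one is interchanging the limit of Definition~\ref{def11} with an infinite differentiation, and this interchange is legitimate exactly on the set of $\alpha$ for which $\sum_k\alpha_k^2/\beta_k<\infty$. The finite-$N$ computation above makes this transparent, because there every manipulation is finite-dimensional and the remaining limit is merely that of an explicit product of positive reals; the requirement to exhibit such $\alpha$ is then just the convergence criterion for that product.
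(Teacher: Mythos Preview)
Your solution is correct. Note, however, that in this paper the environment \texttt{lem} is defined as an \emph{Exercise}, not a Lemma, and the paper does not supply its own solution; there is therefore nothing to compare against beyond consistency with the surrounding text. Your computation $\Phi_\beta\bigl(e^{(\alpha,\cdot)}\bigr)=\exp\bigl(\tfrac12\sum_{k\geqslant1}\alpha_k^2/\beta_k\bigr)$ agrees with example~(\ref{examp}) (set $y=\alpha$ there), and your finiteness criterion $\sum_{k\geqslant1}\alpha_k^2/\beta_k<\infty$ is the intended answer. The two justifications you offer --- the translation-operator identity $e^{(\alpha,\partial_z)}g(z)=g(z+\alpha)$ applied to the Gaussian, and the finite-$N$ factorisation followed by the limit --- are both sound, with the second being closer in spirit to how the paper establishes (\ref{examp}) and Lemma~\ref{lem2} themselves.
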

\begin{lem}
\label{pfv}
Show that the formula $\Phi_{\beta}(F(\,\cdot\,))=\Phi_{\mathbf{1}}(F(\beta^{-\frac{1}{2}}\,\cdot\,))$ holds, where $\mathbf{1}=\{1\}_{k\geqslant1}\in\mathcal{B}$.
\end{lem}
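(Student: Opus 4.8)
The plan is to reduce the identity to the finite-dimensional Gaussian integrals of Definition~\ref{def11} and perform a componentwise rescaling. Fix $N\in\mathbb{N}$ and consider the $N$-th approximant in the definition of $\Phi_{\beta}(F\,)$, namely
\[
\Bigg(\prod_{k=1}^N\frac{\beta_k}{2\pi}\Bigg)^{1/2}\int_{\mathbb{R}^N}F(\mathbb{X}_N)\,e^{-\frac{1}{2}(\beta\mathbb{X}_N,\mathbb{X}_N)}\,d\mathbb{X}_N .
\]
In this integral I would substitute $x_k=\beta_k^{-1/2}u_k$ for $k=1,\dots,N$; since $\beta_k>0$ this is a bijective change of variables on $\mathbb{R}^N$ with $dx_1\cdots dx_N=\bigl(\prod_{k=1}^N\beta_k^{-1/2}\bigr)\,du_1\cdots du_N$.

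Then the three factors transform as follows. The Jacobian $\prod_{k=1}^N\beta_k^{-1/2}$ cancels the corresponding part of the prefactor, leaving $\bigl(\prod_{k=1}^N\tfrac{1}{2\pi}\bigr)^{1/2}$. The quadratic form becomes $(\beta\mathbb{X}_N,\mathbb{X}_N)=\sum_{k=1}^N\beta_kx_k^2=\sum_{k=1}^N u_k^2=(\mathbb{U}_N,\mathbb{U}_N)$, where $\mathbb{U}_N\in V$ is the truncated sequence with entries $u_1,\dots,u_N$ and zeros afterwards. Finally --- and this is the only point requiring care --- the truncated argument $\mathbb{X}_N$ has entries $\beta_k^{-1/2}u_k$ for $k\leqslant N$ and $0$ for $k>N$, and since multiplying the vanishing tail of $\mathbb{U}_N$ by $\beta_k^{-1/2}$ leaves it zero, one has $\mathbb{X}_N=\beta^{-1/2}\mathbb{U}_N$ exactly (using the maps $x^c$ with $c=-1/2$ and $xy$ introduced above). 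Hence $F(\mathbb{X}_N)=F(\beta^{-1/2}\mathbb{U}_N)=\bigl(F(\beta^{-1/2}\,\cdot\,)\bigr)(\mathbb{U}_N)$, and the $N$-th approximant equals $\bigl(\prod_{k=1}^N\tfrac{1}{2\pi}\bigr)^{1/2}\int_{\mathbb{R}^N}\bigl(F(\beta^{-1/2}\,\cdot\,)\bigr)(\mathbb{U}_N)\,e^{-\frac12(\mathbb{U}_N,\mathbb{U}_N)}\,d\mathbb{U}_N$, which is precisely the $N$-th approximant of $\Phi_{\mathbf{1}}(F(\beta^{-1/2}\,\cdot\,))$. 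Since the two sequences of partial approximants agree term by term, their limits agree, which is the claim. (As elsewhere in the paper, $F$ is tacitly assumed to be such that the functional integrals are defined; no extra hypothesis is needed, since the equality is exact at every finite $N$.)

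An alternative, quicker route is available when $F$ is a polynomial (or a formal series with polynomial coefficients): apply Lemma~\ref{lem2} to the right-hand side, $\Phi_{\mathbf{1}}(F(\beta^{-1/2}\,\cdot\,))=\bigl(F(\beta^{-1/2}\,\cdot\,)\bigr)(\partial_y)\,e^{\frac12(y,y)}\big|_{y=0}$, and change differentiation variables by $z_k=\beta_k^{1/2}y_k$, so that $\partial_{y_k}=\beta_k^{1/2}\partial_{z_k}$. Expanding $F$ in monomials shows $\bigl(F(\beta^{-1/2}\,\cdot\,)\bigr)(\partial_y)=F(\partial_z)$, the exponent becomes $\tfrac12(y,y)=\tfrac12(\tib z,z)$, and the restriction $y=0$ becomes $z=0$, so the right-hand side equals $F(\partial_z)e^{\frac12(\tib z,z)}\big|_{z=0}=\Phi_{\beta}(F\,)$ by Lemma~\ref{lem2} again. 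I expect no substantial obstacle in either proof; the only thing to watch is the bookkeeping between the truncation $\mathbb{X}_N$ and the scaling map, i.e.\ verifying that the rescaled truncated vector is again of the form demanded by Definition~\ref{def11}.
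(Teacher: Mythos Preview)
Your proof is correct. Note that in the paper this statement is an \emph{exercise} (the environment \texttt{lem} is declared as ``Exercise''), so there is no proof in the paper to compare against; your first argument --- the componentwise change of variables $x_k=\beta_k^{-1/2}u_k$ in each finite-dimensional approximant of Definition~\ref{def11} --- is exactly the intended elementary solution, and your bookkeeping on the truncation $\mathbb{X}_N=\beta^{-1/2}\mathbb{U}_N$ is the one subtle point, handled correctly. The alternative route via Lemma~\ref{lem2} is also fine and in fact recovers Corollary~\ref{cor} directly.
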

\begin{corollary}
\label{cor}
From Lemma \ref{lem2} and Exercise \ref{pfv} it follows that
\begin{equation}
F(\partial_{y})e^{\frac{1}{2}(\tib y,y)} \big|_{y=0}=
F(\beta^{-\frac{1}{2}}\partial_{y})e^{\frac{1}{2}(y,y)} \big|_{y=0}.
\end{equation}
\end{corollary}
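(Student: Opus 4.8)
The plan is to simply chain together Lemma \ref{lem2} and Exercise \ref{pfv}, applying the former twice with two different choices of the sequence. First I would observe that, by Lemma \ref{lem2} applied verbatim, the left-hand side is nothing but the functional integral $\Phi_{\beta}(F\,)$.

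Next I would invoke Exercise \ref{pfv}, which rewrites this same number as $\Phi_{\mathbf 1}\big(F(\beta^{-1/2}\,\cdot\,)\big)$, i.e.\ as the functional integral over the unit sequence $\mathbf 1=\{1\}_{k\geqslant 1}$ of the new functional $G:=F(\beta^{-1/2}\,\cdot\,)$, which sends $x\mapsto F(\beta^{-1/2}x)$ (here $\beta^{-1/2}$ denotes the sequence $\{\beta_k^{-1/2}\}_{k\geqslant 1}$ and the product is the componentwise map $V\times V\to V$ introduced in Section \ref{pro}). Then I would apply Lemma \ref{lem2} a second time, now with $\beta$ replaced by $\mathbf 1$, so that $\tib=\mathbf 1$ and $(\tib y,y)=(y,y)$, and with $F$ replaced by $G$, obtaining
\begin{equation*}
\Phi_{\mathbf 1}(G)=G(\partial_y)e^{\frac12(y,y)}\big|_{y=0}.
\end{equation*}

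It then remains only to observe that $G(\partial_y)=F(\beta^{-\frac12}\partial_y)$: substituting the mutually commuting operators $\partial_{y_k}$ for the variables $x_k$ in $F$ commutes with the componentwise linear rescaling $x_k\mapsto\beta_k^{-1/2}x_k$, since there is no operator-ordering ambiguity among the $\beta_k^{-1/2}\partial_{y_k}$. Concatenating the three equalities
\begin{equation*}
F(\partial_y)e^{\frac12(\tib y,y)}\big|_{y=0}=\Phi_\beta(F\,)=\Phi_{\mathbf 1}\big(F(\beta^{-1/2}\,\cdot\,)\big)=F(\beta^{-\frac12}\partial_y)e^{\frac12(y,y)}\big|_{y=0}
\end{equation*}
yields the claim.

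The derivation is entirely formal, so there is no genuine obstacle; the only points deserving care are the bookkeeping around the meaning of $F(\beta^{-\frac12}\partial_y)$ and the verification that it coincides with $G(\partial_y)$, together with checking (as already required by the hypotheses of Lemmas \ref{lem1} and \ref{lem2}) that $G$ still lies in the admissible class — e.g.\ that $G$ is again a finite-degree polynomial whenever $F$ is, or more generally that the termwise Taylor argument behind Lemma \ref{lem2} applies to $G$.
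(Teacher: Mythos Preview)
Your proposal is correct and follows exactly the approach the paper indicates: the corollary is stated without proof precisely because it is the immediate concatenation of Lemma \ref{lem2}, Exercise \ref{pfv}, and Lemma \ref{lem2} again with $\beta=\mathbf{1}$, which is just what you wrote out.
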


\subsection{Cubic functional}
Let us move on to a more meaningful example and consider the following type of functionals: $F_\sigma(x)=\exp\big\{i\sum_{k\geqslant1}x_k^3\sigma_k^{\phantom{3}}\big\}$, where $\sigma \in V$.
In this case, explicit factorization occurs, and using Definition \ref{def11}, we can write the functional integral as
\begin{equation}
\label{cub}
\Phi_\beta(F_{\sigma})=
\lim_{N\rightarrow \infty} \prod\limits_{k=1}^N\Bigg(\sqrt{ \frac{\beta_k}{2\pi}}
\int_{\mathbb{R}}e^{i x_k^3 \sigma_k}e^{-\frac{1}{2}x_k^2\beta_k}\,dx_k\Bigg).
\end{equation}

It is convenient to make the change $x_k\to x_k\sqrt {\beta_k}$ for all $k\in\mathbb{N}$, and then move from $\sigma$ to $\alpha=\{\sigma_k^{\phantom{3}}/\beta^{3/2}_k\}_{k\geqslant1}$. Then the formula (\ref{cub}) can be rewritten as
\begin{equation}
\label{intI}
\Phi_\beta(F_{\sigma})=
\Phi_{\mathbf{1}}(F_{\alpha})=
\prod_{k\geqslant 1}I(\alpha_k),\,\,\,\,\mbox{where}\,\,\,\,
I(\alpha_k)=\frac{1}{\sqrt{2\pi}}\int_{\mathbb{R}}e^{-\frac{1}{2}x^2+i\alpha_k x^3}\,dx.
\end{equation}

Let us note the obvious properties of the obtained functions:
\begin{itemize}
	\item $I(0)=1$;
	\item $|\exp(i\alpha_kx^3)|\leqslant1$, therefore $|I(\alpha_k)|\leqslant1$ and $|\Phi_\beta(F_{\sigma})|\leqslant1$;
	\item $I(\alpha_k)=I(-\alpha_k)$;
	\item $I(\alpha_k)=\mathrm{Re}{(I(\alpha_k))}>0$;
	\item $I(\alpha_k)$ decreases monotonously on $\mathbb{R}_+$ from $1$ to $0$.
\end{itemize}
\begin{lemma}
\label{lem3}
Let $I (\alpha_k)$ be the function defined above, then it satisfies the differential equation
\begin{equation}
\left[27 \alpha_k^3\ppp{}{\alpha_k}+(1+81\alpha_k^2)\pp{}{\alpha_k}+15\alpha_k\right]I(\alpha_k)=0.
\end{equation}
\end{lemma}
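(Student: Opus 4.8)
The plan is to prove the identity by differentiating under the integral sign and then recognizing the resulting integrand as an exact $x$-derivative. Throughout, drop the index and write $\alpha$ for $\alpha_k$, and set $\phi(x)=e^{-\frac12 x^2+i\alpha x^3}$, so that $I(\alpha)=\frac{1}{\sqrt{2\pi}}\int_{\mathbb R}\phi(x)\,dx$.

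First I would justify differentiation under the integral: since $|\phi(x)|\le e^{-x^2/2}$ and every $x^n e^{-x^2/2}$ is integrable on $\mathbb R$, dominated convergence applies, and because $\partial_\alpha\phi=ix^3\phi$ one gets $\partial_\alpha^m I(\alpha)=\frac{1}{\sqrt{2\pi}}\int_{\mathbb R}(ix^3)^m\phi(x)\,dx$ for all $m\ge 0$. Plugging in $m=0,1,2$ shows that the left-hand side of the asserted equation equals
\[
\frac{1}{\sqrt{2\pi}}\int_{\mathbb R}\Big(-27\alpha^3 x^6+i(1+81\alpha^2)x^3+15\alpha\Big)\phi(x)\,dx ,
\]
so the whole statement reduces to showing that this integral vanishes.

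Next I would exploit that $\partial_x\phi=(3i\alpha x^2-x)\phi$, hence for any polynomial $g$ one has $\partial_x\big(g(x)\phi(x)\big)=\big(g'(x)+(3i\alpha x^2-x)g(x)\big)\phi(x)$, and $\int_{\mathbb R}\partial_x(g\phi)\,dx=0$ by the Gaussian decay of $g\phi$. So it suffices to find a polynomial $g$ with $\alpha$-dependent coefficients solving $g'(x)+(3i\alpha x^2-x)g(x)=-27\alpha^3 x^6+i(1+81\alpha^2)x^3+15\alpha$. Matching the top power $x^6$ forces $\deg g=4$; writing $g=ax^4+bx^3+dx^2+ex+f$ and comparing coefficients gives a triangular linear system (from $x^5,x^4$: $a=3i\alpha b$, $b=3i\alpha d$; from $x^2,x^1$: $e=15i\alpha d$, $f=2d$), after which the equations coming from $x^6$, $x^3$, $x^0$ reduce, upon the normalization $d=1/i$, to exactly the three coefficients $-27\alpha^3$, $i(1+81\alpha^2)$, $15\alpha$. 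Hence such $g$ exists, the integral vanishes, and the identity follows.

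I do not expect a serious obstacle: the only points needing care are the justification of differentiating under the integral and of discarding the boundary terms (both immediate from $|\phi|\le e^{-x^2/2}$), and the consistency of the linear system for the coefficients of $g$, which is automatic since the system is triangular. An alternative route, avoiding the guess for $g$, is to set $M_n=\int_{\mathbb R}x^n\phi\,dx$, derive the recursion $nM_{n-1}-M_{n+1}+3i\alpha M_{n+2}=0$ from $\int_{\mathbb R}\partial_x(x^n\phi)\,dx=0$, and iterate it from $n=0$ up to $n=4$ to express $M_6$ through $M_0$ and $M_3$; since $M_0=\sqrt{2\pi}\,I$, $M_3=(\sqrt{2\pi}/i)\,I'$ and $M_6=-\sqrt{2\pi}\,I''$, this produces the same equation.
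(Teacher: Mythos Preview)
Your argument is correct. The underlying mechanism is the same as the paper's---integration by parts against the Gaussian--cubic weight---but the packaging differs: the paper introduces the moments $I_n=\int x^n\phi\,dx$, derives the recursion $I_n=(n-1)I_{n-2}+3i\alpha I_{n+1}$, and iterates it starting from $I_3$ to land on a relation among $I_0,I_3,I_6$, which are then identified with $I$, $-iI'$, $-I''$. Your primary route bypasses the recursion by exhibiting in one stroke a polynomial $g$ for which the entire integrand $(-27\alpha^3x^6+i(1+81\alpha^2)x^3+15\alpha)\phi$ equals $\partial_x(g\phi)$; this is slightly slicker here, though the moment recursion generalises more mechanically. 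Your ``alternative route'' via $M_n$ and the relation $nM_{n-1}-M_{n+1}+3i\alpha M_{n+2}=0$ is exactly the paper's proof (up to the harmless shift $n\mapsto n+1$ in the index).
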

\begin{proof}
First of all we introduce an auxiliary function of the form
\begin{equation*}
I_n=\int_{\mathbb{R}}x^ne^{-\frac{x^2}{2}+i\alpha_k x^3}dx.
\end{equation*}

Using integration by parts, we make sure that $I$ satisfies the relation
$I_n=(n-1)I_{n-2}+3 i \alpha_k I_{n+1}$.
Let us apply this formula twice to $I_3$
\begin{equation*}
I_3=2I_1+3i\alpha_k^{\phantom{2}} I_4=15i\alpha_k^{\phantom{2}} I_2-9\alpha_k^2 I_5=
15 i \alpha_k^{\phantom{2}} I_0-81 \alpha_k^2 I_3-i 27 \alpha_k^3 I_6,
\end{equation*}
and then use the relations $I_0=I$, $I_3=-i\partial_{\alpha_k}^{\phantom{2}}I$, and $I_6=-\partial^2_{\alpha_k}I$, from which the statement of the lemma follows.
\end{proof}

\begin{lemma}
The function $I(\alpha_k)$ has the following equivalent representation
\begin{equation}
\label{Ial}
I(\alpha_k)=\frac{\sqrt{2\pi}}{(3\alpha_k)^{1/3}} e^{\frac{1}{108 \alpha_k^2}} \ma{Ai}\Bigg(\frac{1}{4\lr{3\alpha_k}^{4/3}}\Bigg),
\end{equation}
where $\ma{Ai}$ is the Airy function \cite{vallee-2010}. $I(\alpha_k)$ is drawn in Figure \ref{pic1}.
\end{lemma}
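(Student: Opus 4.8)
The plan is to reduce the defining oscillatory integral to the classical contour representation of the Airy function by a single affine change of variable. Recall one standard form,
\[
\ma{Ai}(z)=\frac{1}{2\pi}\int_{\mathbb{R}}e^{i\left(\frac{t^{3}}{3}+zt\right)}dt ,
\]
the integral being the conditionally convergent oscillatory one (equivalently, the integral along any contour asymptotic at its two ends to the rays $\arg t=\pm\pi/3$). Throughout I take $\alpha_k>0$; the value at $\alpha_k=0$ is $I(0)=1$, which matches the right-hand side in the limit, and $\alpha_k<0$ is covered by $I(\alpha_k)=I(-\alpha_k)$.

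First I would substitute $x=at+b$ into $I(\alpha_k)=\frac{1}{\sqrt{2\pi}}\int_{\mathbb{R}}e^{-x^{2}/2+i\alpha_k x^{3}}dx$ and choose the constants so as to kill the quadratic term and normalise the cubic one. Expanding $-\tfrac12(at+b)^{2}+i\alpha_k(at+b)^{3}$, the vanishing of the coefficient of $t^{2}$ forces $b=-\dfrac{i}{6\alpha_k}$, and setting the coefficient of $t^{3}$ equal to $i/3$ forces $a=(3\alpha_k)^{-1/3}$. A short computation then gives the coefficient of $t$ as $iz$ with $z=\dfrac{1}{4(3\alpha_k)^{4/3}}$ and the constant term as $\dfrac{1}{108\alpha_k^{2}}$; since $dx=a\,dt$, this yields
\[
I(\alpha_k)=\frac{e^{1/(108\alpha_k^{2})}}{\sqrt{2\pi}\,(3\alpha_k)^{1/3}}\int_{C}e^{i\left(\frac{t^{3}}{3}+zt\right)}dt ,
\]
where $C$ is the horizontal line $\mathrm{Im}\,t=h$, $h=(3\alpha_k)^{1/3}/(6\alpha_k)>0$.

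The step requiring care is the displacement of the contour $C$ onto the real axis. I would apply Cauchy's theorem to the rectangle with vertices $\pm R$, $\pm R+ih$: on each vertical side, writing $t=\pm R+is$ with $0\leqslant s\leqslant h$, one has $\mathrm{Re}\,i\bigl(\tfrac{t^{3}}{3}+zt\bigr)=-R^{2}s+\tfrac{s^{3}}{3}-zs\leqslant \tfrac{h^{3}}{3}-R^{2}s$, so the integral over each vertical side is bounded by $e^{h^{3}/3}\!\int_{0}^{h}e^{-R^{2}s}ds=O(R^{-2})$ and disappears as $R\to\infty$. Hence $\int_{C}=\int_{\mathbb{R}}=2\pi\,\ma{Ai}(z)$, and collecting the prefactors gives exactly $\dfrac{\sqrt{2\pi}}{(3\alpha_k)^{1/3}}e^{1/(108\alpha_k^{2})}\ma{Ai}\!\left(\dfrac{1}{4(3\alpha_k)^{4/3}}\right)$, which is (\ref{Ial}).

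An alternative, if one prefers to avoid contour estimates, is to use the ODE of Lemma \ref{lem3}: a direct substitution together with $\ma{Ai}''(z)=z\,\ma{Ai}(z)$ verifies that the right-hand side of (\ref{Ial}) solves that same second-order linear equation. Its solution space is two-dimensional; near $\alpha_k=0^{+}$ (where $z\to+\infty$) it is spanned by a bounded solution behaving like $\ma{Ai}(z)$ and an unbounded one behaving like $\ma{Bi}(z)$, so the bound $|I(\alpha_k)|\leqslant1$ rules out the latter component, and the overall constant is pinned down by $I(0)=1$ using $\ma{Ai}(z)\sim\frac{1}{2\sqrt{\pi}}z^{-1/4}e^{-\frac23 z^{3/2}}$ as $z\to+\infty$. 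The first argument is the shorter; its only genuinely delicate ingredient is the contour deformation described above.
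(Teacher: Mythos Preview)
Your proof is correct, and in fact you have given both of the arguments that appear in the paper. The paper's short proof here is precisely your alternative (use the second-order ODE of Lemma~\ref{lem3} together with boundary data to single out the $\ma{Ai}$ solution), while your primary argument --- the affine substitution $x=(3\alpha_k)^{-1/3}t-\tfrac{i}{6\alpha_k}$ followed by a Cauchy-theorem shift of the horizontal contour back to the real axis --- is exactly the derivation the paper carries out in Section~\ref{gen} for the more general function $Z(\beta,\alpha)$, specialised to $\beta=0$. Your contour-deformation estimate on the vertical sides is the only point that needs care, and the bound you give is valid.
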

\begin{proof}
The result follows from Lemma \ref{lem3}, as well as two boundary conditions that are obtained by explicit differentiation of (\ref{intI}). A clear derivation in a more general case we give in Section \ref{gen}.
\end{proof}
\begin{figure}[h]
	\center{\includegraphics[width=0.3\linewidth]{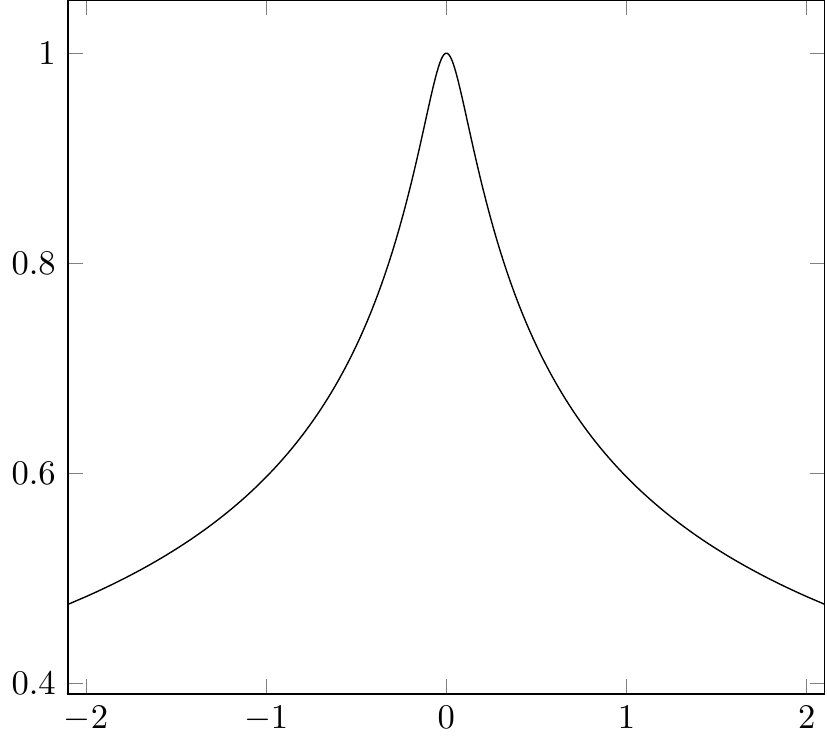}}
	\caption{The function $I(\alpha_k)$ for $\alpha_k\in[-2.1,2.1]$.}
	\label{pic1}
\end{figure}

\begin{lem}
Verify formula (\ref{Ial}) by using the equation from Lemma \ref{lem3}.
\end{lem}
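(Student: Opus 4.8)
The plan is to solve the ordinary differential equation of Lemma~\ref{lem3} in closed form and then pin down the two integration constants using elementary properties of the integral representation~(\ref{intI}). First I would bring the equation to a standard shape. Substituting $u=3\alpha_k$ turns the equation of Lemma~\ref{lem3} into $9u^{3}I''+3(1+9u^{2})I'+5uI=0$, where primes denote $d/du$. Next, a gauge transformation $I=u^{-1/3}e^{1/(12u^{2})}g$ is chosen precisely to cancel the spurious first-order term and the singular lower-order term, and finally the change of independent variable $\zeta=\tfrac14 u^{-4/3}$ — which is exactly the argument of the Airy function in~(\ref{Ial}), since $\tfrac1{108\alpha_k^{2}}=\tfrac1{12u^{2}}$ and $\tfrac1{4(3\alpha_k)^{4/3}}=\tfrac14 u^{-4/3}$ — should collapse the whole equation to Airy's equation
\begin{equation*}
\frac{d^{2}g}{d\zeta^{2}}=\zeta\,g .
\end{equation*}
Hence $g(\zeta)=c_{1}\,\ma{Ai}(\zeta)+c_{2}\,\ma{Bi}(\zeta)$, and unwinding the substitutions gives
\begin{equation*}
I(\alpha_k)=\frac{1}{(3\alpha_k)^{1/3}}\,e^{1/(108\alpha_k^{2})}\Big(c_{1}\ma{Ai}(\zeta)+c_{2}\ma{Bi}(\zeta)\Big).
\end{equation*}

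It remains to fix $c_{1}$ and $c_{2}$ using the two "boundary conditions" read off from~(\ref{intI}): the bound $|I(\alpha_k)|\leqslant1$ and the normalization $I(0)=1$. As $\alpha_k\to0^{+}$ we have $\zeta\to+\infty$; since $\ma{Bi}(\zeta)\sim\pi^{-1/2}\zeta^{-1/4}e^{\frac23\zeta^{3/2}}$ and $\tfrac23\zeta^{3/2}=\tfrac1{108\alpha_k^{2}}$, the prefactor $e^{1/(108\alpha_k^{2})}$ together with the $\ma{Bi}$-term would drive $|I(\alpha_k)|\to\infty$, contradicting $|I|\leqslant1$; this forces $c_{2}=0$. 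With $c_{2}=0$, the matching asymptotics $\ma{Ai}(\zeta)\sim\tfrac{1}{2\sqrt{\pi}}\zeta^{-1/4}e^{-\frac23\zeta^{3/2}}$ and $\zeta^{-1/4}=\sqrt{2}\,(3\alpha_k)^{1/3}$ yield $I(\alpha_k)\to c_{1}/\sqrt{2\pi}$ as $\alpha_k\to0^{+}$, whereas~(\ref{intI}) gives $I(0)=1$ directly; therefore $c_{1}=\sqrt{2\pi}$, which is exactly~(\ref{Ial}).

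The main obstacle is the first step: carrying out the gauge transformation and the change of variables cleanly, i.e. checking that every term other than the combination reproducing Airy's equation cancels — in particular that the induced $dg/d\zeta$ term and the lower-order term proportional to $g$ drop out (this is what forces the specific exponents $-1/3$ and $1/(12u^{2})$ and the power $u^{-4/3}$). One should also be mildly careful that the equation of Lemma~\ref{lem3} is singular at $\alpha_k=0$, so "$I(0)=1$" is to be understood as the limit $\alpha_k\to0^{+}$, and the selection of $\ma{Ai}$ over $\ma{Bi}$ genuinely relies on the boundedness of $I$ rather than on its value at the origin. As the statement indicates, the analogous reduction for a more general weight is performed in Section~\ref{gen}, so here it is enough to verify this special case.
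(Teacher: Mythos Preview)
Your proposal is correct and is exactly the route the exercise asks for: reduce the second-order ODE of Lemma~\ref{lem3} to Airy's equation by the substitution $u=3\alpha_k$, the gauge factor $u^{-1/3}e^{1/(12u^{2})}$, and the variable $\zeta=\tfrac14 u^{-4/3}$, then fix the two constants from properties of the integral~(\ref{intI}). The paper itself gives only a one-line sketch for the preceding lemma (``follows from Lemma~\ref{lem3}, as well as two boundary conditions that are obtained by explicit differentiation of~(\ref{intI})'') and defers the explicit work to Section~\ref{gen}, where the \emph{more general} $Z(\beta,\alpha)$ is handled by a different method --- a shift of the integration variable and Cauchy's theorem applied directly to the integral representation, rather than by solving the ODE. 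So your approach is the ODE verification the exercise intends, while the paper's Section~\ref{gen} derivation is an independent integral-representation argument; each buys something: yours shows the ODE really characterizes~(\ref{Ial}), the paper's shows the Airy form directly without any ansatz.

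One small remark on the boundary data. The paper speaks of ``two boundary conditions obtained by explicit differentiation of~(\ref{intI})'', which naively suggests $I(0)$ and $I'(0)$. You correctly observe that $\alpha_k=0$ is a singular point of the ODE, so finite jets at the origin do not separate $\ma{Ai}$ from $\ma{Bi}$; your use of the uniform bound $|I|\leqslant1$ to exclude $\ma{Bi}$ and then $I(0^{+})=1$ to fix $c_1=\sqrt{2\pi}$ is the right way to implement those ``two conditions'' here. The asymptotic matching you wrote (with $\tfrac23\zeta^{3/2}=\tfrac{1}{12u^{2}}$ and $\zeta^{-1/4}=\sqrt{2}\,u^{1/3}$) checks out.
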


\subsection{Asymptotic expansion}

\begin{lemma}
\label{lem9}
If $\Phi_{\mathbf{1}}(F_{\alpha})>0$, then $\alpha_k\to0$.
\end{lemma}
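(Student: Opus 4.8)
The plan is to exploit the product representation $\Phi_{\mathbf{1}}(F_{\alpha})=\prod_{k\geqslant1}I(\alpha_k)$ from (\ref{intI}) together with the elementary properties of $I$ listed just above the statement, in particular that every factor satisfies $0<I(\alpha_k)\leqslant1$.

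First I would introduce the partial products $P_N=\prod_{k=1}^{N}I(\alpha_k)$ with $P_0=1$, so that $\Phi_{\mathbf{1}}(F_{\alpha})=\lim_{N\to\infty}P_N$ by (\ref{intI}). Since $0<I(\alpha_k)\leqslant1$, the sequence $(P_N)$ is non-increasing and contained in $(0,1]$, hence it converges to some $P\in[0,1]$; the hypothesis is exactly that $P>0$. Writing $I(\alpha_N)=P_N/P_{N-1}$ and letting $N\to\infty$, with $P_N\to P$, $P_{N-1}\to P$ and $P\neq0$, I obtain $I(\alpha_N)\to1$.

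It then remains to deduce $\alpha_k\to0$ from $I(\alpha_k)\to1$, which I would do by contradiction. If $\alpha_k\not\to0$, there are $\delta>0$ and an index subsequence $(k_j)$ with $|\alpha_{k_j}|\geqslant\delta$ for all $j$. Using $I(\alpha)=I(-\alpha)$ and the monotone decrease of $I$ on $\mathbb{R}_+$, one has $I(\alpha_{k_j})=I(|\alpha_{k_j}|)\leqslant I(\delta)$, and $I(\delta)<1$ because $I(\delta)=\frac{1}{\sqrt{2\pi}}\int_{\mathbb{R}}e^{-x^2/2}\cos(\delta x^3)\,dx$ is an average of $\cos(\delta x^3)$ against the standard Gaussian probability density, with $\cos(\delta x^3)<1$ outside a null set. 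Hence $\limsup_j I(\alpha_{k_j})\leqslant I(\delta)<1$, contradicting $I(\alpha_k)\to1$; therefore $\alpha_k\to0$.

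The only genuinely delicate point — which I regard as the crux — is the passage from $\Phi_{\mathbf{1}}(F_{\alpha})>0$ to $I(\alpha_k)\to1$: it is precisely the fact that an infinite product with factors in $(0,1]$ is nonzero only if the factors tend to $1$, and here this is handled cheaply by the monotonicity of $(P_N)$ and the identity $I(\alpha_N)=P_N/P_{N-1}$. Everything after that is a routine consequence of the evenness, positivity, and monotonicity of $I$ recorded before the statement.
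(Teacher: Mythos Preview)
Your proof is correct and follows essentially the same strategy as the paper: both arguments reduce to the elementary fact that an infinite product of factors in $(0,1]$ is strictly positive only if the factors tend to $1$, and then invoke the listed properties of $I$ (evenness, positivity, strict monotonicity on $\mathbb{R}_+$) to conclude $\alpha_k\to0$. The only cosmetic difference is that the paper passes through the logarithm, writing $\Phi_{\mathbf{1}}(F_{\alpha})=\exp\big(\sum_{k\geqslant1}\ln I(\alpha_k)\big)$ and using the necessary condition for convergence of a series to obtain $\ln I(\alpha_k)\to0$, whereas you obtain $I(\alpha_N)\to1$ directly from the ratio $P_N/P_{N-1}$; your version is arguably tidier and your treatment of the final implication $I(\alpha_k)\to1\Rightarrow\alpha_k\to0$ is more explicit than the paper's.
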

\begin{proof}
We rewrite the functional integral as follows
\begin{equation*}
\Phi_{\mathbf{1}}(F_{\alpha})=\prod_{k\geqslant 1}I(\alpha_k)
=\ma{\exp}\Bigg(\sum_{k\geqslant 1} \ma{ln}(I(\alpha_k))\Bigg).
\end{equation*}

We need the sum in the exponential is finite. Hence, given the property $I (\alpha_k)\in (0,1]$, 
the relation $\lim_{k\to\infty}I(\alpha_k)=1$ must be satisfied. Taking into account the properties described above, it is equivalent to $\lim_{k\to \infty}\alpha_k=0$, from which the statement follows.
\end{proof}
\begin{lem}
Find how fast the sequence $\alpha_k$ should tend to zero in Lemma \ref{lem9}.
\end{lem}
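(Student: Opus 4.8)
The plan is to make Lemma~\ref{lem9} quantitative by determining the exact order of vanishing of $-\ln I(\alpha_k)$ near $\alpha_k=0$. Since $I$ is even, smooth, and $I(0)=1$, we have $I(\alpha)=1+c_2\alpha^2+O(\alpha^4)$, and the assertion to be proved is that $\prod_{k\ge1}I(\alpha_k)$ converges to a strictly positive number \emph{if and only if} $\sum_{k\ge1}\alpha_k^2<\infty$, i.e.\ iff $\alpha\in\ell^2(\mathbb N)$. In particular, knowing merely that $\alpha_k\to0$ (as in Lemma~\ref{lem9}) is not enough; square-summability is the sharp requirement, and one can offer as a convenient sufficient condition, e.g., $\alpha_k=O(k^{-1/2-\ep})$ for some $\ep>0$, or $\alpha_k=O(1/k)$. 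Translating through $\alpha_k=\sigma_k/\beta_k^{3/2}$, the condition on the original data in \eqref{cub} becomes $\sum_{k\ge1}\sigma_k^2/\beta_k^3<\infty$.

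First I would compute $c_2$. Two routes are available. (i) Expand $e^{i\alpha x^3}=1+i\alpha x^3-\tfrac12\alpha^2x^6+r(\alpha,x)$ with $|r(\alpha,x)|\le\tfrac16|\alpha|^3|x|^9$, substitute into $I(\alpha_k)=\tfrac1{\sqrt{2\pi}}\int_{\mathbb R}e^{-x^2/2+i\alpha_k x^3}\,dx$ from \eqref{intI}, drop the odd moments, and use $\tfrac1{\sqrt{2\pi}}\int_{\mathbb R}x^6e^{-x^2/2}\,dx=5!!=15$ together with the finiteness of $\tfrac1{\sqrt{2\pi}}\int_{\mathbb R}|x|^9e^{-x^2/2}\,dx$ to bound the remainder; this yields $I(\alpha)=1-\tfrac{15}{2}\alpha^2+O(\alpha^4)$ (the $\alpha^3$ term vanishes by evenness). (ii) More in the spirit of the text: insert the ansatz $I(\alpha)=1+c_2\alpha^2+O(\alpha^4)$ into the equation of Lemma~\ref{lem3}; matching the coefficient of $\alpha$ in $27\alpha^3I''+(1+81\alpha^2)I'+15\alpha I=0$ gives $2c_2+15=0$, so again $c_2=-\tfrac{15}{2}$. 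One could also expand the Airy representation \eqref{Ial} near $\alpha_k=0$ as an independent check.

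Then, taking logarithms, $\ln I(\alpha_k)=-\tfrac{15}{2}\alpha_k^2+O(\alpha_k^4)$ as $\alpha_k\to0$, so there is $\delta>0$ such that $7\alpha_k^2\le -\ln I(\alpha_k)\le 8\alpha_k^2$ whenever $|\alpha_k|\le\delta$. Because $I(\alpha_k)\in(0,1]$ makes every term of $\sum_k\ln I(\alpha_k)$ nonpositive, this series either converges or diverges to $-\infty$; by the two-sided bound and the comparison test it converges precisely when $\sum_k\alpha_k^2<\infty$ (if $\alpha_k\not\to0$ the series trivially diverges, which is consistent, since then $\sum_k\alpha_k^2=\infty$ as well). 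Exponentiating gives $\prod_kI(\alpha_k)>0\iff\alpha\in\ell^2$, and one records the equivalent formulations in terms of $\alpha$ and of $(\sigma,\beta)$.

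The only delicate step is justifying the second-order expansion of $I$ rigorously --- that is, the interchange of summation and integration, or equivalently differentiation under the integral sign --- which is exactly what the explicit remainder estimate $|r(\alpha,x)|\le\tfrac16|\alpha|^3|x|^9$ with an integrable Gaussian weight takes care of; everything after that is the elementary convergence theory of series and infinite products. Using route (ii) one sidesteps even this point, at the cost of invoking Lemma~\ref{lem3}.
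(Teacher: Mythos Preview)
Your argument is correct and complete: the sharp condition is $\alpha\in\ell^2$, and your two routes to the expansion $I(\alpha)=1-\tfrac{15}{2}\alpha^2+O(\alpha^4)$ (direct moment computation and matching coefficients in the ODE of Lemma~\ref{lem3}) both work; the comparison argument for $\sum_k\ln I(\alpha_k)$ is then standard.

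Note, however, that in this paper the environment \texttt{lem} is declared as an \emph{Exercise}, not a lemma, and the paper does not supply a solution --- it is left to the reader. So there is no ``paper's proof'' to compare against. Your write-up goes somewhat beyond what the text expects (you give the full if-and-only-if characterisation, the translation back to the $(\sigma,\beta)$ variables, and a careful justification of the remainder), which is fine; the expansion coefficient $-\tfrac{15}{2}$ is confirmed independently by Lemma~\ref{lem10} and by Exercise~\ref{d3}, so your answer is consistent with everything the paper states explicitly.
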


\begin{lemma}
\label{lem10}
The asymptotic expansion takes place
\begin{align}
\Phi_{\mathbf{1}}(F_{\alpha})=\prod_{k\geqslant 1} \sum_{n\geqslant0} \frac{(-1)^{n}}{8^{n}}\frac{(6n)!}{(2n)!(3n)!}\alpha^{2n}_k.
\end{align}
\end{lemma}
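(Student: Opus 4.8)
The plan is to reduce everything to a single one-dimensional integral and then expand it. By the factorization already established in \eqref{intI}, it suffices to prove that
\[
I(\alpha_k)=\frac{1}{\sqrt{2\pi}}\int_{\mathbb{R}}e^{-\frac{1}{2}x^2+i\alpha_k x^3}\,dx
=\sum_{n\geqslant0}\frac{(-1)^{n}}{8^{n}}\frac{(6n)!}{(2n)!(3n)!}\alpha_k^{2n},
\]
since $\Phi_{\mathbf 1}(F_\alpha)=\prod_{k\geqslant1}I(\alpha_k)$. The asymptotic (rather than convergent) nature of the series is expected, because the cubic term spoils absolute integrability of the formal term-by-term integrand; so the claim is understood as an equality of asymptotic expansions as $\alpha_k\to0$, which is exactly the regime singled out in Lemma \ref{lem9}.

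First I would expand $e^{i\alpha_k x^3}=\sum_{m\geqslant0}\frac{(i\alpha_k)^m}{m!}x^{3m}$ and integrate formally against the Gaussian weight. Only even powers of $x$ survive, i.e. only even $m=2n$, and the factor $i^{2n}=(-1)^n$ appears, which already explains why the series runs over $\alpha_k^{2n}$ with alternating sign and why $I(\alpha_k)=I(-\alpha_k)$ is consistent. Using the standard Gaussian moment
\[
\frac{1}{\sqrt{2\pi}}\int_{\mathbb{R}}x^{6n}e^{-\frac{1}{2}x^2}\,dx=(6n-1)!!=\frac{(6n)!}{2^{3n}(3n)!},
\]
the coefficient of $\alpha_k^{2n}$ becomes
\[
\frac{(-1)^n}{(2n)!}\cdot\frac{(6n)!}{2^{3n}(3n)!}=\frac{(-1)^n}{8^{n}}\frac{(6n)!}{(2n)!(3n)!},
\]
which is precisely the asserted coefficient. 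Then taking the product over $k\geqslant1$ gives the stated formula for $\Phi_{\mathbf 1}(F_\alpha)$.

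To make this rigorous rather than merely formal, I would justify the asymptotic expansion by the usual truncation argument: write $e^{i\alpha_k x^3}=\sum_{m=0}^{2M-1}\frac{(i\alpha_k x^3)^m}{m!}+R_M(\alpha_k,x)$ with $|R_M(\alpha_k,x)|\leqslant\frac{|\alpha_k|^{2M}|x|^{6M}}{(2M)!}$, insert this into the integral, integrate the polynomial part exactly as above, and bound the remainder by $C_M|\alpha_k|^{2M}$ using the finite Gaussian moment $\frac{1}{\sqrt{2\pi}}\int x^{6M}e^{-x^2/2}\,dx<\infty$. This shows $I(\alpha_k)-\sum_{n=0}^{M-1}c_n\alpha_k^{2n}=O(\alpha_k^{2M})$ for every $M$, which is the meaning of the asymptotic series. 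One can cross-check the first two coefficients ($c_0=1$, $c_1=-\tfrac{15}{2}$) against the differential equation of Lemma \ref{lem3} or against the Airy representation \eqref{Ial}.

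The main obstacle is not the combinatorial identity for the coefficients (which is just collecting the double factorial) but the interchange of summation and integration: the series $\sum_m \frac{(i\alpha_k x^3)^m}{m!}$ does not converge in any mode that makes naive term-by-term integration against $e^{-x^2/2}$ legitimate for fixed $\alpha_k\neq0$, so the argument must be phrased with finite truncations plus a controlled remainder as above, yielding an asymptotic (not convergent) identity. A secondary point to handle carefully is the passage to the infinite product: one should note that for the left-hand side to be a nonzero real number one needs $\alpha_k\to0$ fast enough (cf. Lemma \ref{lem9} and the subsequent exercise), so the product $\prod_k I(\alpha_k)$ and the product of the series are to be compared order by order in the small parameters, consistently with the perturbative viewpoint emphasized after Lemma \ref{lem2}.
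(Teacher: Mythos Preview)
Your proof is correct and follows essentially the same route as the paper: reduce to the one-dimensional factor via \eqref{intI}, expand the cubic exponential, and evaluate the Gaussian moments to extract the coefficient $\frac{(-1)^n}{8^n}\frac{(6n)!}{(2n)!(3n)!}$. The only cosmetic difference is that the paper phrases the computation through Lemma~\ref{lem2}, writing $I(\alpha)=e^{i\alpha\partial_y^3}e^{y^2/2}\big|_{y=0}$ and reading off the nonvanishing term $3k=2m=6n$, whereas you compute the Gaussian moments $\frac{1}{\sqrt{2\pi}}\int x^{6n}e^{-x^2/2}\,dx=(6n-1)!!$ directly; these are the same numbers. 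Your truncation-plus-remainder argument giving $I(\alpha_k)-\sum_{n<M}c_n\alpha_k^{2n}=O(\alpha_k^{2M})$ is a welcome addition that the paper does not supply.
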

\begin{proof}
Let us use Lemma \ref{lem2} to calculate $\Phi_{\mathbf{1}}(F_{\alpha})$. In view of the factorization of the problem (\ref{intI}), we apply Lemma \ref{lem2} in the one-dimensional case
\begin{equation*}
e^{i \alpha \partial_{y}^{3}} e^{\frac{1}{2}y^2}\bigg|_{y=0}= \sum_{k\geqslant 0}\sum_{m\geqslant 0}\frac{(i \alpha)^k }{k!}\frac{1}{2^m}\frac{1}{m!} \partial_{y}^{3k} y^{2m}\bigg|_{y=0}=
\sum_{n\geqslant 0}\frac{(-1)^{n}}{8^{n}}\frac{(6n)!}{(2n)!(3n)!}\alpha^{2n},
\end{equation*}
where in the second equality we have used $3k=2m=6n$.
\end{proof}

\section{Diagrams}\label{diag}
\subsection{Mathematical point of view}
As it was noted above, see formula (\ref{phif}), functional integral can be evoluated by using the differentiating of the exponential $\exp\big[(\beta y,y)/2\big]$, where $y\in V$ and $\beta\in\mathcal{B}$. Let us draw the attention on the quadratic form $(\beta y,y)$, the kernel of which is $\beta_i\delta_{ij}$. Now we are going to formulate diagram technique rules, that give a way to visualize the calculations.
\begin{definition}
\label{def}
The kernel $\beta_i\delta_{ij}$ is denoted by a line with three indices 
$\feynmandiagram [baseline=-0.5ex, small, horizontal=a to b]
{a [particle=$i$]-- [ edge label=$\beta$] b [particle=$j$]],};$.
By black dot 
$\,\feynmandiagram [baseline=-0.7ex, small, layered layout, horizontal=a to b] {
	c[dot] ],};$
we define a summation over the set $\mathbb{N}$.
The summation of one side of the kernel with an element from $x\in V$ is denoted by replacement of the integer index by $x$. If $\beta=\mathbf{1}$, then the third index may be omitted.
\end{definition}
Let us give some examples of using the diagram technique rules: if $\alpha,\beta\in\mathcal{B}$ and $x\in V$, then we have

\begin{equation*}
\sum_{j\geqslant 1}\beta_i \delta_{ij}x_j
=\feynmandiagram [baseline=-0.5ex, small, horizontal=a to b] {
	a [particle=$i$]-- [ edge label=$\be$] b [particle=$x$]  ],};,\,\,\,\,\,\,\,\,\,\,
(\beta x,x)=\sum_{i\geqslant 1}x_i\beta_ix_i=\sum_{i,j\geqslant 1}x_i\beta_i \delta_{ij} x_j
=\feynmandiagram [baseline=-0.5ex, small, horizontal=a to b]
{a [particle=$x$]-- [ edge label=$\be$] b [particle=$x$]  ],};,
\end{equation*}
\begin{equation*}
\sum_{k\geqslant 1}\fel{i}{\alpha}{k} \fel{k}{\beta}{j}=\sum_{k\geqslant 1} \alpha_i\delta_{ik} \beta_k\delta_{kj}=
\feynmandiagram [baseline=-0.5ex, small, layered layout, horizontal=a to b] {
	a [particle=$i$]--[ edge label=$\alpha$] c[dot]-- [ edge label=$\beta$] b [particle=$j$]  ],};.
\end{equation*}

Such diagrams have some remarkable properties, that play an important role in calculations. Among them, we note the symmetry, chain rule, and transformation after a differentiation:
\begin{equation}
\label{prop1}
\fel{i}{\beta}{x}=\fel{x}{\beta}{i},\,\,\,\,\,\,\,\,\,\,
\fed{i}{\al}{\be}{j}=\fel{i}{\alpha\beta}{j},
\end{equation}
\begin{equation}
\label{ddx}
\hh{}{x_i} \fel{x}{\beta}{x}=2 \fel{i}{\beta}{x},\,\,\,\,\,\,\,\,\,\,
\hh{}{x_j}\fel{i}{\beta}{x}=\fel{i}{\beta}{j}.
\end{equation}

The latter relations can be obtained by using Definition \ref{def} and examples given above. A natural question arises: why do we introduce a diagram technique in the case of the $\delta$-kernel? The fact is that in most physical models such relations appear, but they have more complicated kernels. Therefore, the diagrams introduced above are general in nature.

Another important property is called star-triangle relation and has the following view: if $\alpha,\beta,\gamma\in\mathcal{B}$ and $i,j,k\in\mathbb{N}$, then we have the equality depicted in Figure \ref{star}.

\begin{figure}[h]
\center{\includegraphics{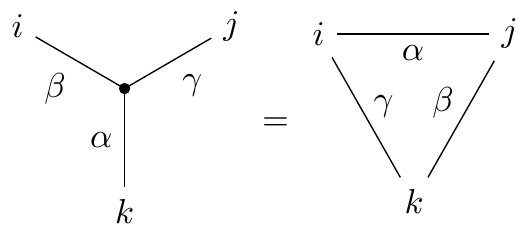}}
\caption{Star-triangle relation.}
\label{star}
\end{figure}

The latter relation is proved immediately by substituting the explicit form for kernels. Indeed, on both sides we have a nonzero result only when $i=j=k$. In the case both sides are equal to $\alpha_i\beta_i\gamma_i$. It is also worth noting that we can generalize the result to any number of vertices. However, the generalization appears due to the simplicity of the example.

\subsection{Exponential decomposition}
Now we are going to understand how to calculate the $n$-th derivative for $\exp\big[\ff{1}{2} \fel{x}{}{x}\big]$, where we have used Definition \ref{def} and Corollary \ref{cor}, according to which we can analyze only the case $\beta=\mathbf{1}$.
First of all, for illustration let us give the expressions for the first and second derivatives: if $i,j\in\mathbb{N}$, then
\begin{equation*}
\frac{d}{dx_i}e^{\frac{1}{2}\fel{x}{}{x}}=\big(\fel{i}{}{x}\big) e^{\frac{1}{2}\fel{x}{}{x}},
\end{equation*}
\begin{equation*}
\frac{d^2}{dx_jdx_i}e^{\frac{1}{2} \fel{x}{}{x}}=\bigg(\fel{i}{}{j}+
\begin{smallmatrix}\fel{i}{}{x}\\\fel{j}{}{x}
\end{smallmatrix}
\bigg)e^{\frac{1}{2}\fel{x}{}{x}}.
\end{equation*}

\begin{lem}
Find expressions for the third and the fourth derivatives.
\end{lem}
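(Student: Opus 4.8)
The plan is to produce both expressions by iterating the two single-variable rules recorded in \eqref{ddx}, namely $\hh{}{x_k}\fel{x}{}{x}=2\fel{k}{}{x}$ and $\hh{}{x_k}\fel{i}{}{x}=\fel{i}{}{k}$, together with the evident fact that a fully contracted line $\fel{i}{}{j}$ carries no $x$-dependence and is annihilated by $\hh{}{x_k}$. The key observation is that one application of $\hh{}{x_k}$ acts on a given term in exactly two ways: it hits the exponential, appending a fresh stub $\fel{k}{}{x}$ to the whole term, or it hits one of the stubs $\fel{\cdot}{}{x}$ already present, freezing its free end to $k$ and converting it into a line $\fel{\cdot}{}{k}$. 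Applying this to the displayed second derivative (let $\hh{}{x_k}$ act on each of the two stubs in $\fel{i}{}{x}\,\fel{j}{}{x}$, and on the exponential in both terms) collects into
\begin{equation*}
\frac{d^3}{dx_kdx_jdx_i}\,e^{\frac{1}{2}\fel{x}{}{x}}
=\bigg(\fel{i}{}{j}\,\fel{k}{}{x}+\fel{i}{}{k}\,\fel{j}{}{x}+\fel{j}{}{k}\,\fel{i}{}{x}
+\begin{smallmatrix}\fel{i}{}{x}\\\fel{j}{}{x}\\\fel{k}{}{x}\end{smallmatrix}\bigg)e^{\frac{1}{2}\fel{x}{}{x}},
\end{equation*}
that is, three terms with one contracted line and one loose stub, plus one term with three loose stubs.

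The next step is to apply $\hh{}{x_l}$ to this expression in the same way. Each term $\fel{i}{}{j}\,\fel{k}{}{x}$ yields $\fel{i}{}{j}\,\fel{k}{}{l}$ and $\fel{i}{}{j}\,\fel{k}{}{x}\,\fel{l}{}{x}$, and similarly for its two companions; the triple-stub term yields three terms with two stubs and one contracted line together with one quadruple-stub term. Organising the result by the number of contracted lines gives ten terms: the three perfect matchings $\fel{i}{}{j}\,\fel{k}{}{l}$, $\fel{i}{}{k}\,\fel{j}{}{l}$, $\fel{i}{}{l}\,\fel{j}{}{k}$; the six terms with one contracted line and two stubs, one for each of the $\binom{4}{2}$ pairs of external indices; and the single four-stub term, everything multiplied by $e^{\frac{1}{2}\fel{x}{}{x}}$.

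Finally I would record the pattern that makes this routine: by a one-line induction using \eqref{ddx}, the $n$-th derivative of $e^{\frac12\fel{x}{}{x}}$ equals $e^{\frac12\fel{x}{}{x}}$ times the sum over all partial pairings of the external indices $\{i_1,\dots,i_n\}$, each pair $\{a,b\}$ contributing a line $\fel{a}{}{b}$ and each unpaired index $c$ a stub $\fel{c}{}{x}$ — the diagrammatic form of Wick's theorem. There is no genuine obstacle here; the only point requiring care is the combinatorial bookkeeping, and since the external indices $i,j,k,l$ are all distinct every term occurs with coefficient exactly $1$, so patient enumeration suffices.
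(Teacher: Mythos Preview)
Your computation is correct and is exactly the intended approach: this is an exercise in the paper with no accompanying proof, and you have carried out precisely the iteration of the rules \eqref{ddx} that the surrounding text sets up. Your final paragraph, identifying the general pattern as a sum over partial pairings of the external indices, correctly anticipates Theorem~\ref{th}, which the paper proves immediately afterwards by an inductive argument on the $h_{k,n}$-functions of Definition~\ref{def1}. One small remark: the coefficient $1$ on each term is not a consequence of the indices $i,j,k,l$ being distinct --- the formula holds verbatim for arbitrary (possibly repeated) indices, since the terms are labelled by which positions are paired, not by the numerical values of the indices.
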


Here we make some comments at this stage. After each differentiation, the parity of the construction with respect to the variable $x$ changes. Also, after differentiation, the exponential has a factor, that is constructed from the elements of two types: $\fel{i}{}{j}$ and $\fel{i}{}{x}$, where $i,j\in\mathbb{N}$.

Then, after applying the differentiation operator by the variables $x_{i_1},\ldots,x_{i_k}$, where $k\in\mathbb{N}$, each term is constructed by using diagrams, described above. Hence, their ends contain the points $i_1,\ldots,i_k\in\mathbb{N}$, and may also contain $x$-variable $m$ times, where $m \in \{j:\,0\leqslant j \leqslant k,\,\mbox{and}\,k-j\in 2 \mathbb{N}\cup\{0\} \}$.

\begin{definition}
\label{def1}
Let $k,n\in\mathbb{N}$, such that $k\geqslant n$. Let also $i_1,\ldots,i_k\in\mathbb{N}$ and $y_1,\ldots,y_n\in V$. So we have two sets of indices $\begin{smallmatrix}i_1,\ldots,i_k\\y_1,\ldots,y_n\end{smallmatrix}$.
By round brackets 
$\big(\begin{smallmatrix}i_1,\ldots,i_k\\y_1,\ldots,y_n\end{smallmatrix}\big)$ 
we define an operation of joining indices from different rows by the lines in all possible ways. Then, if $k-n$ is even, square brackets
$\big[\big(\begin{smallmatrix}i_1,\ldots,i_k\\y_1,\ldots,y_n\end{smallmatrix}\big)\big]$ denote the connection of the remaining indices in the first row by the lines in all possible ways. If $k-n$ is odd, then $\big[\big(\begin{smallmatrix}i_1,\ldots,i_k\\y_1,\ldots,y_n\end{smallmatrix}\big)\big]=0$.

Hence, we can define a function by the following equality
\begin{equation}
\label{hfun}
h_{k,n}(i_1,\ldots,i_k;y_1,\ldots,y_n)=\frac{1}{n!}\big[\big(\begin{smallmatrix}i_1,\ldots,i_k\\y_1,\ldots,y_n\end{smallmatrix}\big)\big].
\end{equation}
\end{definition}

Let us give some examples:
\begin{equation*}
\big[\big(\begin{smallmatrix}i,j\\y_1,y_2\end{smallmatrix}\big)\big]\big|_{y_1=y_2=x}=
\begin{smallmatrix}\fel{i}{}{x}\\\fel{i}{}{x}\end{smallmatrix},
\end{equation*}
\begin{equation*}
\big[\big(\begin{smallmatrix}i,j,k\\x\end{smallmatrix}\big)\big]=
\begin{smallmatrix}\fel{i}{}{j}\\\fel{k}{}{x}\end{smallmatrix}+
\begin{smallmatrix}\fel{i}{}{k}\\\fel{j}{}{x}\end{smallmatrix}+
\begin{smallmatrix}\fel{j}{}{k}\\\fel{i}{}{x}\end{smallmatrix}.
\end{equation*}

\begin{theorem}
\label{th}
Let $k,i_1,\ldots,i_k\in\mathbb{N}$, $x=\{x_k\}_{k\geqslant 1}\in V$. Then, using Definition \ref{def1}, we have the following equality
\begin{equation}
\label{form}
\frac{d^k}{dx_{i_k}\ldots dx_{i_1}}\,e^{\frac{1}{2} \fel{x}{}{x}}=g_k(i_1,\ldots,i_k;x)\,e^{\frac{1}{2} \fel{x}{}{x}},
\end{equation}
where
\begin{equation}
\label{form2}
g_k(i_1,\ldots,i_k;x)=\sum_{n=0}^{k}h_{k,n}(i_1,\ldots,i_k;y_1,\ldots,y_n)\big|_{y_1=\ldots=y_n=x}.
\end{equation}
\end{theorem}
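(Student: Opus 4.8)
The plan is to prove the formula by induction on $k$, since the base cases $k=0,1,2$ have already been exhibited above and the structure of $g_k$ is tailor-made for tracking a single differentiation step. First I would set up the inductive hypothesis that \eqref{form} holds with $g_k$ given by \eqref{form2}, and then apply $d/dx_{i_{k+1}}$ to both sides. On the right-hand side the product rule gives two contributions: one from differentiating the exponential, which by the $k=1$ case produces the extra factor $\bigl(\fel{i_{k+1}}{}{x}\bigr) g_k$, and one from differentiating $g_k$ itself with respect to $x_{i_{k+1}}$. So the whole argument reduces to proving the combinatorial recursion
\begin{equation*}
g_{k+1}(i_1,\ldots,i_{k+1};x)=\frac{d}{dx_{i_{k+1}}}g_k(i_1,\ldots,i_k;x)+\bigl(\fel{i_{k+1}}{}{x}\bigr)\,g_k(i_1,\ldots,i_k;x).
\end{equation*}

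Next I would unpack what $d/dx_{i_{k+1}}$ does to $g_k$ term by term. Using \eqref{ddx}, differentiating a diagram $\fel{i}{}{x}$ in $x_{i_{k+1}}$ turns it into $\fel{i}{}{i_{k+1}}$, i.e. it converts a half-edge ending in $x$ into a full line joining $i$ to $i_{k+1}$; a diagram $\fel{i}{}{j}$ with both ends integers is killed. Combinatorially, in $h_{k,n}$ one sums over all pairings that leave exactly $n$ of the $i$'s attached to the $y$'s (evaluated at $x$) and pairs up the remaining $k-n$; differentiating in $x_{i_{k+1}}$ and then relabelling acts by picking one of the $n$ "$x$-legs" and reconnecting it to the new index $i_{k+1}$, which lands us in the set of pairings on $k+1$ indices with $n-1$ free $x$-legs — that is, it feeds $h_{k+1,n-1}$. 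The factorial normalisation $1/n!$ versus $1/(n-1)!$ must be checked to match: there are $n$ choices of which $x$-leg to reconnect, and $n\cdot\frac{1}{n!}=\frac{1}{(n-1)!}$, which is exactly the coefficient appearing in $g_{k+1}$. Meanwhile the term $\bigl(\fel{i_{k+1}}{}{x}\bigr)g_k$ supplies precisely the configurations in which the new index $i_{k+1}$ sits on its own $x$-leg, i.e. it raises the $x$-leg count rather than lowering it, and together with the differentiated terms it reproduces $g_{k+1}=\sum_{n=0}^{k+1}h_{k+1,n}\big|_{y=x}$. The parity bookkeeping ($k-n$ even, else the bracket vanishes) is automatically consistent because adding one index flips the parity of $k-n$ and simultaneously shifts $n$ by one in the reconnection move.

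I would finish by observing that every pairing of the $k+1$ indices with a prescribed number of $x$-legs arises from exactly one of these two moves applied to a unique pairing on $k$ indices — reconnecting the leg $i_{k+1}$ sits on (if it sits on an $x$-leg or on a line to another integer, delete that line) — so the correspondence is a bijection and no term is over- or under-counted. The main obstacle, and the only genuinely delicate point, is the combinatorial accounting in this paragraph: one must verify that the "in all possible ways" pairings generated by differentiating $g_k$, once the relabellings from \eqref{ddx} are taken into account, exhaust all pairings counted by $g_{k+1}$ with the correct multiplicities — in particular that differentiating a line $\fel{i_a}{}{x}$ and differentiating a line $\fel{i_b}{}{x}$ for distinct $a,b$ never produce the same labelled diagram, and that the $1/n!$ normalisations telescope correctly. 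Everything else — the product rule, the identities \eqref{prop1}–\eqref{ddx}, and the base cases — is routine.
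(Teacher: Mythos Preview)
Your proposal is correct and follows essentially the same route as the paper: both arguments reduce the claim to the recursion
\[
g_{k+1}(i_1,\ldots,i_{k+1};x)=\frac{d}{dx_{i_{k+1}}}g_k(i_1,\ldots,i_k;x)+\bigl(\fel{i_{k+1}}{}{x}\bigr)\,g_k(i_1,\ldots,i_k;x),
\]
and then verify it by tracking how the $h_{k,n}$ contribute to the two pieces on the right. The only cosmetic difference is that the paper keeps the auxiliary variables $y_1,\ldots,y_n$ explicit and manipulates the bracket notation $\big[\big(\begin{smallmatrix}\cdots\\\cdots\end{smallmatrix}\big)\big]$ algebraically---deriving the identity $h_{p+1,n}=\frac{d}{dx_{i_{p+1}}}h_{p,n+1}+h_{p,n-1}\cdot(\fel{i_{p+1}}{}{x})$ and treating the boundary cases $n\in\{0,p,p+1\}$ separately---whereas you phrase the same identity bijectively in terms of ``$x$-legs'' and reconnections; your factorial check $n\cdot\tfrac{1}{n!}=\tfrac{1}{(n-1)!}$ is precisely the paper's passage from $\tfrac{1}{n!}$ to $\tfrac{n+1}{(n+1)!}$ read in the other direction.
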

\begin{proof} It follows from equality (\ref{form}), that the relations should be hold
\begin{equation*}
g_{p+1}(i_1,\ldots,i_{p+1};x)=
\frac{d}{dx_{p+1}}g_p(i_1,\ldots,i_p;x)+
g_p(i_1,\ldots,i_p;x)(\fel{i_{p+1}}{}{x}).
\end{equation*}

Let us check, that the series (\ref{form2}) satisfies the last relation. First of all we derive some additional properties for $h_{k,n}$-functions. Let us transfer an element from under the round brackets
\begin{align*}
h_{p+1,n}(i_1,\ldots,i_{p+1};y_1,\ldots,y_n)&=
\frac{1}{n!}
\big[\big(
\begin{smallmatrix}i_1,\ldots,i_{p+1}\\y_1,\ldots,y_n\end{smallmatrix}\big)
\big]\\&=
\frac{1}{n!}
\big[\big(
\begin{smallmatrix}i_1,\ldots,i_p\\y_1,\ldots,y_n\end{smallmatrix}\big)
\begin{smallmatrix}i_{p+1}\\\phantom{x}\end{smallmatrix}\big]+
\sum_{j=0}^{n}\frac{1}{n!}
\big[\big(
\begin{smallmatrix}i_1,\ldots,i_{p}\\y_1,\ldots,\hat{y}_j,\ldots,y_n\end{smallmatrix}\big)
\big](\fel{i_{p+1}}{}{y_j}),
\end{align*}
where the hat $\hat{y}_j$ indicates that the element is missed. Then it is easy to verify that the relation holds 
\begin{equation*}
\frac{1}{n!}
\big[\big(
\begin{smallmatrix}i_1,\ldots,i_p\\y_1,\ldots,y_n\end{smallmatrix}\big)
\begin{smallmatrix}i_{p+1}\\\phantom{x}\end{smallmatrix}\big]=
\frac{n+1}{(n+1)!}
\big[\big(
\begin{smallmatrix}i_1,\ldots,i_p\\y_1,\ldots,y_n,i_{p+1}\end{smallmatrix}\big)
\big].
\end{equation*}

After that, applying a symmetry of the
$h_{p+1,n}(i_1,\ldots,i_{p+1};y_1,\ldots,y_n)$ under the permutations of the group of the indices $y_1,\ldots,y_n$, and choosing the last points in the form $y_1=\ldots=y_n=x$, we obtain
\begin{align*}
h_{p+1,n}(i_1,\ldots,i_{p+1};x,\ldots,x)&=(n+1)h_{p,n+1}(i_1,\ldots,i_p;i_{p+1},x,\ldots,x)
\\&~~~~~~~~~~~~~\phantom{=}+h_{p,n-1}(i_1,\ldots,i_p;x,\ldots,x)(\fel{i_{p+1}}{}{x})\\&
=\frac{d}{dx_{p+1}}h_{p,n+1}(i_1,\ldots,i_p;x,\ldots,x)
\\&~~~~~~~~~~~~~\phantom{=}+h_{p,n-1}(i_1,\ldots,i_p;x,\ldots,x)(\fel{i_{p+1}}{}{x}),
\end{align*}
where $0<n<p$. The cases, when $n\in\{0,p,p+1\}$ can be considered separately, and have the following forms:
\begin{equation*}
h_{p+1,p+1}(i_1,\ldots,i_{p+1};x,\ldots,x)
=h_{p,p}(i_1,\ldots,i_p;x,\ldots,x)(\fel{i_{p+1}}{}{x}),
\end{equation*}
\begin{equation*}
h_{p+1,p}(i_1,\ldots,i_{p+1};x,\ldots,x)
=h_{p,p-1}(i_1,\ldots,i_p;x,\ldots,x)(\fel{i_{p+1}}{}{x}),
\end{equation*}
\begin{equation*}
h_{p+1,0}(i_1,\ldots,i_{p+1})
=\frac{d}{dx_{p+1}}h_{p,1}(i_1,\ldots,i_p;x).
\end{equation*}

Hence, summing separately the left and the right hand sides of the last four equalities, we make sure that the ansatz (\ref{form2}) is correct. The uniqueness of the solution is obvious. Thus the statement of the theorem is proved.
\end{proof}
\begin{lem}
Show that in all calculations of this section we can change $\fel{i}{}{j}$ by $\fel{i}{\beta}{j}$, where $\beta\in V$. Moreover, we can replace it with any symmetric matrix $g_{ij}$.
\end{lem}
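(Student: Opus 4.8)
The plan is to go back to the proof of Theorem~\ref{th} and notice that it invoked the kernel $\delta_{ij}$ only through two structural identities, both of which persist after $\delta_{ij}$ is replaced by $\beta_i\delta_{ij}$ (the $\beta$-line) or, more generally, by an arbitrary symmetric matrix $g_{ij}=g_{ji}$. So I would reinterpret every line $\fel{i}{}{j}$ occurring in Definition~\ref{def1} and in $g_k,h_{k,n}$ as the $g$-line $\fel{i}{g}{j}=g_{ij}$, and every line $\fel{i}{}{x}$ as $\fel{i}{g}{x}=\sum_{j\geqslant1}g_{ij}x_j$ (for the $\beta$-line this is just $\beta_ix_i$; one assumes such sums make sense, e.g.\ for $g$ with finitely supported rows, since the content is a formal diagrammatic identity). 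The statement to prove is then
\begin{equation*}
\frac{d^k}{dx_{i_k}\ldots dx_{i_1}}\,e^{\frac{1}{2}\fel{x}{g}{x}}=g_k(i_1,\ldots,i_k;x)\,e^{\frac{1}{2}\fel{x}{g}{x}},
\end{equation*}
with $g_k$ assembled from $g$-lines exactly as in \eqref{form2}; the $\beta$-line case is the diagonal special case $g_{ij}=\beta_i\delta_{ij}$ and needs no separate argument.

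The two structural identities are precisely \eqref{ddx} with the $\beta$-line replaced by the $g$-line:
\begin{equation*}
\hh{}{x_i}\fel{x}{g}{x}=2\,\fel{i}{g}{x},\qquad
\hh{}{x_j}\fel{i}{g}{x}=\fel{i}{g}{j},
\end{equation*}
the first of which holds because $\hh{}{x_i}(gx,x)=\sum_{j\geqslant1}(g_{ij}+g_{ji})x_j$, which equals $2(gx)_i=2\,\fel{i}{g}{x}$ exactly when $g$ is symmetric. From the first identity one gets $\hh{}{x_i}e^{\frac{1}{2}\fel{x}{g}{x}}=\big(\fel{i}{g}{x}\big)e^{\frac{1}{2}\fel{x}{g}{x}}$, and then, differentiating the displayed formula once more and using the second identity, exactly the same Leibniz recursion as in Theorem~\ref{th},
\begin{equation*}
g_{p+1}(i_1,\ldots,i_{p+1};x)=\frac{d}{dx_{p+1}}g_p(i_1,\ldots,i_p;x)+g_p(i_1,\ldots,i_p;x)\,\big(\fel{i_{p+1}}{g}{x}\big),
\end{equation*}
now with $g$-lines throughout.

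It then remains to observe that the verification that the ansatz \eqref{form2} solves this recursion — transferring $i_{p+1}$ from under the round brackets, the combinatorial identity relating $\tfrac1{n!}[(\cdots)\,i_{p+1}]$ to $\tfrac{n+1}{(n+1)!}[(\cdots i_{p+1})]$, the symmetry of $h_{p+1,n}$ under permutations of $y_1,\ldots,y_n$, and the boundary cases $n\in\{0,p,p+1\}$ — never uses the value of the kernel: it only bookkeeps which endpoints are joined to which. Hence the induction of Theorem~\ref{th} goes through verbatim, and, the solution of the recursion with initial datum $g_0=1$ being unique, the generalized formula follows. The identities \eqref{prop1} and \eqref{ddx} themselves likewise survive the replacement once $g$ is symmetric (with the chain rule becoming ordinary matrix multiplication).

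I expect the only genuinely delicate point to be the bookkeeping of symmetry, not any analytic difficulty. Symmetry of $g$ is used in exactly two places: in the identity $\hh{}{x_i}\fel{x}{g}{x}=2\,\fel{i}{g}{x}$ above — without $g=g^{\top}$ one would obtain the symmetric part $\tfrac12(g+g^{\top})$ in place of $g$ — and in the square-bracket operation of Definition~\ref{def1}, where joining two first-row indices $i_a,i_b$ by a line must be unambiguous, i.e.\ $\fel{i_a}{g}{i_b}=\fel{i_b}{g}{i_a}$; both hold precisely because $g_{ij}=g_{ji}$, and are automatic for the diagonal $\beta$-line. A non-symmetric $g$ can always be symmetrized at the outset without changing $e^{\frac{1}{2}(gx,x)}$. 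One caveat worth recording: the star--triangle relation of Figure~\ref{star} is special to diagonal kernels and does \emph{not} survive the passage to a general symmetric $g$ — but it is not among the exponential-decomposition computations that the exercise concerns.
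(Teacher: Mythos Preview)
The paper states this as an exercise and provides no proof of its own, so there is nothing to compare against. Your solution is correct and complete: you have identified precisely the two places where the kernel enters the argument of Theorem~\ref{th} (the differentiation rules \eqref{ddx}), verified that both survive replacement by an arbitrary symmetric $g_{ij}$, and noted that the remaining combinatorics of Definition~\ref{def1} is kernel-agnostic; your closing remark that the star--triangle relation does \emph{not} generalize is a useful caveat that goes slightly beyond what the exercise asks.
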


\subsection{Physical point of view}
Here we assume $\beta=\mathbf{1}$.
Let us draw the attention on formula (\ref{phif}) and consider the following example 
\begin{equation}
\label{ve1}
F(x)=\sum_{k_1,\ldots,k_i\geqslant 1}C_{k_1\ldots k_i}x_{k_1}\cdot\ldots\cdot x_{k_i},\,\,\,\,\,\,x\in V,\,\,\,\,\,\,i\in\mathbb{N}.
\end{equation}

Hence, if we substitute the operator $\partial_{x}$ instead of $x$ into $F(x)$, and calculate the action (\ref{phif}) by applying Theorem \ref{th}, we obtain
\begin{equation}
\label{ve2}
\sum_{k_1,\ldots,k_i\geqslant 1}C_{k_1\ldots k_i}h_{i,0}(k_1,\ldots,k_i).
\end{equation}
\begin{definition}
	\label{def2}
The function $C_{k_1\ldots k_i}$ denotes $i$ lines, protruding from one point. 
The free ends of lines have indices $k_1,\ldots,k_i$.
Such element of diagram technique we call by vertex with $i$ lines.
\end{definition}

The last definition gives a possibility to visualize the transition from formula (\ref{ve1}) to (\ref{ve2}). It can be represented as follows: we connect the vertex indices with lines $\fel{k_n}{}{k_j}$ in all possible ways, and then sum over $\mathbb{N}^i$.
Moreover, if the kernel $C_{k_1\ldots k_i}$ can be factorized, then it is convenient to associate different factors with different vertices. So answer also contains vertices connected to each other.

Let us apply such construction to our main model. In the case we have $F_\alpha(x)=\exp\big\{i\sum_{k\geqslant1}x_k^3\alpha_k^{\phantom{3}}\big\}$, and, after the decomposition of the exponential into the series, all orders can be represented as products of $i\sum_{k\geqslant 1}x_k^3\alpha_k^{\phantom{3}}$. It means that the basic construct element has the form, depicted in Figure \ref{phi3}.

\begin{figure}[h]
	\center{\includegraphics{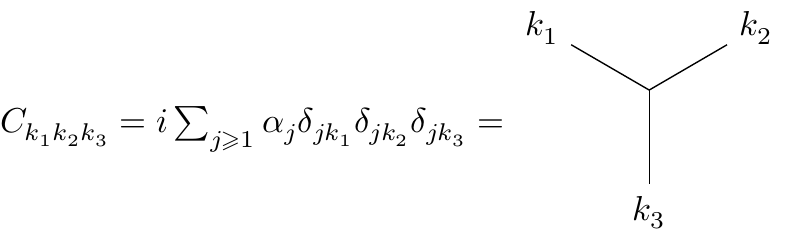}}
	\caption{Vertex for the cubic model.}
	\label{phi3}
\end{figure}

So it follows from the recipe that in order to get the answer for
\begin{equation}
\label{ex} \frac{1}{j!}\bigg(i\sum_{k\geqslant 1}\partial_{x_k}^3\alpha_k^{\phantom{3}}\bigg)^je^{(x,x)/2}\bigg|_{x=0},
\end{equation}
we need to draw the last diagram $j$ times, and then connect the external lines in all possible ways.

Now we should note some points. Firstly, in the process of drawing images we can omit indices. Secondly, if we have an odd number of external lines, then we have zero contribution. Hence, it is convenient to give definitions for an auxiliary operator.
\begin{definition}
\label{def3}
Let $\mathcal{D}$ be a diagram with $k$ external lines, then an operator $\mathbb{H}_{\mathrm{n}}$, where $0\leqslant n\leqslant k$, is defined by the following rules:
\begin{equation*}
\mathbb{H}_{\mathrm{n}}\mathcal{D}= 
\begin{cases}
\text{$k-n$ lines of $\mathcal{D}$ are connected in all possible ways,} &\text{if $k-n$ is even,}\\
\text{0,} &\text{if $k-n$ is odd.}
\end{cases}
\end{equation*}
In addition, if $n=0$ then the index can be omitted, and we have $\mathbb{H}_{\mathrm{0}}=\mathbb{H}$.
\end{definition}
The last operator is the linear one. Let us give some computations of formula (\ref{ex}). If the parameter $j$ is odd, then we have zero, because the diagram contains $3j$ external lines. If $j$ is even, then we have nonzero answer. For example, for $j=2$ we have

\begin{equation}\label{d1}
\frac{1}{2!}
\mathbb{H}
\Big(\includegraphics[scale=0.6]{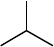}^2
\Big)=\frac{9}{2}~\och{-0.5}{0.6}{0} +3~ \lemm{-0.5}{0.8}{0}.
\end{equation}

\begin{definition}
\label{def4}
Diagram is called connected if any two vertices are connected by a path. Diagram is called strongly connected, or one-particle irreducible, if it remains connected after removing any one line, see \cite{christofides-1975}.
\end{definition}
\begin{definition}
\label{def5}
Let $n\in\mathbb{N}\cup\{0\}$.
An operator $\mathbb{H}^{\mathrm{c/sc}}_{\mathrm{n}}$ is defined in the same way as $\mathbb{H}_{\mathrm{n}}$, but besides keeps only connected/strongly connected diagrams, and removes other. Additionally, $\mathbb{H}^{\mathrm{c/sc}}1=0$.
\end{definition}

Let us give some examples:
\begin{equation}\label{d2}
\mathbb{H}
\Big(\includegraphics[scale=0.6]{ex}^j
\Big)=
\mathbb{H}^{\mathrm{c}}
\Big(\includegraphics[scale=0.6]{ex}^j
\Big)=
\mathbb{H}^{\mathrm{sc}}
\Big(\includegraphics[scale=0.6]{ex}^j
\Big)=0,\,\,\,\,\,\text{if $j$ is odd},
\end{equation}
\begin{equation}
\mathbb{H}^{\mathrm{c}}
\Big(\includegraphics[scale=0.6]{ex}^2
\Big)=
\mathbb{H}
\Big(\includegraphics[scale=0.6]{ex}^2
\Big),\,\,\,\,\,\,
\mathbb{H}^{\mathrm{sc}}
\Big(\includegraphics[scale=0.6]{ex}^2
\Big)=3!~ \lemm{-0.5}{0.8}{0},
\end{equation}
\begin{equation}
(\mathbb{H}-\mathbb{H}^{\mathrm{c}})
\Big(\includegraphics[scale=0.6]{ex}^4
\Big)=243~
\begin{smallmatrix}
\och{-0.5}{0.3}{0}\\
\och{-0.5}{0.3}{0}
\end{smallmatrix}
+324~
\begin{smallmatrix}
\lemm{-0.5}{0.4}{0}\\
\och{-0.5}{0.3}{0}
\end{smallmatrix}
+108~
\begin{smallmatrix}
\lemm{-0.5}{0.4}{0}\\
\lemm{-0.5}{0.4}{0}
\end{smallmatrix}.
\end{equation}

\begin{lemma}
\label{lem7}
Under the conditions described above we have the equality
\begin{equation}
\frac{1}{4!}
(\mathbb{H}-\mathbb{H}^{\mathrm{c}})
\Big(\includegraphics[scale=0.6]{ex}^4
\Big)=
\frac{1}{2}
\Bigg(
\frac{1}{2!}
\mathbb{H}
\Big(\includegraphics[scale=0.6]{ex}^2
\Big)\Bigg)^{2}.
\end{equation}
\end{lemma}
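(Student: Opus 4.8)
The plan is to expand both sides in terms of the two basic connected diagrams — call them $D_\theta$ for the "theta" graph $\lemm{-0.5}{0.8}{0}$ (two vertices joined by three lines) and $D_\circ$ for the "dumbbell" graph $\och{-0.5}{0.3}{0}$ (each vertex carrying a self-loop, joined by one line) — and to match coefficients. By Definition \ref{def5}, the operator $\mathbb{H}-\mathbb{H}^{\mathrm{c}}$ applied to $\includegraphics[scale=0.6]{ex}^4$ retains exactly the \emph{disconnected} Wick contractions of four cubic vertices. Since $3\cdot 4 = 12$ external lines must be fully paired and the only connected two-vertex graphs on cubic vertices are $D_\theta$ and $D_\circ$, any disconnected contraction of the four vertices splits the vertex set into two pairs, each pair forming either $D_\theta$ or $D_\circ$. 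Hence the right side of the claimed identity is a quadratic form in $\{D_\theta, D_\circ\}$, and so is the left side; it suffices to check the three coefficients.

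First I would record, from equation (\ref{d1}) together with the $\tfrac{1}{2!}$ normalization, that
\begin{equation*}
\frac{1}{2!}\,\mathbb{H}\Big(\includegraphics[scale=0.6]{ex}^2\Big)
=\frac{9}{2}\,D_\circ+3\,D_\theta ,
\end{equation*}
so the right-hand side of the lemma equals
\begin{equation*}
\frac{1}{2}\Big(\tfrac{9}{2}D_\circ+3D_\theta\Big)^{2}
=\frac{81}{8}\,D_\circ^{2}+\frac{27}{2}\,D_\circ D_\theta+\frac{9}{2}\,D_\theta^{2}.
\end{equation*}
Next I would compute $(\mathbb{H}-\mathbb{H}^{\mathrm c})\big(\includegraphics[scale=0.6]{ex}^4\big)$ by counting contractions. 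Choosing which two of the four labelled vertices form the first connected block is a matter of partitioning a $4$-set into two unordered pairs ($3$ ways); within a block of two cubic vertices the number of complete pairings giving $D_\circ$ is $9$ and giving $D_\theta$ is $3!=6$ — these are exactly the multiplicities already visible in (\ref{d1}) (there $\tfrac{9}{2}\cdot 2! = 9$ and $3\cdot 2!=6$). Therefore
\begin{equation*}
(\mathbb{H}-\mathbb{H}^{\mathrm c})\Big(\includegraphics[scale=0.6]{ex}^4\Big)
=3\big(9\,D_\circ+6\,D_\theta\big)^{?}\ \text{— organized as}\
3\big(81\,D_\circ^2+2\cdot 9\cdot 6\,D_\circ D_\theta+36\,D_\theta^2\big),
\end{equation*}
i.e. $243\,D_\circ^{2}+324\,D_\circ D_\theta+108\,D_\theta^{2}$, which is precisely the displayed formula for $(\mathbb{H}-\mathbb{H}^{\mathrm c})\big(\includegraphics[scale=0.6]{ex}^4\big)$ stated just before the lemma. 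Dividing by $4!=24$ gives $\tfrac{243}{24}D_\circ^2+\tfrac{324}{24}D_\circ D_\theta+\tfrac{108}{24}D_\theta^2=\tfrac{81}{8}D_\circ^2+\tfrac{27}{2}D_\circ D_\theta+\tfrac{9}{2}D_\theta^2$, matching the square computed above.

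The one genuinely delicate point — the part I would write out most carefully — is the symmetry-factor bookkeeping: the factor $3$ for splitting four labelled vertices into two \emph{unordered} blocks, the absence of any extra $2!$ when the two blocks are of the same type (already accounted for by the $3$), and the consistency of the per-block multiplicities $9$ and $6$ with those in (\ref{d1}). Equivalently, and perhaps more cleanly, one can phrase the whole argument as the statement that $\mathbb{H}-\mathbb{H}^{\mathrm c}$ on $\includegraphics[scale=0.6]{ex}^4$ collects exactly the ``two disjoint two-vertex clusters'' terms, and that for a product of identical vertices the number of ways to form two clusters of size two is $\binom{4}{2}/2=3$; combining this with multiplicativity of Wick contractions over disjoint vertex sets yields $\tfrac{1}{4!}(\mathbb{H}-\mathbb{H}^{\mathrm c})(\,\cdot^4)=\tfrac{1}{2}\big(\tfrac{1}{2!}\mathbb{H}(\,\cdot^2)\big)^2$ directly, with the three-coefficient check above serving as the explicit verification.
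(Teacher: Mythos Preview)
Your proof is correct. The paper itself does not give a proof at the point where the lemma is stated; it treats the statement as a motivating observation, and only after proving Theorem~\ref{th1} and the Bell-polynomial corollary does it remark that ``Considering $A_4$, it is easy to check the relation from Lemma~\ref{lem7}.'' So the paper's intended justification is top-down: the identity is the $n=4$ instance of $(\mathbb{H}-\mathbb{H}^{\mathrm c})\zeta^n = A_n(\mathbb{H}^{\mathrm c}(\zeta),\ldots,\mathbb{H}^{\mathrm c}(\zeta^{n-1}),0)$, using that for the cubic vertex $\mathbb{H}^{\mathrm c}(\zeta)=\mathbb{H}^{\mathrm c}(\zeta^3)=0$, so only the $3x_2^2$ term of $A_4$ survives and $\mathbb{H}^{\mathrm c}(\zeta^2)=\mathbb{H}(\zeta^2)$.

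Your route is instead a direct bottom-up Wick count: partition four labelled vertices into two unordered pairs ($3$ ways), contract each pair independently ($9$ dumbbells or $6$ thetas per pair), and match the resulting $243,\,324,\,108$ against the square of~(\ref{d1}). This is entirely self-contained and does not rely on Theorem~\ref{th1}, which is a virtue here since the lemma is placed \emph{before} that theorem as motivation. The Bell-polynomial argument, on the other hand, generalizes immediately to arbitrary $n$ and arbitrary vertices, which your explicit count does not. Both are sound; yours is the natural verification at this point in the exposition, the paper's is the conceptual explanation that comes later.
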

The last relation gives hypothesis for a formula that will be proved in the next section. Let us write it in first orders as the following chain of relations
\begin{align}
\nonumber
\ln\Big(\mathbb{H}\,
e^{\includegraphics[scale=0.6]{ex}}
\Big)&\approx\frac{1}{2!}\mathbb{H}
\Big(\includegraphics[scale=0.6]{ex}^2\Big)
+
\frac{1}{4!}\mathbb{H}\Big(\includegraphics[scale=0.6]{ex}^4\Big)
-\frac{1}{2}
\bigg(\frac{1}{2!}\mathbb{H}\Big(\includegraphics[scale=0.6]{ex}^2\Big)
+
\frac{1}{4!}\mathbb{H}\Big(\includegraphics[scale=0.6]{ex}^4\Big)\bigg)+\ldots \\\label{te}
&\approx\frac{1}{2!}\mathbb{H}^{\mathrm{c}}\Big(\includegraphics[scale=0.6]{ex}^2\Big)
+
\frac{1}{4!}\mathbb{H}^{\mathrm{c}}\Big(\includegraphics[scale=0.6]{ex}^4\Big)+\ldots \\
&\approx
\mathbb{H}^{\mathrm{c}}\,
e^{\includegraphics[scale=0.6]{ex}}.\nonumber
\end{align}
\begin{lem}
Check the diagram equalities mentioned above.
\end{lem}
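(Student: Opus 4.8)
The plan is to reduce the statement to pure combinatorics of diagrams built from four cubic vertices. Recall that $\mathbb{H}-\mathbb{H}^{\mathrm{c}}$ applied to $(\text{vertex})^4$ keeps exactly the \emph{disconnected} vacuum diagrams obtained by pairing the twelve external half-edges of four cubic vertices in all possible ways. Since every vertex has an odd number ($3$) of legs and a single vertex cannot close on itself into a vacuum graph with an odd number of half-edges, a disconnected pairing of four vertices must split the four vertices into two groups each of which is internally fully contracted; with three legs per vertex the only possibility is $2+2$. Thus $(\mathbb{H}-\mathbb{H}^{\mathrm{c}})\big(\,\text{vertex}^4\big)$ is a sum over all ways of (i) partitioning the four labelled vertices into two unordered pairs, and (ii) fully contracting each pair among its six legs into a vacuum graph. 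This is precisely what one reads off from the explicit expansion displayed just before the lemma: $243$ times two ``theta'' graphs, $324$ times a ``theta'' and a ``double bubble,'' $108$ times two ``double bubbles.''

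Next I would compute the right-hand side the same way. By definition $\tfrac{1}{2!}\mathbb{H}(\text{vertex}^2)$ is the sum of fully contracted vacuum graphs on two labelled vertices, weighted by $1/2!$; from \eqref{d1} this equals $\tfrac{9}{2}\,\och{-0.5}{0.3}{0}+3\,\lemm{-0.5}{0.4}{0}$. Squaring it and expanding, $\big(\tfrac{1}{2!}\mathbb{H}(\text{vertex}^2)\big)^2 = \tfrac{81}{4}(\och{}{}{})^2 + 27\,(\och{}{}{})(\lemm{}{}{}) + 9\,(\lemm{}{}{})^2$, and multiplying by $\tfrac12$ gives coefficients $\tfrac{81}{8}$, $\tfrac{27}{2}$, $\tfrac{9}{2}$ for the three diagram types. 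On the left-hand side the prefactor is $1/4!=1/24$, so the coefficients are $243/24=81/8$, $324/24=27/2$, $108/24=9/2$ respectively. These match term by term, which proves the identity.

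The conceptual core — and the only place where something must genuinely be checked rather than expanded — is the bookkeeping of combinatorial weights: on the left one contracts $\text{vertex}^4$ and divides by $4!$, on the right one contracts $\text{vertex}^2$, divides by $2!$, squares, and divides by $2$. The $\tfrac12$ and the $1/2!$ factors are exactly the symmetry factors that account, respectively, for the exchange of the two vertex-pairs and for the internal relabelling within each pair, so the disconnected part of the four-vertex contraction reassembles as the ``exponential of the connected two-vertex piece'' at this order; a clean way to see it is to note that the number of ways to split $\{1,2,3,4\}$ into two ordered pairs is $4!/(2!\,2!)\cdot\ldots$ and that passing to unordered pairs and then to the two separate contractions reproduces precisely $\tfrac{1}{4!}\cdot(\text{stuff}) = \tfrac12\big(\tfrac{1}{2!}(\text{stuff}_2)\big)^2$. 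I expect the main obstacle to be purely notational: making sure that the multiplicities attached to $\och{}{}{}$ and $\lemm{}{}{}$ (which encode the internal automorphisms of the theta and double-bubble graphs) are handled consistently on both sides, so that the numbers $243,324,108$ on the left and $9/2,3$ on the right are matched without double-counting. Once those weights are fixed by the explicit low-order computations already recorded in \eqref{d1} and in the displays above, the verification is immediate and the uniqueness of the decomposition into diagram types finishes the proof.
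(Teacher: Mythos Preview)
Your verification of Lemma~\ref{lem7} is correct: the structural observation that a disconnected vacuum contraction of four cubic vertices must split as $2+2$ (since a single cubic vertex has an odd number of legs) is exactly right, and the arithmetic $243/24=81/8$, $324/24=27/2$, $108/24=9/2$ matches $\tfrac12\big(\tfrac92\big)^2$, $\tfrac12\cdot 2\cdot\tfrac92\cdot 3$, $\tfrac12\cdot 3^2$. One cosmetic slip: you have the names swapped --- in the paper's pictures $\och{-0.5}{0.3}{0}$ is the ``dumbbell'' (two tadpoles joined by a line) and $\lemm{-0.5}{0.4}{0}$ is the ``theta/sunset''; this does not affect the numbers.

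Two remarks on scope and approach. First, the exercise asks you to check \emph{all} the displayed diagram equalities, not only Lemma~\ref{lem7}; in particular the coefficients $9,6$ in $\mathbb{H}(\text{vertex}^2)$ and $243,324,108$ in $(\mathbb{H}-\mathbb{H}^{\mathrm c})(\text{vertex}^4)$ are themselves part of what is to be verified. Your $2+2$ argument in fact yields them (three unordered pairings of the four vertices, times the $9+6=15=5!!$ contractions within each pair), so it would be worth making that derivation explicit rather than quoting the display. Second, the paper itself does not prove Lemma~\ref{lem7} at this point; it later recovers it conceptually from the Bell-polynomial corollary to Theorem~\ref{th1}: since $\mathbb{H}^{\mathrm c}(\zeta)=\mathbb{H}^{\mathrm c}(\zeta^3)=0$ for the cubic vertex, $A_4$ collapses to $3x_2^2+x_4$, giving $(\mathbb{H}-\mathbb{H}^{\mathrm c})(\zeta^4)=3\big(\mathbb{H}^{\mathrm c}(\zeta^2)\big)^2$ and hence the identity after dividing by $4!$. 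Your direct count is the intended hands-on check at this stage; the Bell-polynomial route is the structural explanation that the paper supplies afterwards.
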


\subsection{Calculation of $\ln\mathbb{H}\exp(\mbox{vertex})$}
\begin{theorem}
\label{th1}
Let $\zeta$ be a vertex with an arbitrary number of external lines. Then, using the physical kind of diagram technique, we have
\begin{equation}
\label{po}
\mathbb{H}\exp(\zeta)=\exp\big(\mathbb{H}^{\mathrm{c}}(\exp(\zeta))\big).
\end{equation}
\end{theorem}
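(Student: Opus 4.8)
The plan is to run the standard combinatorial argument behind the linked-cluster theorem, the organizing principle being that every diagram splits uniquely into its connected components. Since $\mathbb{H}$ and $\mathbb{H}^{\mathrm c}$ are linear, I would first expand everything in the number of vertices: writing $\zeta^j$ for a configuration of $j$ copies of the vertex $\zeta$,
\begin{equation*}
\mathbb{H}\exp(\zeta)=\sum_{j\geqslant0}\frac{1}{j!}\,\mathbb{H}(\zeta^j),\qquad \mathbb{H}^{\mathrm c}\exp(\zeta)=\sum_{j\geqslant1}\frac{1}{j!}\,\mathbb{H}^{\mathrm c}(\zeta^j),
\end{equation*}
where $\mathbb{H}(\zeta^j)$ is the sum over all pairings of the external legs of the $j$ vertices (zero if their total number is odd). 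To make the counting transparent I would temporarily label the copies $1,\dots,j$; each term of $\mathbb{H}(\zeta^j)$ is then a diagram on this labeled vertex set, it decomposes uniquely into connected components, and this induces a set partition $\pi$ of $\{1,\dots,j\}$ recording which vertices lie in which component.

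The crux is the factorization. Grouping the terms of $\mathbb{H}(\zeta^j)$ by the induced partition $\pi=\{B_1,\dots,B_m\}$, the legs belonging to a block $B$ are joined only among themselves and form a \emph{connected} diagram, so the contribution of that block is the sum of all connected pairings of $|B|$ labeled vertices; since this count depends on the block only through its size, it equals $\mathbb{H}^{\mathrm c}(\zeta^{|B|})$. Distinct components use disjoint sets of vertices and legs, so their contributions multiply in the commutative algebra of diagrams (product $=$ disjoint union), and blocks of a set partition are unordered, so
\begin{equation*}
\mathbb{H}(\zeta^j)=\sum_{\pi\,\vdash\,\{1,\dots,j\}}\ \prod_{B\in\pi}\mathbb{H}^{\mathrm c}(\zeta^{|B|}).
\end{equation*}
If $\zeta$ has an odd number of legs, a block of odd size admits no connected pairing, so $\mathbb{H}^{\mathrm c}(\zeta^{|B|})$ vanishes there and all $j$-odd terms die, as they must; the $j=0$ term is the empty diagram $1$.

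It remains to sum over $j$. The displayed identity is exactly the hypothesis of the exponential formula for exponential generating functions: for any family $(a_b)_{b\geqslant1}$,
\begin{equation*}
\sum_{j\geqslant0}\frac{1}{j!}\sum_{\pi\,\vdash\,\{1,\dots,j\}}\ \prod_{B\in\pi}a_{|B|}=\exp\!\Bigg(\sum_{b\geqslant1}\frac{a_b}{b!}\Bigg).
\end{equation*}
Taking $a_b=\mathbb{H}^{\mathrm c}(\zeta^b)$ and using $\mathbb{H}^{\mathrm c}1=0$, the right-hand side becomes $\exp\big(\sum_{b\geqslant1}\tfrac1{b!}\mathbb{H}^{\mathrm c}(\zeta^b)\big)=\exp\big(\mathbb{H}^{\mathrm c}\exp(\zeta)\big)$, while the left-hand side is $\mathbb{H}\exp(\zeta)$; this is \eqref{po}.

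The main obstacle is the bookkeeping in the factorization step: one must check carefully that "joining the legs in all possible ways" on \emph{labeled} vertices, divided by $j!$, reproduces each resulting diagram with precisely the symmetry weight carried by $\mathbb{H}$, and that the count of connected pairings on a block is label-independent, so that the sum genuinely factors over the blocks of $\pi$. Everything else — the unique decomposition into connected components and the exponential formula — is standard. As a consistency check, specializing the factorization of $\mathbb{H}(\zeta^j)$ to $j=2$ and $j=4$ recovers the stated identity $\mathbb{H}^{\mathrm c}(\zeta^2)=\mathbb{H}(\zeta^2)$ and Lemma \ref{lem7}, and the summation over $j$ matches \eqref{d1}--\eqref{te}.
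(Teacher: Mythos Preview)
Your argument is correct and is essentially the same as the paper's: both decompose $\mathbb{H}(\zeta^j)$ according to the connected components of each diagram and then match coefficients with the expansion of $\exp\big(\mathbb{H}^{\mathrm c}(\exp\zeta)\big)$. The only difference is packaging: you index by set partitions $\pi\vdash\{1,\dots,j\}$ and then invoke the exponential formula as a known identity, whereas the paper indexes by ordered integer partitions $n_1\leqslant\dots\leqslant n_k$, carries the multinomial and symmetry factors $C^j_{n_1\ldots n_k}/S_{id}$ explicitly, and closes with the elementary relation $k!=S_{id}\cdot S_{diff}$, which is precisely the content of the exponential formula written out by hand.
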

\begin{proof}
Let $j$ be from $\mathbb{N}$. The first our step is to express $\mathbb{H}(\zeta^j)$ through a linear combination of connected diagram products of view $\mathbb{H}^{\mathrm{c}}(\zeta^{n_1})\cdot\ldots\cdot\mathbb{H}^{\mathrm{c}}(\zeta^{n_k})$, where $n_i\in\mathbb{N}$, $n_1+\ldots+n_k=j$, and $k\in\{1,\ldots,j\}$ is a number of connection components. 

It can be done in several stages.  Firstly, we should distribute $j$ vertices to $k$ groups. So we have multinomial coefficient $C^j_{n_1\ldots n_k}$. Then, we need to remove an internal symmetry, that follows from connection components with the same number of vertices ($n_i=n_j$). So we should divide the coefficient by a number $S_{id}(n_1,\ldots,n_k)$ of all identical permutations of the set $\{n_1,\ldots,n_k\}$. Finally we have the relation
\begin{equation}\label{po1}
\mathbb{H}(\zeta^j)=
\sum_{k=1}^{j}\,
\sum_{\substack{1\leqslant n_1\leqslant\ldots\leqslant n_k \\ n_1+\ldots+n_k=j}}
\frac{C^{j}_{n_1\ldots n_k}}{S_{id}(n_1,\ldots,n_k)}
\mathbb{H}^{\mathrm{c}}(\zeta^{n_1})\ldots\mathbb{H}^{\mathrm{c}}(\zeta^{n_k}).
\end{equation}

The second step is to decompose the right hand side of equality (\ref{po}). It also can be done in several stages. After applying the exponential definition we obtain
\begin{equation*}
\exp\big(\mathbb{H}^{\mathrm{c}}(\exp(\zeta))\big)=
1+\sum_{k\geqslant 1}\frac{1}{k!}\big(\mathbb{H}^{\mathrm{c}}(\exp(\zeta))\big)^k.
\end{equation*}

Then, we use the definition one more time
\begin{equation*}
\big(\mathbb{H}^{\mathrm{c}}(\exp(\zeta))\big)^k
=\sum_{n_1,\ldots,n_k\geqslant 1}
\frac{\mathbb{H}^{\mathrm{c}}(\zeta^{n_1})}{n_1!}\ldots\frac{\mathbb{H}^{\mathrm{c}}(\zeta^{n_k})}{n_k!},
\end{equation*}
and make the following resummations 
\begin{equation*}
\sum_{n_1,\ldots,n_k\geqslant 1}\to
\sum_{j\geqslant 0}\sum_{\substack{n_1,\ldots,n_k\geqslant1 \\ n_1+\ldots+n_k=j}}\to
\sum_{j\geqslant 0}\,\sum_{\substack{1\leqslant n_1\leqslant\ldots\leqslant n_k \\ n_1+\ldots+n_k=j}}S_{diff}(n_1,\ldots,n_k),
\end{equation*}
where $S_{diff}(n_1,\ldots,n_k)$ is a number of all different permutations of the set $\{n_1,\ldots,n_k\}$. Restoring the entire chain of calculations we get that the right hand side of equality (\ref{po}) has the form
\begin{equation*}
1+\sum_{j\geqslant 1}\frac{1}{j!}\Bigg(\sum_{k=1}^{j}\,\sum_{\substack{1\leqslant n_1\leqslant\ldots\leqslant n_k \\ n_1+...+n_k=j}} C^{j}_{n_1...n_k} \frac{S_{diff}(n_1,\ldots,n_k)}{k!}  
\mathbb{H}^{\mathrm{c}}(\zeta^{n_1})
\ldots\mathbb{H}^{\mathrm{c}}(\zeta^{n_k})\Bigg),
\end{equation*}

It is easy to see that the proof follows from the equality of (\ref{po1}) and the expression in brackets in the last formula. It means, that the proposition follows from the relation
\begin{equation}\label{q1}
k!=S_{id}(n_1,\ldots,n_k)S_{diff}(n_1,\ldots,n_k),
\end{equation}
which is obvious.
\end{proof}

\begin{lem}
Check that we can replace the vertex $\zeta$ in the theorem and in its proof with a finite sum of vertices.
\end{lem}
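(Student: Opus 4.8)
The plan is to rerun the proof of Theorem~\ref{th1} essentially word for word, checking only the handful of places where the internal structure of $\zeta$ could enter. Write $\zeta=\zeta_1+\ldots+\zeta_r$, where each $\zeta_a$ is a vertex in the sense of Definition~\ref{def2} (each carrying its own number of external lines), and define $\exp(\zeta)$ through $\zeta^j=\sum_{a_1,\ldots,a_j=1}^{r}\zeta_{a_1}\cdots\zeta_{a_j}$, i.e.\ a linear combination of products of exactly $j$ vertices. Since the operations ``connect external lines in all possible ways'', ``keep connected diagrams'', ``keep strongly connected diagrams'' in Definitions~\ref{def3}--\ref{def5} never refer to which kernel $C_{k_1\ldots k_i}$ sits at a given node, the operators $\mathbb{H}$, $\mathbb{H}_{\mathrm{n}}$, $\mathbb{H}^{\mathrm{c}}$ still act by linearity on any product of vertices, be they of the same type or not; so $\mathbb{H}\exp(\zeta)$ and $\mathbb{H}^{\mathrm{c}}(\exp(\zeta))$ make sense.

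Next I would re-derive the analogue of (\ref{po1}). Fix a choice of types $(a_1,\ldots,a_j)$. Then $\mathbb{H}(\zeta_{a_1}\cdots\zeta_{a_j})$ is the sum, over all set partitions $\{B_1,\ldots,B_k\}$ of $\{1,\ldots,j\}$, of $\prod_{i=1}^{k}\mathbb{H}^{\mathrm{c}}\big(\prod_{\ell\in B_i}\zeta_{a_\ell}\big)$, exactly as in the single-vertex case (a diagram on $\{1,\ldots,j\}$ is uniquely the disjoint union of its connected components). Summing over all type assignments and interchanging the two summations, the type sum factorises over the blocks and, for a block of size $n$, reassembles into $\mathbb{H}^{\mathrm{c}}\big((\zeta_1+\ldots+\zeta_r)^{n}\big)=\mathbb{H}^{\mathrm{c}}(\zeta^{n})$, which no longer depends on the assignment. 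Regrouping the set partitions by the multiset of block sizes $\{n_1\leqslant\ldots\leqslant n_k\}$ therefore reproduces (\ref{po1}) verbatim for the new $\zeta$, with the same coefficients $C^{j}_{n_1\ldots n_k}/S_{id}(n_1,\ldots,n_k)$. The expansion of the right-hand side of (\ref{po}) in the proof of Theorem~\ref{th1} used only the definition of $\exp$ and the linearity of $\mathbb{H}^{\mathrm{c}}$, so it carries over with no change, and the two sides match by the purely combinatorial identity (\ref{q1}).

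The one point that deserves care -- and the only place where one might fear the different vertex types spoil the count -- is the symmetry factor $S_{id}$: in passing from labelled set partitions to unordered block sizes one divides out the permutations of equal-size blocks, yet a priori two blocks of the same size may contain different vertices. This is exactly what the interchange of summations above settles: once the type sum is performed, every block of size $n$ contributes the single, type-independent expression $\mathbb{H}^{\mathrm{c}}(\zeta^{n})$, so equal-size connected components are genuinely interchangeable and $S_{id}$ (hence also $S_{diff}$) is the very same function of $(n_1,\ldots,n_k)$ as before. Therefore (\ref{po}) holds with $\zeta$ replaced by any finite sum of vertices. As a sanity check I would also expand both sides for $\zeta=\zeta_1+\zeta_2$ by bidegree and verify that the lowest nontrivial bidegree reproduces the expected mixed diagram identity, which both confirms the bookkeeping and illustrates the statement.
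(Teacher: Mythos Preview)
The paper states this as an exercise and gives no proof of its own, so there is nothing to compare against directly; the intended solution is implicit in the phrasing ``in the theorem and in its proof'', i.e.\ one is meant to rerun the argument of Theorem~\ref{th1} and observe that nothing breaks. Your proposal does precisely this and is correct. In particular, you have put your finger on the only genuinely nontrivial point: with several vertex types present, two connected components of the same size need not be interchangeable \emph{before} the type sum, so the division by $S_{id}(n_1,\ldots,n_k)$ looks suspicious. Your resolution---first summing over type assignments so that every block of size $n$ contributes the same expression $\mathbb{H}^{\mathrm{c}}(\zeta^{n})$, and only then collapsing set partitions of $\{1,\ldots,j\}$ to integer partitions---is the clean way to see that the coefficients in (\ref{po1}) are unchanged. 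The rest of the proof (expanding the right-hand side and invoking (\ref{q1})) is purely formal in $\mathbb{H}^{\mathrm{c}}(\zeta^{n})$ and carries over verbatim, as you note.
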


Let us consider one corollary. For that we introduce complete Bell polynomials, that can be defined by the following generating function
\begin{equation*}
1+\sum_{n\geqslant 1} A_{n}\left(x_{1}, x_{2}, \dots, x_n\right) \frac{t^{n}}{n !}=\exp \left[\sum_{k\geqslant 1} x_{k} \frac{t^{k}}{k!}\right],\,\,\,\mbox{and}\,\,\,A_{0}=1.
\end{equation*}

The polynomial $A_n$ has a combinatorial meaning, that shows how many possible partitions a set of $n$ elements has. For example, let us consider $A_{4}\left(x_{1}, x_{2}, x_{3}, x_{4}\right)=x_{1}^{4}+6 x_{1}^{2} x_{2}+4 x_{1} x_{3}+3 x_{2}^{2}+x_{4}$.
The coefficient 1 near the first term $x_1^4$ means, that there is only 1 way to split a set with four elements to four nonzero subsets. The second term $6x_1^2 x_2^{\phantom{2}}$ means, that there are 6 ways to split a set with four elements to three subsets of orders 1, 1, and 2. Other terms can be understood in the same way.

\begin{lemma} Let $\zeta$ be a vertex or a sum of few vertices with number of external lines, then for every $n \in \mathbb{N}\cup\{0\}$ we have
	\begin{equation*}
		\begin{split}
			\label{Hzeta}
			\mathbb{H} \left( \zeta^n \right)&=A_{n}\left(\mathbb{H}^{\mathrm{c}}\left(\zeta \right), \dots, \mathbb{H}^{\mathrm{c}}\left(\zeta^n \right) \right),\\
			\left(\mathbb{H}-\mathbb{H}^{\mathrm{c}} \right)\zeta^n&=A_n \left( \mathbb{H}^{\mathrm{c}} \left( \zeta \right), \dots, \mathbb{H}^{\mathrm{c}} \left( \zeta^{n-1} \right),0 \right).
		\end{split}
	\end{equation*}
	
\end{lemma}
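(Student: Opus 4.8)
The plan is to derive both identities directly from Theorem~\ref{th1}, treating it as a statement about formal power series and then extracting coefficients. First I would introduce a bookkeeping parameter $t$ and form the generating series $\mathbb{H}\exp(t\zeta)=\sum_{n\geqslant0}\frac{t^n}{n!}\mathbb{H}(\zeta^n)$ on the left, and on the right, by Theorem~\ref{th1} applied to the vertex $t\zeta$ (rescaling the vertex is legitimate by linearity of $\mathbb{H}^{\mathrm{c}}$ and the exercise preceding this lemma, which allows a finite sum of vertices, hence in particular the scalar multiple $t\zeta$), obtain $\exp\bigl(\mathbb{H}^{\mathrm{c}}(\exp(t\zeta))\bigr)=\exp\bigl(\sum_{k\geqslant1}\frac{t^k}{k!}\mathbb{H}^{\mathrm{c}}(\zeta^k)\bigr)$, using $\mathbb{H}^{\mathrm{c}}1=0$ from Definition~\ref{def5} to kill the constant term. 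Setting $x_k:=\mathbb{H}^{\mathrm{c}}(\zeta^k)$, the right hand side is precisely the generating function defining the complete Bell polynomials, so
\begin{equation*}
1+\sum_{n\geqslant1}\frac{t^n}{n!}\mathbb{H}(\zeta^n)=1+\sum_{n\geqslant1}A_n\bigl(\mathbb{H}^{\mathrm{c}}(\zeta),\ldots,\mathbb{H}^{\mathrm{c}}(\zeta^n)\bigr)\frac{t^n}{n!}.
\end{equation*}
Equating coefficients of $t^n/n!$ yields the first identity; the case $n=0$ is the convention $\mathbb{H}1=1=A_0$.

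For the second identity I would observe that $A_n(x_1,\ldots,x_n)=x_n+(\text{terms involving only }x_1,\ldots,x_{n-1})$, since in the partition expansion the single term corresponding to the one-block partition of an $n$-set contributes $x_n$, and every other partition has all blocks of size $\leqslant n-1$. Hence $A_n(\mathbb{H}^{\mathrm{c}}(\zeta),\ldots,\mathbb{H}^{\mathrm{c}}(\zeta^{n-1}),0)$ equals $A_n(\mathbb{H}^{\mathrm{c}}(\zeta),\ldots,\mathbb{H}^{\mathrm{c}}(\zeta^{n}))-\mathbb{H}^{\mathrm{c}}(\zeta^n)$, which by the first identity is $\mathbb{H}(\zeta^n)-\mathbb{H}^{\mathrm{c}}(\zeta^n)=(\mathbb{H}-\mathbb{H}^{\mathrm{c}})\zeta^n$. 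Alternatively, and perhaps more transparently combinatorially, one can rerun the decomposition in the proof of Theorem~\ref{th1}: $(\mathbb{H}-\mathbb{H}^{\mathrm{c}})\zeta^n$ retains exactly the partitions of the $n$ vertices into $k\geqslant2$ connected blocks (the $k=1$ term is the strongly-connected-as-a-whole piece $\mathbb{H}^{\mathrm{c}}(\zeta^n)$ that gets subtracted), and this is precisely what setting $x_n=0$ does to $A_n$, as $x_n$ is the unique monomial of block-count $1$.

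The main obstacle, such as it is, is bookkeeping rather than conceptual: one must be careful that the substitution $\zeta\mapsto t\zeta$ genuinely falls under the scope of Theorem~\ref{th1} (so one should invoke the preceding exercise on replacing $\zeta$ by a finite sum of vertices, or simply note that $\mathbb{H}$, $\mathbb{H}^{\mathrm{c}}$ act linearly and that scaling a vertex scales a diagram with $m$ vertices by $t^m$), and that the identification of the right hand side with the Bell generating function uses exactly $\mathbb{H}^{\mathrm{c}}1=0$ so that the inner sum starts at $k=1$ as the definition requires. Once those points are noted, both statements drop out by comparing coefficients; no further computation is needed.
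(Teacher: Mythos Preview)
Your proposal is correct and follows essentially the same route as the paper: introduce an auxiliary parameter, apply Theorem~\ref{th1} to $t\zeta$, recognize the Bell-polynomial generating function, and extract coefficients; for the second identity both you and the paper use the decomposition $A_n(x_1,\ldots,x_n)=A_n(x_1,\ldots,x_{n-1},0)+x_n$. Your additional care about $\mathbb{H}^{\mathrm{c}}1=0$ and the scope of Theorem~\ref{th1} for $t\zeta$ is appropriate but does not change the argument.
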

\begin{proof} Let us introduce an auxiliary parameter $q$ into the relation from Theorem \ref{th1}.
Then we have the chain of equalities
\begin{align*}
\sum_{n\geqslant 0} \frac{\mathbb{H}(\zeta^n)}{n!} q^n&=
\mathbb{H}\exp(q\zeta)=\exp\big(\mathbb{H}^{\mathrm{c}}(\exp(q\zeta))\big)\\
&=\mathrm{exp} \left( \sum_{n\geqslant 0}\frac{\mathbb{H}^{\mathrm{c}} \left( \zeta^n \right) }{n!}q^n\right)=\sum_{n\geqslant 0}\frac{A_{n}\left(\mathbb{H}^{\mathrm{c}} \left(\zeta \right), \dots, \mathbb{H}^{\mathrm{c}} \left(\zeta^n \right) \right)}{n!}q^n,
\end{align*}
from which the first statement follows.

The second relation follows from the first one and from the polynomial property: for $n \geqslant 1$ we have $A_n\left(x_1, \dots, x_n \right)=A_n\left( x_1, \dots, x_{n-1},0 \right)+x_n$.
\end{proof}
Considering $A_4$, it is easy to check the relation from Lemma \ref{lem7}.

\section{Generating functions}\label{gen}
\subsection{Definitions}
In the last sections we discovered the functional with cubic potential, which is actually an infinite product of integrals $I(\alpha)$, see (\ref{intI}). Now we are going to expand our definition in the following way
\begin{equation}
\label{prf1}
Z(\beta,\alpha)=\frac{1}{\sqrt{2\pi}}\int_{\mathbb{R}}e^{-\frac{x^2}{2}}e^{i\alpha x^3}e^{\beta x}\,dx=\sum_{k\geqslant 0}\frac{\beta^k}{k!}Z_k(\alpha),
\end{equation}
where $\alpha\in\mathbb{R}$, $\beta\in\mathbb{C}$, and we have used the Taylor expansion of the exponential.
\begin{lem}
Write an explicit formula for $Z_k(\alpha)$.
\end{lem}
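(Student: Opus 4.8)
The plan is to read off $Z_k(\alpha)$ directly from its definition in (\ref{prf1}): differentiating $Z(\beta,\alpha)$ $k$ times in $\beta$ and setting $\beta=0$ kills every term of the Taylor series except the $k$-th, so that
\begin{equation*}
Z_k(\alpha)=\frac{\partial^k}{\partial\beta^k}Z(\beta,\alpha)\Big|_{\beta=0}=\frac{1}{\sqrt{2\pi}}\int_{\mathbb{R}}x^k\,e^{-x^2/2}\,e^{i\alpha x^3}\,dx .
\end{equation*}
It then remains to evaluate this one-dimensional Gaussian-type integral as a series in $\alpha$, repeating almost verbatim the computation from the proof of Lemma \ref{lem10}.

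First I would expand $e^{i\alpha x^3}=\sum_{n\ge 0}(i\alpha)^n x^{3n}/n!$ and interchange summation and integration (understood, as everywhere in the paper, at the level of the asymptotic series in $\alpha$), which reduces everything to the standard Gaussian moments $\mu_m=\tfrac{1}{\sqrt{2\pi}}\int_{\mathbb{R}}x^m e^{-x^2/2}\,dx$; these vanish for odd $m$ and equal $(m-1)!!=m!/\bigl(2^{m/2}(m/2)!\bigr)$ for even $m$. Since $k+3n$ is even exactly when $n$ has the same parity as $k$, this already yields
\begin{equation*}
Z_k(\alpha)=\sum_{\substack{n\ge 0\\ n\equiv k\ (\mathrm{mod}\ 2)}}\frac{(i\alpha)^n}{n!}\,(k+3n-1)!! .
\end{equation*}

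Second I would make the parity split explicit in order to obtain genuinely closed forms with ordinary factorials. For even $k=2p$ one writes $n=2m$, uses $(i\alpha)^{2m}=(-1)^m\alpha^{2m}$ and $2^{p+3m}=2^p 8^m$, and gets
\begin{equation*}
Z_{2p}(\alpha)=\frac{1}{2^{p}}\sum_{m\ge 0}\frac{(-1)^m}{8^m}\,\frac{(2p+6m)!}{(2m)!\,(p+3m)!}\,\alpha^{2m},
\end{equation*}
while for odd $k=2p+1$ one sets $n=2m+1$ and gets the analogous expression
\begin{equation*}
Z_{2p+1}(\alpha)=\frac{i}{2^{p+2}}\sum_{m\ge 0}\frac{(-1)^m}{8^m}\,\frac{(2p+6m+4)!}{(2m+1)!\,(p+3m+2)!}\,\alpha^{2m+1}.
\end{equation*}
As a consistency check, $p=0$ in the first line reproduces the single-factor series of Lemma \ref{lem10}, i.e.\ $Z_0(\alpha)=I(\alpha)$. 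I would also mention a second, more compact route: the integration-by-parts recursion $I_n=(n-1)I_{n-2}+3i\alpha I_{n+1}$ together with $\partial_\alpha I_n=iI_{n+3}$ (both already used in the proof of Lemma \ref{lem3}) eliminates all moments in favour of $I_0$ and $I_1$, giving $Z_1(\alpha)=3i\alpha I(\alpha)+9i\alpha^2 I'(\alpha)$ and, more generally, every $Z_k$ as a differential polynomial in $I(\alpha)$ — hence in the Airy function through (\ref{Ial}).

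The only point requiring a word of care is exactly the interchange of sum and integral: the $\alpha$-series of $I(\alpha)$, and therefore of each $Z_k(\alpha)$, is asymptotic rather than convergent, so all displayed identities are to be read as equalities of asymptotic expansions — the standpoint already fixed by Lemma \ref{lem2}. Beyond that, the derivation is routine bookkeeping with double factorials and parities, with no real obstacle.
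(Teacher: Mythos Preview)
Your solution is correct: the paper leaves this as an exercise and gives no proof, and your derivation is exactly the intended one --- it simply repeats, with the extra factor $x^k$, the Gaussian-moment computation the paper carries out for $k=0$ in Lemma~\ref{lem10}. The parity split, the double-factorial bookkeeping, and the check $Z_0=I$ are all correct, and your remark that each $Z_k$ can alternatively be written as a differential polynomial in $I(\alpha)$ via the recursion of Lemma~\ref{lem3} is a nice bonus.
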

The next function can be obtained from the previous one by using the logarithm operation
\begin{equation}
\label{prf2}
W(\beta,\alpha)=\mathrm{ln}\big(Z(\beta,\alpha)\big)=\sum_{k\geqslant 0}
\frac{\beta^k}{k!}W_k(\alpha).
\end{equation}
\begin{lem}\label{d3}
Check the validity of the following decompositions:
\begin{align*}
Z(0,\alpha)&=1-\frac{15}{2}\alpha^2+\frac{3465}{8}\alpha^4+O(\alpha^6), \\
W(0,\alpha)&=-\frac{15}{2}\alpha^2+405\alpha^4+O(\alpha^6).
\end{align*}
\end{lem}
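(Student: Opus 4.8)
The plan is to derive the $Z$-expansion directly from the one-dimensional integral and then get the $W$-expansion by expanding the logarithm.

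First I would note that setting $\beta=0$ in (\ref{prf1}) gives $Z(0,\alpha)=\tfrac{1}{\sqrt{2\pi}}\int_{\mathbb{R}}e^{-x^2/2}e^{i\alpha x^3}\,dx=I(\alpha)$, which is exactly the integral of (\ref{intI}). Hence Lemma \ref{lem10}, specialized to a single index so that the infinite product collapses to one factor, yields the asymptotic series
\begin{equation*}
Z(0,\alpha)=\sum_{n\geqslant 0}\frac{(-1)^n}{8^n}\,\frac{(6n)!}{(2n)!\,(3n)!}\,\alpha^{2n}.
\end{equation*}
It then only remains to evaluate the first three coefficients: for $n=0$ it is $1$; for $n=1$ it is $-\tfrac18\cdot\tfrac{6!}{2!\,3!}=-\tfrac{60}{8}=-\tfrac{15}{2}$; for $n=2$ it is $\tfrac{1}{64}\cdot\tfrac{12!}{4!\,6!}=\tfrac{27720}{64}=\tfrac{3465}{8}$. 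This gives the first claimed decomposition, with the $O(\alpha^6)$ term absorbing all $n\geqslant 3$ contributions.

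For the second decomposition I would write $Z(0,\alpha)=1+u$ with $u=-\tfrac{15}{2}\alpha^2+\tfrac{3465}{8}\alpha^4+O(\alpha^6)=O(\alpha^2)$ and use $\ln(1+u)=u-\tfrac12u^2+O(u^3)$. Since $u^2=\tfrac{225}{4}\alpha^4+O(\alpha^6)$ and $u^3=O(\alpha^6)$, one obtains
\begin{equation*}
W(0,\alpha)=\ln Z(0,\alpha)=-\frac{15}{2}\alpha^2+\Bigl(\frac{3465}{8}-\frac{225}{8}\Bigr)\alpha^4+O(\alpha^6)=-\frac{15}{2}\alpha^2+405\,\alpha^4+O(\alpha^6),
\end{equation*}
which is what was asserted.

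There is essentially no obstacle here: once Lemma \ref{lem10} is available the computation is pure bookkeeping with factorials and the log series. The only points worth a remark are that Lemma \ref{lem10} may legitimately be applied to a single factor, and that all the identities above are to be read at the level of asymptotic (formal) power series in $\alpha$, consistent with the status of Lemma \ref{lem10}. Alternatively one could reproduce the same coefficients by inserting a power-series ansatz into the differential equation of Lemma \ref{lem3} together with $I(0)=1$ and the evenness of $I$, but that route is more laborious, and the direct one above is cleaner.
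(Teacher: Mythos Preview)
Your argument is correct. Since this statement is an exercise in the paper and no proof is supplied there, there is nothing to compare against; your route via Lemma~\ref{lem10} for the $Z$-coefficients followed by the Taylor expansion of the logarithm for $W$ is exactly the intended computation, and your arithmetic checks out.
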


A third functional has a little bit complicated definition, because we should do Legendre transformation. So firstly we consider a differential equation of the form
\begin{equation}
\label{eq}
\theta=\frac{dW(\alpha,\beta)}{d\beta},
\end{equation}
where $\theta$ is an auxiliary parameter. Of course, a solution exists not for all values of the parameter $\theta$, so we assume that it has a suitable good value. Then we suppose, that we have solved the equation with respect to $\beta=\beta(\theta,\alpha)$, which is the function of the parameters $\theta$ and $\alpha$.

Now we are ready to define the last function by the formula
\begin{equation}
\label{gadef}
\Gamma(\theta,\alpha)=W(\beta(\theta,\alpha),\alpha)-\theta\beta(\theta,\alpha),
\end{equation}
that has the following series decomposition in powers of the parameter $\theta$
\begin{equation}
\label{prf3}
\Gamma(\theta,\alpha)=\sum_{k\geqslant 0}\frac{\theta^k}{k!}\Gamma_k(\alpha).
\end{equation}

We see that Legendre transformation includes the transition from the variables $(\beta,\alpha)$ to $(\theta,\alpha)$. Also we note some particular values:
\begin{equation}
\label{partic}
Z(0,0)=1,\,\,\,\,W(0,0)=0,\,\,\,\,\Gamma(0,\alpha)=W(\beta(0,\alpha),\alpha).
\end{equation}

\subsection{Equations for Z}
\begin{lemma}
\label{lem4}
Let $\alpha\in\mathbb{R}$ and $\beta\in\mathbb{C}$. Then the functional $Z(\beta,\alpha)$ satisfies the equations
\begin{align}
\label{Zab}
\Big[(1-3i\alpha\beta)\partial_\beta-i9\alpha^2\partial_\alpha-
(3i\alpha+\beta)\Big]Z(\beta,\alpha)&=0,
\\
\label{dshZ}
\Big[3i\alpha\partial^2_\beta-\partial_{\beta}^{\phantom{2}}+\beta\Big]Z(\beta,\alpha)&=0.
\end{align}
\end{lemma}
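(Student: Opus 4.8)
The plan is to derive both identities directly from the integral representation (\ref{prf1}) by integration by parts in $x$. Write $f(x)=f(x;\beta,\alpha)=e^{-x^2/2+i\alpha x^3+\beta x}$, so that $Z(\beta,\alpha)=\frac{1}{\sqrt{2\pi}}\int_{\mathbb R}f(x)\,dx$. Since the Gaussian factor $e^{-x^2/2}$ dominates the cubic and linear terms in the exponent for real $x$, all the moment integrals $\int_{\mathbb R}x^n f(x)\,dx$ converge absolutely and differentiation under the integral sign in $\beta$ and $\alpha$ is legitimate. In particular $\partial_\beta Z=\frac{1}{\sqrt{2\pi}}\int x f$, $\partial_\beta^2 Z=\frac{1}{\sqrt{2\pi}}\int x^2 f$, and $\partial_\alpha Z=\frac{i}{\sqrt{2\pi}}\int x^3 f$, i.e. $\int_{\mathbb R} x^3 f = -i\sqrt{2\pi}\,\partial_\alpha Z$.

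Next I would record the key computation $\partial_x f(x)=(-x+3i\alpha x^2+\beta)f(x)$, together with the fact that $p(x)f(x)\to 0$ as $x\to\pm\infty$ for any polynomial $p$ (again by the Gaussian damping), so that $\int_{\mathbb R}\partial_x\big(p(x)f(x)\big)\,dx=0$. Taking $p\equiv 1$ gives $\int_{\mathbb R}(-x+3i\alpha x^2+\beta)f\,dx=0$, which, after dividing by $\sqrt{2\pi}$ and translating moments into derivatives of $Z$, is precisely $-\partial_\beta Z+3i\alpha\,\partial_\beta^2 Z+\beta Z=0$, i.e. equation (\ref{dshZ}).

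To obtain (\ref{Zab}) I would apply the same vanishing-total-derivative identity with $p(x)=x$. Expanding $\partial_x(xf)=f+x\,\partial_x f=\big(1-x^2+3i\alpha x^3+\beta x\big)f$ and integrating yields, using $3i\alpha\int x^3 f=3i\alpha(-i)\sqrt{2\pi}\,\partial_\alpha Z=3\alpha\sqrt{2\pi}\,\partial_\alpha Z$, the relation $Z-\partial_\beta^2 Z+3\alpha\,\partial_\alpha Z+\beta\,\partial_\beta Z=0$. This still involves $\partial_\beta^2 Z$, which I would eliminate via (\ref{dshZ}) in the form $\partial_\beta^2 Z=\frac{1}{3i\alpha}\big(\partial_\beta Z-\beta Z\big)$. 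Substituting and multiplying through by $3i\alpha$ collects the terms into $(1-3i\alpha\beta)\partial_\beta Z-9i\alpha^2\,\partial_\alpha Z-(3i\alpha+\beta)Z=0$, which is (\ref{Zab}).

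The argument is essentially bookkeeping; the only points that need care are the sign and factor of $i$ in converting the $x^3$-moment into $\partial_\alpha Z$, and the (routine) justifications that the boundary terms vanish and that differentiation under the integral is permitted, both of which follow from the Gaussian weight. An alternative would be to prove the two relations termwise from the series $\sum_{k\geqslant 0}\beta^k Z_k(\alpha)/k!$, but the integration-by-parts route is shorter and exposes the structure. As a consistency check, the specialization $Z(0,\alpha)=I(\alpha)$ links (\ref{dshZ})–(\ref{Zab}) with the third-order equation of Lemma \ref{lem3}.
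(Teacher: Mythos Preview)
Your proof is correct and follows essentially the same route as the paper: both arguments rest on the integration-by-parts identity $\int_{\mathbb R}\partial_x\big(x^{n-1}f\big)\,dx=0$, which in the paper is packaged as the recurrence $Z_n=(n-1)Z_{n-2}+3i\alpha Z_{n+1}+\beta Z_{n-1}$ for the moments $Z_n=\frac{1}{\sqrt{2\pi}}\int x^n f$. Your choices $p\equiv 1$ and $p(x)=x$ correspond exactly to the cases $n=1$ and $n=2$ of that recurrence; the only cosmetic difference is that the paper first combines the two instances to obtain (\ref{Zab}) and then states that (\ref{dshZ}) follows in the same manner, whereas you derive (\ref{dshZ}) first and use it to eliminate $\partial_\beta^2 Z$ from the $n=2$ relation.
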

\begin{proof} Let us introduce a set of auxiliary functions as follows
\begin{equation*}
Z_n=\frac{1}{\sqrt{2\pi}}\int_{\mathbb{R}}x^n e^{-\frac{x^2}{2}}e^{i\alpha x^3}e^{\beta x}\,dx,
\end{equation*}
where $n\in\mathbb{N}\cup\{0\}$ and $Z_0=Z(\beta,\alpha)$. Using the integration by parts we get that the functions satisfy the relations
\begin{equation*}
\label{Z_n}
Z_{n}=(n-1)Z_{n-2}+3i\alpha Z_{n+1}+\beta Z_{n-1},\,\,\,\,\,\,n\geqslant 1,
\end{equation*}
where $Z_{-1}=0$.
So we can write the following chain of equalities
\begin{equation*}
Z_1=3i\alpha Z_2+\beta Z_0=(\beta+3i\alpha)Z_0+3i\alpha\beta Z_1-9\alpha^2 Z_3,
\end{equation*}
from which, after using 
$\partial_\alpha Z_0=i Z_3$, and $\partial_\beta Z_0=Z_1$, the first equality follows. The second equality can be obtain by integration by parts in the same manner.
\end{proof}
\begin{definition}
\label{def6}
Let us introduce an action in the form $S[x]=-x^2/2+i\al x^3$. Then the second equation from Lemma \ref{lem4} can be rewritten in the form
\begin{equation}\label{act}
\Big[S'[\partial_\beta]+\beta\Big]Z(\beta,\alpha)=0,
\end{equation}
and is called Dyson--Schwinger equation, see \cite{vasiliev-1998}.
\end{definition}

\begin{figure}[h]
	\center{\includegraphics[width=0.3\linewidth]{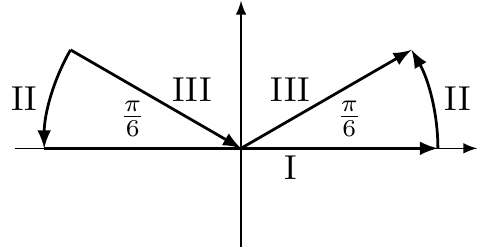}}
	\caption{Contours for Cauchy integral theorem.}
	\label{pic2}
\end{figure}

Let us express the function $Z(\beta,\alpha)$ in terms of known special functions. For this purposes in integral (\ref{prf1}) we do change of variable in the form $\alpha=z(3\alpha)^{-1/3}+(6i\alpha)^{-1}$. Then we have
\begin{equation}
Z(\beta,\alpha)=\frac{1}{\sqrt{2\pi}}
\frac{e^{\frac{1}{108 \alpha^2}-i\frac{\beta}{6\alpha}}}{(3\alpha)^{1/3}}
\int_{\mathbb{R}}e^{i(\frac{z^3}{3}+pz)}\,dz,
\end{equation}
where $p=(3\alpha)^{-1/3}\big(\frac{1}{12 \alpha}-i\beta\big)$. After that, using the Cauchy integral theorem (see \cite{walker-1975}) for contours in Figure \ref{pic2}
and the definition for Airy functions 
\begin{equation}
\int_{\mathrm{I}}e^{i(\frac{z^3}{3}+pz)}dz=
\int_{\mathrm{III}}e^{i(\frac{z^3}{3}+pz)}dz=2\pi\mathrm{Ai}(p),
\end{equation}
we obtain the final formula
\begin{equation}
Z(\beta,\alpha)=\sqrt{2\pi}
\frac{e^{\frac{1}{108 \alpha^2}-i\frac{\beta}{6\alpha}}}{(3\alpha)^{1/3}}
\mathrm{Ai}(p).
\end{equation}
\begin{lem}
Show that it is possible to apply the Cauchy integral theorem.
\end{lem}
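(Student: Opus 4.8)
The plan is to exhibit, for each pair of consecutive contours appearing in Figure~\ref{pic2}, a closed contour to which Cauchy's integral theorem applies directly, and then to let the auxiliary (radial or transversal) pieces recede to infinity. The starting observation is that the integrand $f(z)=e^{i(z^{3}/3+pz)}$ is entire in $z$ for every fixed $p\in\mathbb{C}$, being the composition of $\exp$ with the polynomial $i(z^{3}/3+pz)$; hence it has no singularities and the integral of $f$ around any closed rectifiable contour vanishes. What must then be verified is purely an estimate: that the contributions of the arcs, respectively of the transversal segments, joining the finite parts of the contours $\mathrm{I}$, $\mathrm{II}$, $\mathrm{III}$ tend to $0$ as the truncation radius $R\to\infty$. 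Passing to the limit in ``closed-contour integral $=0$'' then yields $\int_{\mathrm{I}}f=\int_{\mathrm{III}}f$, which together with the definition of the Airy function is the assertion used in the text.

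For the estimate I would pass to polar coordinates. Writing $z=re^{i\theta}$ one gets $|e^{iz^{3}/3}|=e^{-(r^{3}/3)\sin 3\theta}$ and $|e^{ipz}|=e^{-r\,\mathrm{Im}(pe^{i\theta})}\leqslant e^{|p|r}$, so that
\begin{equation*}
|f(z)|\leqslant \exp\!\left(|p|\,r-\tfrac{r^{3}}{3}\sin 3\theta\right).
\end{equation*}
Thus $f$ decays super-exponentially along any ray lying in one of the three ``valleys'' $\theta\in(0,\tfrac{\pi}{3})\cup(\tfrac{2\pi}{3},\pi)\cup(\tfrac{4\pi}{3},\tfrac{5\pi}{3})$ — exactly the directions in which the Airy contours run — the linear growth $e^{|p|r}$ produced by the shift being harmless there. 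On a circular arc of radius $R$ contained in a closed subsector of a valley one has $\sin 3\theta\geqslant c>0$, so the arc integral is $\leqslant\pi R\,e^{|p|R-cR^{3}/3}\to0$. Near the edge of a valley, where $\sin 3\theta$ vanishes, I would use the convexity estimate $\sin 3\theta\geqslant\tfrac{6}{\pi}\theta$ for $\theta\in[0,\tfrac{\pi}{6}]$ (and its mirror image), which after integration in $\theta$ still yields an $R$-uniform bound tending to $0$ — the classical Jordan-lemma argument. For a transversal segment $\{\mathrm{Re}\,z=\pm R\}$ joining two nearly horizontal contours the estimate is even simpler: there $z=\pm R+iy$ with $y$ in a bounded interval, $\mathrm{Im}(z^{3})=3R^{2}y-y^{3}$, hence $|e^{iz^{3}/3}|=e^{-(R^{2}y-y^{3}/3)}$ decays like $e^{-R^{2}y}$.

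The single point that genuinely needs care — and which I would flag as the main obstacle — is the order of limits forced by the fact that along the real direction the cubic phase gives no decay (and, for complex $p$, the factor $e^{ipz}$ is not even bounded), so one must keep the truncating arcs inside the valleys rather than swing a full circular arc across the real direction. This is also why, after the substitution $x=z(3\alpha)^{-1/3}-i/(6\alpha)$, the contour runs along a horizontal line $z=t+ic$ with $c>0$: there $|e^{iz^{3}/3}|=e^{-c\,t^{2}+c^{3}/3}$, so the integrand has genuine Gaussian decay in $t$, the integral converges absolutely, and every integral in the chain of deformations is well defined. Combining this with the arc and segment bounds of the previous paragraph and with the vanishing of the closed-loop integrals completes the argument.
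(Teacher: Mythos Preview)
The paper leaves this statement as an exercise for the reader and supplies no proof of its own, so there is nothing to compare your argument against; what can be said is that your approach is the standard one and is correct.

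Your identification of the valleys $\sin 3\theta>0$, the entire-ness of the integrand, the Jordan-type estimate on arcs, and especially the observation that the change of variable lands the contour on the line $\mathrm{Im}\,z=c>0$ (so that the arc endpoints sit at angles $\approx c/R$ and $\pi-c/R$, strictly inside the valleys) are exactly the ingredients needed. The last point is the crucial one: with the crude bound $|e^{ipz}|\leqslant e^{|p|r}$ alone, a Jordan arc touching $\theta=0$ would give a contribution of order $e^{|p|R}/R^{2}$, which diverges; it is precisely the strictly positive starting angle $c/R$ that supplies the extra factor $e^{-2cR^{2}/\pi}$ and forces the arc integral to zero. You flag this issue in your last paragraph, so the argument is complete. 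The transversal-segment estimate you give is correct when both horizontal lines lie at strictly positive height; it would need refinement (a more careful treatment of the $e^{ipz}$ factor) if one of the contours were the real axis itself, but for the deformation actually required here that situation does not arise.
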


\subsection{Equations for W}
\begin{lemma}
\label{lem5}
Let $\alpha\in\mathbb{R}$ and $\beta\in\mathbb{C}$. Then the functional $W(\beta,\alpha)$ satisfies the equations
\begin{equation}
\label{baW}
\Big[(1-3i\alpha\beta)\partial_\beta-i9\alpha^2\partial_\alpha\Big]W(\beta,\alpha)-(3i\alpha+\beta)=0,
\end{equation}
\begin{equation}
\label{bW}
3i\alpha\Big[(\partial^{\phantom{2}}_\beta W(\beta,\alpha))^2+\partial^{2}_\beta W(\beta,\alpha)\Big]-\partial^{\phantom{2}}_\beta W(\beta,\alpha)+\beta=0.
\end{equation}
\end{lemma}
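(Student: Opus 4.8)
The plan is to derive both equations for $W(\beta,\alpha)$ directly from the corresponding equations for $Z(\beta,\alpha)$ in Lemma \ref{lem4}, using the defining relation $W=\ln Z$, i.e.\ $Z=e^{W}$. This is the natural approach since $Z$ appears as an overall (nonvanishing) exponential factor, so every derivative of $Z$ produces $Z$ times a polynomial in derivatives of $W$, and the common factor $Z$ can be cancelled.

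For the first equation \eqref{baW}, I would start from \eqref{Zab} and substitute $Z=e^{W}$. Then $\partial_\beta Z = (\partial_\beta W)e^{W}$ and $\partial_\alpha Z = (\partial_\alpha W)e^{W}$, so \eqref{Zab} becomes
\begin{equation*}
\Big[(1-3i\alpha\beta)\partial_\beta W - i9\alpha^2\partial_\alpha W - (3i\alpha+\beta)\Big]e^{W}=0.
\end{equation*}
Since $e^{W}=Z\neq 0$ (it is an Airy-type expression, nonvanishing for the relevant range, or simply $Z(0,0)=1$ and continuity), we may divide by it and obtain \eqref{baW} immediately.

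For the second equation \eqref{bW}, I would start from the Dyson--Schwinger equation \eqref{dshZ}, namely $3i\alpha\,\partial_\beta^2 Z - \partial_\beta Z + \beta Z = 0$. Writing $Z=e^{W}$ and using $\partial_\beta^2 e^{W} = \big((\partial_\beta W)^2 + \partial_\beta^2 W\big)e^{W}$, this becomes
\begin{equation*}
\Big[3i\alpha\big((\partial_\beta W)^2+\partial_\beta^2 W\big) - \partial_\beta W + \beta\Big]e^{W}=0,
\end{equation*}
and again cancelling $e^{W}$ yields \eqref{bW}.

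There is essentially no obstacle here beyond bookkeeping; the only point requiring a word of justification is that $Z(\beta,\alpha)$ is nonzero so that division by $e^{W}$ is legitimate (equivalently, that $\ln Z$ is well defined in a neighborhood of the relevant parameter values, which is implicit in \eqref{prf2}). One could also remark that \eqref{bW} has the familiar interpretation of a Riccati-type / Hamilton--Jacobi equation $S'[\partial_\beta W] + \beta + 3i\alpha\,\partial_\beta^2 W = 0$ with the action $S$ of Definition \ref{def6}, which is the classical equation of motion corrected by the one-loop term $3i\alpha\,\partial_\beta^2 W$; but this is a comment rather than part of the proof.
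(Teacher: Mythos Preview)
Your proof is correct and follows essentially the same approach as the paper: substitute $Z=e^{W}$ into the two equations of Lemma~\ref{lem4}, use $Z^{-1}\partial_\beta^2 Z=(\partial_\beta W)^2+\partial_\beta^2 W$, and cancel the nonvanishing factor $e^{W}$.
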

\begin{proof}
Both equations follows from the equalities of Lemma \ref{lem4} by substitution $Z=\exp(W)$ and using the relation
$Z^{-1}\partial^{2}_\beta Z=(\partial^{\phantom{2}}_\beta W)^2+\partial^{2}_\beta W$.
\end{proof}

\subsection{Functions $\Gamma_0$ and $\Gamma_1$}
\begin{lemma}
\label{gammas}
Let $\Gamma(\theta,\alpha)$, $\Gamma_0(\alpha)$, and $\Gamma_1(\alpha)$ be from decomposition (\ref{prf3}). Let also
\begin{equation}
\label{bet}
\tilde{\beta}(\alpha)=\beta(0,\alpha),\,\,\,\,\,\,
w(\alpha)=W(\tilde{\beta}(\alpha),\alpha).
\end{equation}
Then we have 
\begin{equation}
\label{gam}
\Gamma_0(\alpha)=w(\alpha),\,\,\,\,\,\,
\Gamma_1(\alpha)=-\tilde{\beta}(\alpha).
\end{equation}
\end{lemma}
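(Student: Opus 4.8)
The plan is to extract both identities directly from the definition of the Legendre transform, equation~(\ref{gadef}), combined with the Taylor decomposition~(\ref{prf3}). Indeed, (\ref{prf3}) gives $\Gamma_0(\alpha)=\Gamma(0,\alpha)$ and $\Gamma_1(\alpha)=\partial_\theta\Gamma(\theta,\alpha)\big|_{\theta=0}$, so it suffices to evaluate $\Gamma$ and its first $\theta$-derivative at $\theta=0$, using that $\beta(\theta,\alpha)$ is the solution of (\ref{eq}).

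For $\Gamma_0$ I would simply set $\theta=0$ in (\ref{gadef}). The term $-\theta\beta(\theta,\alpha)$ then vanishes, and since $\beta(0,\alpha)=\tilde\beta(\alpha)$ by~(\ref{bet}) we obtain $\Gamma_0(\alpha)=\Gamma(0,\alpha)=W(\tilde\beta(\alpha),\alpha)=w(\alpha)$, which is the first claim (and is consistent with the particular value recorded in~(\ref{partic})).

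For $\Gamma_1$ I would differentiate (\ref{gadef}) with respect to $\theta$ at fixed $\alpha$, by the chain rule:
\[
\partial_\theta\Gamma=\bigl(\partial_\beta W\bigr)\bigl(\beta(\theta,\alpha),\alpha\bigr)\,\partial_\theta\beta(\theta,\alpha)-\beta(\theta,\alpha)-\theta\,\partial_\theta\beta(\theta,\alpha).
\]
The crucial input is that $\beta(\theta,\alpha)$ solves (\ref{eq}), i.e.\ $\bigl(\partial_\beta W\bigr)\bigl(\beta(\theta,\alpha),\alpha\bigr)=\theta$; substituting this, the two $\partial_\theta\beta$ contributions cancel and we are left with $\partial_\theta\Gamma=-\beta(\theta,\alpha)$. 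Setting $\theta=0$ yields $\Gamma_1(\alpha)=-\beta(0,\alpha)=-\tilde\beta(\alpha)$, the second claim.

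I do not expect a genuine obstacle: this is the standard envelope computation for a Legendre transform. The only point requiring care is the differentiability of $\theta\mapsto\beta(\theta,\alpha)$ near $\theta=0$, which is supplied by the implicit function theorem applied to~(\ref{eq}) provided $\partial_\beta^2 W$ does not vanish at $\beta=\tilde\beta(\alpha)$; this is already covered by the standing assumption that $\theta$ takes a ``suitable good value'', so I would invoke it without further comment. As a cheap consistency check one can compare with the expansions in Exercise~\ref{d3} and the particular values~(\ref{partic}).
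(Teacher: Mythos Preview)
Your argument is correct and matches the paper's own proof: both evaluate $\Gamma$ at $\theta=0$ for $\Gamma_0$, and both derive the identity $\partial_\theta\Gamma=-\beta(\theta,\alpha)$ from~(\ref{gadef}) together with the defining relation~(\ref{eq}) before setting $\theta=0$ to get $\Gamma_1$. The only cosmetic difference is that the paper phrases the second computation via total differentials rather than the chain rule, but the cancellation and conclusion are identical.
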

\begin{proof}
We are going to use the Legendre transformation in the form (\ref{gadef}). The first equality follows from it after substitution $\theta=0$ and using the relation $\Gamma_0(\alpha)=\Gamma(0,\alpha)$ from (\ref{prf3}). To obtain the second equality we should consider the differentials
\begin{align*}
d\Gamma(\theta,\alpha)&=\pp{W(\beta,\alpha)}{\beta}d\beta(\theta,\alpha)+\pp{W(\beta,\alpha)}{\alpha}d\alpha-\be(\theta,\alpha)d\theta-\theta d\be(\theta,\alpha),\\
\beta(\theta,\alpha)&=\pp{\beta(\theta,\alpha)}{\theta}d\theta+\pp{\beta(\theta,\alpha)}{\alpha}d\alpha,
\end{align*}
from which, by using equation (\ref{eq}), new formula follows
\begin{equation}
d\Gamma(\theta,\alpha)=-\beta(\theta,\alpha)d\theta+\pp{W(\beta,\alpha)}{\alpha}d\alpha\Rightarrow\frac{\partial\Gamma(\theta,\alpha)}{\partial\theta}=-\beta(\theta,\alpha).
\end{equation}

To see lemma statement we need to note that $\Gamma_1(\alpha)=\frac{\partial\Gamma(\theta,\alpha)}{\partial\theta}\big|_{\theta=0}
=-\tilde{\beta}(\alpha)$.
\end{proof}
\begin{lemma}
Under the conditions described above we have
\begin{equation}
\label{watib}
9i\alpha^2\pp{w(\alpha)}{\alpha}+3i\alpha+\tib(\alpha)=0,
\end{equation}
\begin{equation}
\label{abw}
\bigg(1-3i\alpha\tilde{\beta}+9i\alpha^2\pp{\tib}{\alpha}\bigg)\ppp{W(\tilde{\beta},\alpha)}{\tilde{\beta}}-1=0.
\end{equation}
\end{lemma}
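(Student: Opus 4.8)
The plan is to exploit the single structural fact that defines $\tilde\beta(\alpha)$: since $\theta=\partial_\beta W$ by \eqref{eq} and $\tilde\beta(\alpha)=\beta(0,\alpha)$ by \eqref{bet}, we have the identity $\partial_\beta W(\tilde\beta(\alpha),\alpha)=0$ for all admissible $\alpha$. Everything else follows by the chain rule together with the equations of Lemma~\ref{lem5} that $W$ already satisfies; only equation \eqref{baW} is needed, not \eqref{bW}.

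For \eqref{watib} I would first differentiate the definition $w(\alpha)=W(\tilde\beta(\alpha),\alpha)$ with respect to $\alpha$, obtaining $w'(\alpha)=\partial_\beta W(\tilde\beta,\alpha)\,\tilde\beta'(\alpha)+\partial_\alpha W(\tilde\beta,\alpha)$; the first term drops because $\partial_\beta W(\tilde\beta,\alpha)=0$, so $w'(\alpha)=\partial_\alpha W(\tilde\beta,\alpha)$. Then I evaluate equation \eqref{baW} at $\beta=\tilde\beta(\alpha)$: the term $(1-3i\alpha\tilde\beta)\partial_\beta W(\tilde\beta,\alpha)$ vanishes for the same reason, leaving $-9i\alpha^2\partial_\alpha W(\tilde\beta,\alpha)-(3i\alpha+\tilde\beta)=0$. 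Substituting $\partial_\alpha W(\tilde\beta,\alpha)=w'(\alpha)=\partial w(\alpha)/\partial\alpha$ and multiplying by $-1$ gives \eqref{watib} directly.

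For \eqref{abw} I would differentiate equation \eqref{baW} once more with respect to $\beta$ before any specialization, obtaining $-3i\alpha\,\partial_\beta W+(1-3i\alpha\beta)\partial_\beta^2 W-9i\alpha^2\,\partial_\alpha\partial_\beta W-1=0$. Evaluating at $\beta=\tilde\beta(\alpha)$ kills the first term. To eliminate the mixed derivative I differentiate the defining identity $\partial_\beta W(\tilde\beta(\alpha),\alpha)=0$ with respect to $\alpha$, which yields $\partial_\beta^2 W(\tilde\beta,\alpha)\,\tilde\beta'(\alpha)+\partial_\alpha\partial_\beta W(\tilde\beta,\alpha)=0$, hence $\partial_\alpha\partial_\beta W(\tilde\beta,\alpha)=-\tilde\beta'(\alpha)\,\partial_\beta^2 W(\tilde\beta,\alpha)$. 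Substituting this back and collecting the coefficient of $\partial_\beta^2 W(\tilde\beta,\alpha)$ produces $\big(1-3i\alpha\tilde\beta+9i\alpha^2\tilde\beta'(\alpha)\big)\partial_\beta^2 W(\tilde\beta,\alpha)-1=0$, which is \eqref{abw} after writing $\tilde\beta'(\alpha)=\partial\tib/\partial\alpha$.

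There is no serious obstacle here; the whole argument is bookkeeping with the chain rule around the critical point $\beta=\tilde\beta$. The only points that require care are remembering that the Legendre setup \eqref{eq}--\eqref{bet} is precisely what makes $\partial_\beta W$ vanish at $\tilde\beta$, using equality of mixed partials $\partial_\alpha\partial_\beta W=\partial_\beta\partial_\alpha W$, and being consistent about the meaning of $\partial_\beta^2 W(\tilde\beta,\alpha)$ as ``differentiate in $\beta$ first, then set $\beta=\tilde\beta$'', which is exactly what $\ppp{W(\tilde\beta,\alpha)}{\tilde\beta}$ in the statement denotes.
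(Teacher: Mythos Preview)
Your argument is correct. Both your proof and the paper's use exactly the same ingredients: equation \eqref{baW}, the vanishing condition $\partial_\beta W(\tilde\beta,\alpha)=0$ coming from \eqref{eq} at $\theta=0$, and the chain rule linking total $\alpha$-derivatives to the partials at $\beta=\tilde\beta$. The only difference is packaging: the paper Taylor-expands $W$ in powers of $(\beta-\tilde\beta)$, substitutes that series into \eqref{baW}, and then reads off the zeroth- and first-order coefficients in $(\beta-\tilde\beta)$, whereas you evaluate \eqref{baW} directly at $\beta=\tilde\beta$ for \eqref{watib} and differentiate \eqref{baW} once in $\beta$ before evaluating for \eqref{abw}. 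Your route is more economical since it avoids setting up the full recurrence only to use two terms of it; the paper's route has the advantage of making visible that higher-order relations in $(\beta-\tilde\beta)$ are also available from the same expansion, which could be useful if one wanted $\Gamma_k$ for $k\geqslant 2$.
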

\begin{proof}
Let us decompose the function $W(\beta,\alpha)$ in a Taylor series in powers of $(\beta-\tilde{\beta})$ near the point $\tilde{\beta}$. So we have
\begin{equation*}
W(\beta,\alpha)=\sum_{k\geqslant 0}\frac{(\beta-\tilde{\beta})^k}{k!}\frac{\partial^k W(\beta,\alpha)}{\partial \beta^k}\bigg|_{\beta=\tilde{\beta}}= \sum_{k\geqslant 0} \frac{(\beta-\tilde{\beta})^k}{k!}\frac{\partial^k W(\tilde{\beta},\alpha)}{\partial \tilde{\beta}^k}.
\end{equation*}

If we substitute the last equality into equation (\ref{baW}) 
\begin{equation*}
\sum_{k\geqslant 0}\frac{(\beta-\tilde{\beta})^{k}}{k!}
\bigg[\bigg(1-3i\alpha\tilde{\beta}+9i\alpha^2\pp{\tilde{\beta}}{\alpha}\bigg)
\frac{\partial}{\partial\tilde{\beta}}-9i\alpha^2\frac{\partial}{\partial\alpha}-3ik\alpha
\bigg]\frac{\partial^kW(\tilde{\beta},\alpha)}{\partial\tilde{\beta}^k}
-3i\alpha-\tilde{\beta}-(\beta-\tilde{\beta})=0,
\end{equation*}
we obtain a number of recurrent relations for the series coefficients. So the lemma statement follows from the zero and first orders by using the equality $\partial W(\tilde{\beta},\alpha)/\partial\tilde{\beta}=0$, see formula (\ref{eq}), when $\theta=0$.
\end{proof}

\subsection{Solutions for $\tilde{\beta}$ and $w(\alpha)$}
\begin{lemma}
\label{lem6}
The function $\tilde{\beta}$ from equality (\ref{bet}) satisfies the first order equation 
\begin{equation}
\label{eq1}
9i\alpha^2\tib(\alpha)\partial_\alpha\tib(\alpha)-3i\alpha\tib^2(\alpha)+3i\alpha+\tib(\alpha)=0
\end{equation}
with the initial condition $\tilde{\beta}(0)=0$. Moreover, near the point $\alpha=0$ it can be represented by the series 
\begin{equation}
\label{eq2}
\tib(\alpha)=\sum_{k\geqslant 0}b_{2k+1}\alpha^{2k+1},
\end{equation}
the coefficients of which satisfy the relations
\begin{equation}
\label{bcoef}
b_1=-3i,\,\,\,\,\,\,
b_{2k+3}=-6i\sum_{n=0}^{k}\lr{3\lr{k-n}+1}b_{2n+1}b_{2k-2n+1},\,\,\,\,\,\,
k \geqslant 0.
\end{equation}
\end{lemma}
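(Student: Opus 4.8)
The plan is to derive the differential equation for $\tilde{\beta}$ directly from the already-established equation \eqref{watib}, which reads $9i\alpha^2\,\partial_\alpha w(\alpha)+3i\alpha+\tilde{\beta}(\alpha)=0$. The missing ingredient is an expression for $\partial_\alpha w(\alpha)$ in terms of $\tilde{\beta}$ alone. To get it, I would differentiate $w(\alpha)=W(\tilde{\beta}(\alpha),\alpha)$ with respect to $\alpha$ by the chain rule: $\partial_\alpha w=(\partial_\beta W)(\tilde{\beta},\alpha)\,\tilde{\beta}'(\alpha)+(\partial_\alpha W)(\tilde{\beta},\alpha)$. The first term vanishes because $\partial_\beta W(\tilde{\beta},\alpha)=0$ (this is equation \eqref{eq} at $\theta=0$, the defining property of $\tilde{\beta}$), so $\partial_\alpha w=(\partial_\alpha W)(\tilde{\beta},\alpha)$. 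Next I would evaluate equation \eqref{baW} at $\beta=\tilde{\beta}(\alpha)$; since $\partial_\beta W(\tilde{\beta},\alpha)=0$, the term $(1-3i\alpha\tilde{\beta})\partial_\beta W$ drops and \eqref{baW} collapses to $-9i\alpha^2\,(\partial_\alpha W)(\tilde{\beta},\alpha)-(3i\alpha+\tilde{\beta})=0$, i.e. $(\partial_\alpha W)(\tilde{\beta},\alpha)=-(3i\alpha+\tilde{\beta})/(9i\alpha^2)$. Combining, $\partial_\alpha w=-(3i\alpha+\tilde{\beta})/(9i\alpha^2)$; substituting this back into \eqref{watib} just reproduces \eqref{watib}, so I instead need a second relation. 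The cleaner route: differentiate \eqref{watib} itself with respect to $\alpha$, getting $18i\alpha\,\partial_\alpha w+9i\alpha^2\,\partial^2_\alpha w+3i+\tilde{\beta}'=0$; but a more economical path is to use \eqref{watib} to write $\partial_\alpha w=-(3i\alpha+\tilde{\beta})/(9i\alpha^2)$ and then impose consistency with the Legendre structure via $\partial_\alpha\Gamma_0$.

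Actually the slick derivation is this: from \eqref{watib} we have $\partial_\alpha w = -(3i\alpha+\tilde\beta)/(9i\alpha^2)$. On the other hand, as computed above, $\partial_\alpha w = (\partial_\alpha W)(\tilde\beta,\alpha)$. Now differentiate the identity $\partial_\beta W(\tilde\beta(\alpha),\alpha)=0$ with respect to $\alpha$: $(\partial^2_\beta W)(\tilde\beta,\alpha)\,\tilde\beta' + (\partial_\alpha\partial_\beta W)(\tilde\beta,\alpha)=0$. Equation \eqref{abw} gives $(\partial^2_\beta W)(\tilde\beta,\alpha)=\bigl(1-3i\alpha\tilde\beta+9i\alpha^2\tilde\beta'\bigr)^{-1}$. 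To close the system I would instead go back to equation \eqref{bW}, the Dyson--Schwinger equation for $W$, and evaluate it at $\beta=\tilde\beta$. Since $\partial_\beta W(\tilde\beta,\alpha)=0$, equation \eqref{bW} becomes $3i\alpha\,(\partial^2_\beta W)(\tilde\beta,\alpha)+\tilde\beta=0$, hence $(\partial^2_\beta W)(\tilde\beta,\alpha)=-\tilde\beta/(3i\alpha)$. Comparing with \eqref{abw} yields $\bigl(1-3i\alpha\tilde\beta+9i\alpha^2\tilde\beta'\bigr)\cdot\bigl(-\tilde\beta/(3i\alpha)\bigr)=1$, i.e. $-\tilde\beta+3i\alpha\tilde\beta^2-9i\alpha^2\tilde\beta\tilde\beta'=3i\alpha$, which rearranges exactly to \eqref{eq1}: $9i\alpha^2\tilde\beta\,\partial_\alpha\tilde\beta-3i\alpha\tilde\beta^2+3i\alpha+\tilde\beta=0$. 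The initial condition $\tilde\beta(0)=0$ follows from $\tilde\beta(\alpha)=\beta(0,\alpha)$ together with the particular value $\partial_\beta W(0,0)=0$, which holds since $W(0,\alpha)$ is even in $\beta$-derivatives at leading order — more directly, from $W_1(0)=\partial_\beta W(0,0)$ and the explicit computation (or from $Z(\beta,0)=\exp(\beta^2/2)$-type evenness arguments) one reads off $\beta(0,0)=0$.

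For the series part, I would substitute the ansatz $\tilde\beta(\alpha)=\sum_{k\geqslant0}b_{2k+1}\alpha^{2k+1}$ into \eqref{eq1}. First I should justify that only odd powers appear: the equation \eqref{eq1} is invariant under $\alpha\mapsto-\alpha$, $\tilde\beta\mapsto-\tilde\beta$, and the solution with $\tilde\beta(0)=0$ is unique, so $\tilde\beta$ is odd. Plugging in, the linear term forces $b_1$: the lowest order is $\alpha^1$, coming from $3i\alpha+b_1\alpha$, giving $b_1=-3i$. For the general recursion, I would collect the coefficient of $\alpha^{2k+3}$ in each of the three nonlinear/linear terms of \eqref{eq1}: the term $\tilde\beta$ contributes $b_{2k+3}$; the term $-3i\alpha\tilde\beta^2$ contributes $-3i\sum_{n=0}^{k}b_{2n+1}b_{2k-2n+1}$ at order $\alpha^{2k+3}$; and the term $9i\alpha^2\tilde\beta\,\partial_\alpha\tilde\beta$ contributes $9i\sum_{n=0}^{k}(2(k-n)+1)b_{2n+1}b_{2k-2n+1}$ at that order. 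Setting the sum to zero and solving for $b_{2k+3}$ gives $b_{2k+3}=-\sum_{n=0}^{k}\bigl(9i(2(k-n)+1)-3i\bigr)b_{2n+1}b_{2k-2n+1}=-6i\sum_{n=0}^{k}(3(k-n)+1)b_{2n+1}b_{2k-2n+1}$, which is precisely \eqref{bcoef}.

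The main obstacle, and the part requiring the most care, is the bookkeeping of index shifts when extracting the $\alpha^{2k+3}$ coefficient from the quadratic terms — in particular tracking how the two factors of $\alpha$ in $9i\alpha^2\tilde\beta\partial_\alpha\tilde\beta$ and the one factor in $-3i\alpha\tilde\beta^2$ shift the convolution range, and verifying that the combinatorial factor $9i(2(k-n)+1)-3i$ simplifies cleanly to $6i(3(k-n)+1)$. Everything else (the derivation of \eqref{eq1} from \eqref{abw} and the evaluated \eqref{bW}, the oddness and the initial condition) is short once one notices that $\partial_\beta W(\tilde\beta,\alpha)=0$ is the key simplifying identity.
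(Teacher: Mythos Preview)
Your proposal is correct and, after the initial exploratory detour through \eqref{watib}, lands on exactly the paper's argument: evaluate \eqref{bW} at $\beta=\tilde\beta$ (using $\partial_\beta W(\tilde\beta,\alpha)=0$) to get $\partial_\beta^2 W(\tilde\beta,\alpha)=-\tilde\beta/(3i\alpha)$, then substitute into \eqref{abw} to obtain \eqref{eq1}. The series computation likewise matches the paper's, with your explicit justification of oddness and the coefficient bookkeeping $9i(2(k-n)+1)-3i=6i(3(k-n)+1)$ filling in details the paper leaves implicit.
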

\begin{proof}
Equation (\ref{eq1}) can be obtained by substitution of relation (\ref{bW}) into equality (\ref{abw}), where it was used that at point $\theta=0$ we have 
\begin{equation*}
\frac{\partial W(\tilde{\beta},\alpha)}{\partial\tilde{\beta}}=0,\,\,\,\,\,\,
3i\alpha\frac{\partial^2 W(\tilde{\beta},\alpha)}{\partial\tilde{\beta}^2}+\tib(\alpha)=0.
\end{equation*}

Then if we substitute series (\ref{eq2}) into equation (\ref{eq1}) we obtain
\begin{equation*}
\sum_{k\geqslant 0}\bigg(\sum_{n=0}^k6i (3(k-n)+1)b_{2n+1}b_{2k-2n+1}+b_{2k+3}\bigg)\alpha^{2k+3}+(b_1+3i)\alpha=0,
\end{equation*}
from which the last statement of lemma follows.
\end{proof}
\begin{lemma}
\label{lem8}
The function $w(\alpha)$ from (\ref{bet}) near the point $\alpha=0$ can be represented by the asymptotic series $w(\alpha)=\sum_{k\geqslant 0}c_{2k}\alpha^{2k}$, the coefficients of which satisfy the equalities
\begin{equation}
c_0=0,\,\,\,\,\,\,c_{2k}=-\frac{b_{2k+1}}{18 i k},\,\,\,\,\,\,k \geqslant 1\,
\end{equation}
where $b_{2k+1}$ is from (\ref{bcoef}).
\end{lemma}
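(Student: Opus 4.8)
The plan is to use the relation \eqref{watib} from the previous lemma, namely $9i\alpha^2\,\partial_\alpha w(\alpha)+3i\alpha+\tib(\alpha)=0$, together with the known series for $\tib$, equation \eqref{eq2} with coefficients given in \eqref{bcoef}. First I would substitute the ansatz $w(\alpha)=\sum_{k\geqslant 0}c_{2k}\alpha^{2k}$ into \eqref{watib}; since $\tib(\alpha)=\sum_{k\geqslant 0}b_{2k+1}\alpha^{2k+1}$ contains only odd powers, and $9i\alpha^2\partial_\alpha w(\alpha)=\sum_{k\geqslant 1}18ik\,c_{2k}\alpha^{2k+1}$ contains only odd powers as well (starting at $\alpha^3$), the two series match power by power consistently.

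The next step is to read off the coefficients. Comparing the coefficient of $\alpha^1$ gives $3i+b_1=0$, which is consistent with $b_1=-3i$ from \eqref{bcoef} and imposes nothing on $w$; in particular $c_0$ is not fixed by the equation, so I would fix it by the normalization coming from \eqref{partic} and \eqref{bet}: $w(0)=W(\tib(0),0)=W(0,0)=0$, hence $c_0=0$. Then for each $k\geqslant 1$, comparing the coefficient of $\alpha^{2k+1}$ yields $18ik\,c_{2k}+b_{2k+1}=0$, i.e. $c_{2k}=-\dfrac{b_{2k+1}}{18ik}$, which is exactly the claimed formula.

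This is essentially a direct substitution argument, so there is no real obstacle; the only point requiring a word of care is justifying that $c_0$ must be chosen as $0$ rather than left free — that comes from the defining relations \eqref{bet} and the particular value $W(0,0)=0$ in \eqref{partic}, not from the differential equation itself. One could also note as a sanity check that $b_3=-6i\cdot 1\cdot b_1 b_1 = -6i(-3i)^2 = 54i$ gives $c_2 = -b_3/(18i) = -3$, but the reference values in Exercise \ref{d3} are stated for $W(0,\alpha)$ rather than $w(\alpha)$, so a direct numerical comparison would require also knowing $\tib(\alpha)$ to the relevant order; I would mention this only in passing and not carry it out.
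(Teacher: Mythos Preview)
Your argument is correct and is exactly the approach taken in the paper: substitute the even-power ansatz for $w(\alpha)$ into equation \eqref{watib}, use the odd-power series \eqref{eq2} for $\tib(\alpha)$, and match coefficients, with $c_0=0$ fixed by the initial condition $w(0)=W(\tib(0),0)=0$. Your write-up is in fact more explicit than the paper's, which states the proof in a single sentence.
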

\begin{proof}
The statement follows after substitution of the series for $w(\alpha)$ into equation (\ref{watib}) and using representation (\ref{eq2}) for function $\tib(\alpha)$, where the initial condition $w(0)=W(\tib(0),0)=0$ was also used.
\end{proof}
\begin{lem}
	Check the validity of the following decomposition:
	\begin{align*}
	\Gamma(0,\alpha)=-3\alpha^2+135\alpha^4+O(\alpha^6).
	\end{align*}
\end{lem}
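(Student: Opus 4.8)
The plan is to reduce the claim to the coefficient recursions already in place. By (\ref{partic}) together with Lemma \ref{gammas} one has $\gga(0,\al)=\gga_0(\al)=w(\al)$, and Lemma \ref{lem8} gives the asymptotic series $w(\al)=\sum_{k\geqslant 0}c_{2k}\al^{2k}$ with $c_0=0$ and $c_{2k}=-b_{2k+1}/(18ik)$ for $k\geqslant 1$. So it suffices to compute $b_1,b_3,b_5$ from the recursion (\ref{bcoef}) of Lemma \ref{lem6} and read off $c_2=-b_3/(18i)$ and $c_4=-b_5/(36i)$; the terms $\al^6$ and higher are absorbed into the remainder.

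First I would evaluate (\ref{bcoef}) at $k=0$: the inner sum has the single term $n=0$ with weight $3(k-n)+1=1$, giving $b_3=-6i\,b_1^2=-6i(-3i)^2=54i$, hence $c_2=-54i/(18i)=-3$, which matches the $\al^2$ coefficient. Next, at $k=1$ the inner sum runs over $n\in\{0,1\}$: the $n=0$ term contributes $(3\cdot 1+1)b_1 b_3=4(-3i)(54i)=648$ and the $n=1$ term contributes $(3\cdot 0+1)b_3 b_1=162$, so $b_5=-6i(648+162)=-4860i$ and $c_4=-(-4860i)/(36i)=135$. With $c_0=0$ this yields $\gga(0,\al)=-3\al^2+135\al^4+O(\al^6)$, as claimed.

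Since the whole argument is an explicit finite evaluation of the recursions of Lemmas \ref{lem6} and \ref{lem8}, there is no genuine obstacle; the only points to watch are the bookkeeping of the powers of $i$ and the fact that the $k=1$ sum has two distinct contributing terms (the summand is \emph{not} symmetric under $n\leftrightarrow k-n$ because of the weight $3(k-n)+1$). As an independent cross-check one could instead expand $w(\al)=W(\tib(\al),\al)$ directly, using $\tib(\al)=-3i\al+54i\al^3+O(\al^5)$ from (\ref{eq2})–(\ref{bcoef}) together with the two-variable Taylor data for $W$ underlying Exercise \ref{d3}; this route is more laborious but produces the same numbers and confirms consistency of the Legendre transform.
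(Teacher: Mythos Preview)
Your proof is correct and is precisely the approach the paper sets up: the exercise is placed immediately after Lemmas~\ref{lem6} and~\ref{lem8}, and the intended solution is exactly to feed the recursion~(\ref{bcoef}) for $b_{2k+1}$ into the formula $c_{2k}=-b_{2k+1}/(18ik)$ from Lemma~\ref{lem8}, using $\Gamma(0,\alpha)=w(\alpha)$ from Lemma~\ref{gammas}. The arithmetic ($b_1=-3i$, $b_3=54i$, $b_5=-4860i$, hence $c_2=-3$, $c_4=135$) is all right, and your remark that the weight $3(k-n)+1$ breaks the $n\leftrightarrow k-n$ symmetry is a useful warning.
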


\section{Result interpretation}
\label{res}

\subsection{Functions $Z_0$ and $W_0$}
In the previous section, the functions $Z$, $W$, and $\Gamma$ were studied by using mathematical language. However, a graphical interpretation is also possible. Consider the function $Z_0=Z(0,\alpha)$, or rather its series
\begin{equation}\label{z}
Z_0\left(\alpha \right)=\sum_{p\geqslant 0}\frac{(-1)^p}{8^p} 
\frac{(6p)!}{(2p)!(3p)!} \alpha^{2p} =  1-\frac{15}{2} \alpha^2+O(\alpha^4).
\end{equation}

Let us note that the diagram \includegraphics[scale=0.6]{ex} is proportional to $i\alpha$. Then, using the above calculations (\ref{d1}) and (\ref{d2}), we can write the approximation
\begin{align*}
\mathbb{H}\left(1+ \includegraphics[scale=0.6]{ex} +\frac{1}{2!}\includegraphics[scale=0.6]{ex}^2 +\dots \right)
&=1+ \frac{9}{2}~ \och{-0.65}{0.7}{0}+3 ~\lemm{-0.65}{0.7}{0}+ \dots=\\
&=1+\frac{9}{2}\left(i \alpha \right)^2+3\left(i \alpha \right)^2+\dots=\\
&=1-\frac{15}{2} \alpha^2+ \dots
\end{align*}

We see that the row coincides with one from (\ref{z}). It turns out that to obtain the expansion coefficient of $Z$ in the Taylor series, you need to draw all possible diagrams in the appropriate order, then match $1$ to each line, match $i \alpha$ to each vertex, and divide by the factorial. This can be written as follows.

\begin{theorem} 
Under the conditions described above, we have the equality
\begin{equation}
Z_0 \left( \alpha \right)=\mathbb{H}\left( e^{\includegraphics[scale=0.6]{ex}}   \right) .
\end{equation}
\end{theorem}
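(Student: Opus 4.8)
The plan is to identify the combinatorial content of $\mathbb{H}\left(e^{\includegraphics[scale=0.6]{ex}}\right)$ term by term and match it against the explicit series for $Z_0(\alpha)$ in~(\ref{z}). Recall that $Z_0(\alpha)=Z(0,\alpha)=\Phi_{\mathbf 1}(F_\alpha)$ restricted to a single mode, i.e. $Z_0(\alpha)=I(\alpha)$ in the notation of~(\ref{intI}), and that by Lemma~\ref{lem2} (in the one-dimensional form used in Lemma~\ref{lem10}) we have
\begin{equation*}
Z_0(\alpha)=e^{i\alpha\partial_y^3}e^{\frac{1}{2}y^2}\Big|_{y=0}=\sum_{j\geqslant 0}\frac{1}{j!}\Big(i\alpha\,\partial_y^3\Big)^j e^{\frac12 y^2}\Big|_{y=0}.
\end{equation*}
So the first step is to observe that the $j$-th term here is precisely the object computed by the ``physical'' diagram recipe of the previous subsection: the operator $(i\alpha\,\partial_y^3)^j$ applied to $e^{y^2/2}$ and evaluated at $y=0$ draws the cubic vertex $\includegraphics[scale=0.6]{ex}$ exactly $j$ times and then, by Theorem~\ref{th}, contracts all $3j$ external legs among themselves in every possible way, with each contraction $\fel{i}{}{j}=\delta_{ij}$ contributing a factor $1$ and each vertex contributing $i\alpha$. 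That contraction-of-all-external-legs operation is by definition $\mathbb{H}$ acting on $\includegraphics[scale=0.6]{ex}^{\,j}$ (Definition~\ref{def3}), so term by term $\frac{1}{j!}(i\alpha\partial_y^3)^j e^{y^2/2}\big|_{y=0}=\frac{1}{j!}\mathbb{H}\big(\includegraphics[scale=0.6]{ex}^{\,j}\big)$.

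The second step is simply to sum over $j$ and use linearity of $\mathbb{H}$ together with the definition of the exponential of a vertex: $\sum_{j\geqslant 0}\frac{1}{j!}\mathbb{H}\big(\includegraphics[scale=0.6]{ex}^{\,j}\big)=\mathbb{H}\big(e^{\includegraphics[scale=0.6]{ex}}\big)$. Combined with the previous paragraph this already yields $Z_0(\alpha)=\mathbb{H}\big(e^{\includegraphics[scale=0.6]{ex}}\big)$. As a consistency check (and this is the level of detail I would actually include), the odd-$j$ terms vanish on both sides by~(\ref{d2}), the $j=0$ term gives $1$, and the $j=2$ term gives $\frac{1}{2!}\mathbb{H}\big(\includegraphics[scale=0.6]{ex}^2\big)=\frac{9}{2}(i\alpha)^2+3(i\alpha)^2=-\frac{15}{2}\alpha^2$ by~(\ref{d1}), matching the $\alpha^2$ coefficient of~(\ref{z}); more generally the $j=2n$ term reproduces $\frac{(-1)^n}{8^n}\frac{(6n)!}{(2n)!(3n)!}\alpha^{2n}$ exactly as in Lemma~\ref{lem10}.

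The only real content to be careful about is the bookkeeping in the first step: one must be sure that ``$\mathbb{H}$ applied to $j$ copies of the cubic vertex'' and ``$\partial_y^{3j}$ applied to $e^{y^2/2}$, evaluated at $0$'' denote the same sum of numbers, including multiplicities. This is exactly what Theorem~\ref{th} provides — the function $g_{3j}(i_1,\ldots,i_{3j};x)$ evaluated at $x=0$ is the sum over all perfect matchings of $\{i_1,\ldots,i_{3j}\}$ of the corresponding products of $\delta$'s — and the translation to diagram language (vertices with three legs, legs joined in all possible ways) is precisely the recipe set up around Definitions~\ref{def2} and~\ref{def3}. So the main (and essentially only) obstacle is to state this correspondence cleanly rather than to prove anything genuinely new; once it is in place the theorem is immediate.
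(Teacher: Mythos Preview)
Your proof is correct. It differs from the paper's argument in emphasis: the paper computes $\frac{1}{(2p)!}\mathbb{H}\big(\includegraphics[scale=0.6]{ex}^{2p}\big)$ directly by counting perfect matchings of the $6p$ free legs, obtaining $(6p-1)!!\,(i\alpha)^{2p}/(2p)!$, and then checks that this equals the coefficient $\frac{(-1)^p}{8^p}\frac{(6p)!}{(2p)!(3p)!}$ in the known series~(\ref{z}) for $Z_0$. You instead bypass the explicit count by invoking Lemma~\ref{lem2} and Theorem~\ref{th} to identify the operations themselves: $(i\alpha\,\partial_y^3)^j e^{y^2/2}\big|_{y=0}$ \emph{is} by construction the sum over all pairings of $3j$ legs with weight $(i\alpha)^j$, which is precisely $\mathbb{H}\big(\includegraphics[scale=0.6]{ex}^{\,j}\big)$ by Definition~\ref{def3}. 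Your route is slightly more conceptual and does not depend on already knowing the closed form of the series coefficients; the paper's route is more hands-on and gives the coefficient explicitly as a by-product. Either is perfectly adequate here.
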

\begin{proof}
Let us consider the order of $2p$, where $p\in\mathbb{N}$, then we can write
\begin{align*}
\frac{1}{(2p)!} \mathbb{H}\left( \includegraphics[scale=0.6]{ex}^{2p}\right)&=\frac{1}{(2p)!} \left({\substack{\text{All possible vacuum}\\ \text{diagrams with $2p$ vertices}}} \right)=\\
&=\frac{1}{(2p)!} \left({\substack{\text{Number of all possible vacuum}\\ \text{diagrams with $2p$ vertices} }} \right)\left(i \alpha \right)^{2p}.
\end{align*}

Let us count the number of diagrams. If there are $2p$ vertices, then there are $6p$ lines. 
It means that we should calculate a number of ways to split $6p$ free lines into pairs. It equals to $(6p-1)!!$. Hence, we have
 \begin{equation*}
\frac{1}{(2p)!} \mathbb{H}\left( \includegraphics[scale=0.6]{ex}^{2p} \right)=
(-1)^p \frac{(6p-1)!!}{(2p)!} \alpha^{2p}=
\frac{(-1)^p}{8^p} \frac{(6p)!}{(2p)!(3p)!} \alpha^{2p},
\end{equation*}
from which the statement follows.
\end{proof}

\begin{lem}
Verify that the equality $Z_n \left( \alpha \right)=\mathbb{H}_{\mathrm{n}}\left( e^{\includegraphics[scale=0.6]{ex}} \right)$ holds for all $n\in\mathbb{N}\cup\{0\}$.
\end{lem}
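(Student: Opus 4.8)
The plan is to evaluate both sides as power series in $\alpha$ and match them, treating general $n$ as the natural extension of the preceding theorem for $Z_0$ (the case $n=0$). Write $\zeta$ for the cubic vertex, of weight $i\alpha$ and three legs. On the left I would start from $Z_n(\alpha)=\partial_\beta^n Z(\beta,\alpha)\big|_{\beta=0}=\tfrac1{\sqrt{2\pi}}\int_{\mathbb{R}}x^n e^{-x^2/2}e^{i\alpha x^3}\,dx$, expand $e^{i\alpha x^3}=\sum_{j\geqslant0}\tfrac{(i\alpha)^j}{j!}x^{3j}$, and use the Gaussian moment $\tfrac1{\sqrt{2\pi}}\int_{\mathbb{R}}x^m e^{-x^2/2}\,dx=(m-1)!!$ for even $m$ and $0$ for odd $m$. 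This yields the closed form
\[ Z_n(\alpha)=\sum_{\substack{j\geqslant0\\ 3j+n\ \mathrm{even}}}\frac{(i\alpha)^j}{j!}\,(3j+n-1)!!. \]
Equivalently, Lemma \ref{lem2} in one dimension applied to $F(x)=x^n e^{i\alpha x^3}$ gives $Z_n(\alpha)=\partial_y^n e^{i\alpha\partial_y^3}e^{y^2/2}\big|_{y=0}$, and by Theorem \ref{th} the evaluation at $y=0$ retains only the fully paired contribution, namely the number $(m-1)!!$ of perfect matchings of $m=3j+n$ legs.

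The cleanest way to the right-hand side is to handle all $n$ at once. By Lemma \ref{lem2} (equivalently the physical dictionary underlying the $Z_0$ theorem), the generating function is the total contraction of cubic vertices together with an exponentiated one-valent source, $Z(\beta,\alpha)=\mathbb{H}\big(e^{\zeta}e^{\beta x}\big)$; this is legitimate because the exercise following Theorem \ref{th1} permits a finite sum of vertices (here $\zeta$ of weight $i\alpha$ and the source of weight $\beta$). Applying $\partial_\beta^n\big|_{\beta=0}$ to both sides, using $\partial_\beta^n e^{\beta x}\big|_{\beta=0}=x^n$ and the linearity of $\mathbb{H}$, gives $Z_n(\alpha)=\mathbb{H}\big(x^n e^{\zeta}\big)$, which is exactly $\mathbb{H}_n(e^{\zeta})$: the operator $\mathbb{H}_n$ adjoins the $n$ external source legs and then contracts everything.

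As a concrete cross-check I would expand $\mathbb{H}_n(e^{\zeta})=\sum_{j\geqslant0}\tfrac1{j!}\mathbb{H}_n(\zeta^j)$; the order-$j$ diagram carries weight $(i\alpha)^j$ and exposes $3j$ vertex legs, to which $\mathbb{H}_n$ adjoins the $n$ external legs. Counting all perfect matchings of these $3j+n$ legs gives $(3j+n-1)!!$, with the diagram vanishing precisely when $3j-n$ (equivalently $3j+n$) is odd, in accordance with Definition \ref{def3}. Hence $\tfrac1{j!}\mathbb{H}_n(\zeta^j)=\tfrac{(i\alpha)^j}{j!}(3j+n-1)!!$, and summing over $j$ reproduces the series for $Z_n(\alpha)$ above.

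The main obstacle is getting the treatment of the $n$ external legs right. Reading Definition \ref{def3} as leaving the $n$ legs inert and pairing only the remaining $3j-n$ vertex legs would give $\binom{3j}{n}(3j-n-1)!!$, which already fails at $j=0$, vanishing while $Z_n(0)=(n-1)!!\neq0$ for even $n$. The resolution is that in this single-mode problem the external (source) legs must be allowed to pair among themselves as well, which restores the double factorial $(3j+n-1)!!$. Making this identification precise — equivalently, justifying $Z(\beta,\alpha)=\mathbb{H}(e^{\zeta}e^{\beta x})$ and the reading $\mathbb{H}_n(\,\cdot\,)=\mathbb{H}(x^n\,\cdot\,)$ — together with the parity bookkeeping, is the crux of the argument.
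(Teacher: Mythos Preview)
This statement appears in the paper as an \emph{Exercise} (the environment in question is declared as ``Exercise'' and carries no proof), so there is no argument in the paper to compare against; your solution has to stand on its own, and it largely does. The series $Z_n(\alpha)=\sum_{j\geqslant 0,\;3j+n\text{ even}}\tfrac{(i\alpha)^j}{j!}\,(3j+n-1)!!$ is correct, and the generating-function route $Z(\beta,\alpha)=\mathbb{H}\big(e^{\zeta}e^{\beta x}\big)$ followed by $\partial_\beta^n\big|_{\beta=0}$ is exactly the natural extension of the paper's proof for $Z_0$.

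The subtlety you flag in the last paragraph is genuine, not just a scruple. Under the literal definition of $\mathbb{H}_n$ (select $n$ of the $3j$ vertex legs to remain free and pair the remaining $3j-n$), one gets $\mathbb{H}_n(\zeta^j)=(i\alpha)^j\binom{3j}{n}(3j-n-1)!!$, and this does \emph{not} reproduce $Z_n$: already for $n=2$ the $j=0$ term vanishes while $Z_2(0)=1$, and at $j=2$ one finds $-\tfrac{45}{2}\alpha^2$ instead of the required $-\tfrac{105}{2}\alpha^2$. The identity therefore holds only under your reading $\mathbb{H}_n(\,\cdot\,)=\mathbb{H}(x^n\,\cdot\,)$, in which the $n$ external legs are honest source insertions that may also contract with one another. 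That is the standard meaning of an $n$-point function and is what the preceding $Z_0$ theorem generalises to, but it is not what the stated definition of $\mathbb{H}_n$ literally says (there the $n$ free lines are a subset of the $k$ existing lines, with the built-in constraint $n\le k$). Your diagnosis and your fix are both correct; making this reinterpretation explicit is precisely what a complete solution to the exercise should contain.
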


Similar properties are observed in the function $W$. You can make sure that in the first orders, $W_0$ is the sum of only connected diagrams.

\begin{theorem} Under the conditions described above, we have the equality
\begin{equation*}
W_0\left( \alpha \right)=\mathbb{H}^{\mathrm{c}}\left( e^{\includegraphics[scale=0.6]{ex}}   \right).
\end{equation*}
\end{theorem}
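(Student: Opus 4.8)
The plan is to combine the previous theorem, $Z_0(\alpha)=\mathbb{H}\big(e^{\includegraphics[scale=0.6]{ex}}\big)$, with the exponential identity from Theorem \ref{th1}. Recall that Theorem \ref{th1} states $\mathbb{H}\exp(\zeta)=\exp\big(\mathbb{H}^{\mathrm{c}}(\exp(\zeta))\big)$ for a vertex $\zeta$. Taking $\zeta$ to be the cubic vertex \includegraphics[scale=0.6]{ex} of our model (which is legitimate, since \includegraphics[scale=0.6]{ex} is a single vertex with three external lines), we immediately get
\begin{equation*}
Z_0(\alpha)=\mathbb{H}\big(e^{\includegraphics[scale=0.6]{ex}}\big)=\exp\Big(\mathbb{H}^{\mathrm{c}}\big(e^{\includegraphics[scale=0.6]{ex}}\big)\Big).
\end{equation*}
On the other hand, the function $W$ was defined in (\ref{prf2}) precisely as $W=\ln Z$, so $W_0(\alpha)=W(0,\alpha)=\ln Z(0,\alpha)=\ln Z_0(\alpha)$. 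Taking logarithms of the displayed equation then yields $W_0(\alpha)=\mathbb{H}^{\mathrm{c}}\big(e^{\includegraphics[scale=0.6]{ex}}\big)$, which is the claim.

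Alternatively — and this is probably the version the authors intend, since it keeps the diagrammatic flavour — one can argue directly at the level of diagrams. First establish the order-by-order statement: for each $p$, the contribution $\tfrac{1}{(2p)!}\mathbb{H}\big(\includegraphics[scale=0.6]{ex}^{2p}\big)$ splits, by the partition argument underlying (\ref{po1}) and the complete Bell polynomial identity $\mathbb{H}(\zeta^n)=A_n(\mathbb{H}^{\mathrm{c}}(\zeta),\dots,\mathbb{H}^{\mathrm{c}}(\zeta^n))$, into a sum over ways of grouping the $2p$ vertices into connected clusters. Summing over all orders assembles $\mathbb{H}\big(e^{\includegraphics[scale=0.6]{ex}}\big)=\exp\big(\mathbb{H}^{\mathrm{c}}(e^{\includegraphics[scale=0.6]{ex}})\big)$, and since this coincides with $Z_0(\alpha)$ by the previous theorem, the logarithm picks out exactly the connected diagrams: $W_0(\alpha)=\ln Z_0(\alpha)=\mathbb{H}^{\mathrm{c}}\big(e^{\includegraphics[scale=0.6]{ex}}\big)$. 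One should also check the low-order consistency: $\mathbb{H}^{\mathrm{c}}(1)=0$ by Definition \ref{def5}, so the constant term of $W_0$ vanishes, matching $W(0,0)=0$ from (\ref{partic}); and at order $\alpha^2$ one recovers $-\tfrac{15}{2}\alpha^2$ from the two connected two-vertex diagrams $\tfrac{1}{2!}\mathbb{H}^{\mathrm{c}}(\includegraphics[scale=0.6]{ex}^2)=\tfrac{9}{2}(i\alpha)^2+3(i\alpha)^2$, in agreement with Exercise \ref{d3}.

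There is no real obstacle here: the entire content has already been done. The only thing requiring a word of care is the logical dependency — one must invoke Theorem \ref{th1} with $\zeta=\includegraphics[scale=0.6]{ex}$, and note that the hypothesis of that theorem ("a vertex with an arbitrary number of external lines") is met, so that the formula $\mathbb{H}\exp(\zeta)=\exp(\mathbb{H}^{\mathrm{c}}(\exp(\zeta)))$ applies verbatim. After that, the identification $W=\ln Z$ is a definition, and the identification $\mathbb{H}(e^{\includegraphics[scale=0.6]{ex}})=Z_0$ is the preceding theorem, so the result is a two-line corollary. If anything, the "hard part" is purely expository: making sure the reader sees that $\mathbb{H}^{\mathrm{c}}$ acting on the exponential of the vertex is, term by term, exactly the sum of all connected vacuum diagrams weighted by $1/(\text{order})!$ with $i\alpha$ per vertex — but this was already illustrated in the discussion around (\ref{te}) and in the examples preceding the theorem.
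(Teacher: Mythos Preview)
Your proposal is correct and matches the paper's proof essentially verbatim: the paper's argument is precisely the chain $W_0=\ln(Z_0)=\ln\big(\mathbb{H}(e^{\includegraphics[scale=0.6]{ex}})\big)=\mathbb{H}^{\mathrm{c}}(e^{\includegraphics[scale=0.6]{ex}})$, invoking the previous theorem and Theorem~\ref{th1}. Your alternative paragraph and consistency checks are sound but superfluous, as they merely re-derive or illustrate content already established in Theorem~\ref{th1} and the surrounding examples.
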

\begin{proof} Let us use Theorem \ref{th1}. Hence, we obtain the chain of equalities
\begin{equation*}
W_0=\mathrm{ln} \left( Z_0 \right)= \mathrm{ln} \left(\mathbb{H} \left(e^{\includegraphics[scale=0.6]{ex}}\right) \right)=\mathbb{H}^{\mathrm{c}}\left(e^{\includegraphics[scale=0.6]{ex}}  \right).
	\end{equation*}
\end{proof}

Now we give the standard notations for connected diagrams 
\begin{equation}
W_0= \mg{-0.6}{1}{0.7}\,,\,\,\,
W_1= \gl{-0.6}{1}{0.7}{0}\,,\,\ldots
\end{equation}
where $W_k$ has $k$ external lines.

\subsection{Functions $\Gamma_0$ and $\Gamma_1$}
\paragraph{Equation (\ref{watib}).}
Let us take a closer look at equation (\ref{watib}) and try to interpret it by using language of diagrams. To do this, we rewrite the equation in a more convenient form. Using the equalities $w\left(\alpha\right)=\Gamma_0$ and $\tilde{\beta}\left(\alpha\right) =-\Gamma_1$ from Lemma \ref{gammas}, we have
 \begin{equation}
\label{gh1}
6 \cdot i \alpha \cdot \frac{3}{2}i \alpha \frac{\partial}{\partial i \alpha} \Gamma_0+3 \cdot i\alpha=\Gamma_1.
\end{equation}

We are going to show that $\Gamma_0$ is all vacuum strongly connected diagrams, and $\Gamma_1$ is all strongly connected diagrams with one external line.  For convenience they are denoted as follows 

\begin{equation}\label{d}
\mathbb{H}^{\mathrm{sc}}\left(e^{\includegraphics[scale=0.6]{ex}}  \right)=
\mlg{-0.6}{1}{0.7}\,,~~~~~~~~~
\mathbb{H}^{\mathrm{sc}}_1\left(e^{\includegraphics[scale=0.6]{ex}}  \right)=
\gol{-0.6}{1}{0.7}{0}\,.
\end{equation}

To begin with, we note that the operator $x \partial_x $ acting on $x^n$ gives $n x^n$. Let us find a solution for $\Gamma_0$ in the form of combination of some vacuum diagrams. Using the fact that in our case all vacuum diagrams are polynomials of the parameter $i\alpha$, since the vertex is proportional to $i\alpha$, we obtain for an arbitrary term $\gamma_0$ from $\Gamma_0$
\begin{equation}
i \alpha \frac{\partial }{i \alpha} \gamma_0=\left(\text{number of vertices in $\gamma_0$} \right) \cdot \gamma_0.
\end{equation}

If the number of vertices in the diagram $\gamma_0$ is multiplied by 3/2, then we get number of internal lines, because each vertex contains three lines, and each line has two vertices at the ends. That is
\begin{align}
\frac{3}{2}  i \alpha \frac{\partial }{i \alpha} \gamma_0=\left(\text{number of internal lines in $\gamma_0$} \right) \cdot \gamma_0.
\end{align}

Further, the factor $6\cdot i\alpha$ symbolizes the number of ways to connect the triple vertex
\includegraphics[scale=0.6]{ex} to two lines. In other words, we take $\gamma_0$, cut one inner line in an arbitrary way, and connect a triple vertex to the cut. And then we summarize all such diagrams for all internal lines.
\begin{definition}
\label{def7}
Let $D$ be a diagram and an operator $\mathbb{F}(D)$ have one argument. Then, the operator acts on $D$ as follows: it cuts an arbitrary internal line in $D$, attaches the triple vertex \includegraphics[scale=0.6]{ex} in all possible ways, and then sums over all the internal lines of $D$.
\end{definition}

Described above conditions mean, that we need to find such diagram sums $\Gamma_0$ and $\Gamma_1$ to equality (\ref{gh1}) holds in the form
\begin{equation}
\label{ex8}
\mathbb{F}(\Gamma_0)+3~\chups{-0.61}{0.53}{0.75}{0}
=\Gamma_1\,.
\end{equation}

It is easy to verify that the functions 
from definition (\ref{d}) satisfy the last equality. But they are not unique. So to fix the answer we need to analyze equation (\ref{eq1}).
\begin{lem}
Verify that the following equality holds
$\mathbb{F}\Big(\mlg{-0.61}{1}{0.7}\Big)+3~\chups{-0.61}{0.53}{0.75}{0}
=\gol{-0.61}{1}{0.7}{0}\,$.
\end{lem}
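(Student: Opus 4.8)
The plan is to treat $\mathbb{F}(\mathbb{H}^{\mathrm{sc}}(e^{\zeta}))+3\cdot(\text{one-vertex tadpole})=\mathbb{H}^{\mathrm{sc}}_{\mathrm{1}}(e^{\zeta})$ (which, by the blob notation of (\ref{d}) and by $\Gamma_0=w(\alpha)$, $\Gamma_1=-\tib(\alpha)$ from Lemma \ref{gammas}, is exactly the asked identity) as a combinatorial statement about diagrams, and to cross-check it against the already established analytic relation (\ref{watib}), i.e. its diagrammatic form (\ref{gh1})--(\ref{ex8}). The strategy is: (i) exhibit a bijection between strongly connected one-external-line diagrams with at least two vertices and pairs (strongly connected vacuum diagram, choice of internal line), realised precisely by $\mathbb{F}$ of Definition \ref{def7}; (ii) check that this bijection carries the symmetry coefficients of the physical technique correctly, equivalently that $\mathbb{F}$ acts on the series $\Gamma_0$ as the differential operator in (\ref{gh1}); (iii) account for the one diagram outside the image, the bare tadpole, which is what the term $3\,i\alpha$ in (\ref{gh1}) encodes.

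For step (i): a strongly connected (one-particle-irreducible) diagram $D$ with exactly one external line singles out the vertex $v_0$ to which the line attaches, and, the model being cubic, $v_0$ carries two further internal half-edges. If these two half-edges are joined to each other, $D$ is the bare tadpole, and this is the only one-leg irreducible diagram with a single vertex. Otherwise, delete $v_0$ together with the external line and splice the two freed half-edges into a single internal line $\ell$, producing a vacuum diagram $D_0$. I would check $D_0$ is again strongly connected: splicing/subdividing preserves connectedness, and $D_0$ cannot acquire a bridge, because a bridge distinct from $\ell$ would remain a bridge of $D$, while $\ell$ being a bridge would force one of its halves to be a bridge of $D$ (the sole exception, $\ell$ a self-loop, is never a bridge), contradicting irreducibility of $D$. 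The operator $\mathbb{F}$ performs exactly the inverse move --- subdivide an internal line of a vacuum irreducible diagram by a new cubic vertex and hang an external leg on it --- so $D\mapsto(D_0,\ell)$ is a bijection onto pairs (strongly connected vacuum diagram, internal line), with the bare tadpole as the unique diagram outside the image.

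The main obstacle is step (ii), the bookkeeping of symmetry factors: I must verify that ``cut an internal line, attach the cubic vertex in all six ways, sum over internal lines'' transports the weight of each vacuum topology to the correct weight of the one-leg topology it produces. I would argue this in the half-edge (pairing) model: a vacuum diagram on $v$ vertices is an automorphism orbit of pairings of $3v$ half-edges, hence has $\tfrac32 v$ internal lines and a value proportional to $(i\alpha)^v$; applying $\mathbb{F}$ introduces one vertex (three new half-edges, one left external) and yields $6\cdot\tfrac32 v$ terms, each of value $(i\alpha)^{v+1}$. An orbit--stabiliser count --- the automorphism group of the subdivided, externally tagged diagram being the pointwise stabiliser of $\ell$ in $\mathrm{Aut}(D_0)$ --- then matches the new coefficients to those prescribed for $\mathbb{H}^{\mathrm{sc}}_{\mathrm{1}}(e^{\zeta})$. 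In generating-function language this says that $\mathbb{F}$ acts on each homogeneous piece as $6\,i\alpha\cdot\tfrac32\, i\alpha\,\partial_{i\alpha}$, exactly the operator appearing in (\ref{gh1}), while the three tadpoles supply the isolated term $3\,i\alpha$ there (three choices of which leg of the single vertex is external); since $\Gamma_0=w$ and $\Gamma_1=-\tib$ satisfy (\ref{watib})$\equiv$(\ref{gh1}), the identity follows.

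Assembling (i)--(iii) gives $\mathbb{F}(\mathbb{H}^{\mathrm{sc}}(e^{\zeta}))+3\cdot(\text{tadpole})=\mathbb{H}^{\mathrm{sc}}_{\mathrm{1}}(e^{\zeta})$, which is the asserted equality in the notation of (\ref{d}). As a cross-check I would run the first orders explicitly: $\Gamma_0$ begins with the three-edge ``sunset'' graph on two vertices with coefficient $3$; cutting each of its three internal lines in six ways gives $\mathbb{F}$ of it, and adding the three tadpoles should reproduce the $O(i\alpha)$ and $O((i\alpha)^3)$ coefficients of $\Gamma_1=-\tib(\alpha)$ read off from the recursion (\ref{bcoef}) (with $b_1=-3i$, $b_3=54i$ one gets exactly the numbers $3$ and $54$ on both sides). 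Everything except the automorphism bookkeeping of the third paragraph is routine.
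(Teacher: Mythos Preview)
Your combinatorial bijection is exactly the verification the paper has in mind: the text immediately preceding this exercise already identifies $\mathbb{F}$ with the operator $6\,i\alpha\cdot\tfrac32\,i\alpha\,\partial_{i\alpha}$ on diagram series, so what remains is precisely your step~(i), that contracting the external vertex of a non-tadpole one-leg 1PI diagram and splicing its two internal half-edges gives a bijection onto pairs (1PI vacuum diagram, marked internal line). Your 1PI-preservation argument for both directions is correct, and the numerical cross-check at order $(i\alpha)^3$ is right.

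Two small points. First, the clause ``since $\Gamma_0=w$ and $\Gamma_1=-\tilde\beta$ satisfy (\ref{watib})$\equiv$(\ref{gh1}), the identity follows'' is circular in this position: at this stage the double-hatched blobs have \emph{not} yet been identified with the analytically defined $\Gamma_0,\Gamma_1$ of Lemma~\ref{gammas}; the present exercise together with the analysis of~(\ref{eq1}) is what the paper uses to make that identification. Your bijection and weight check already constitute the proof --- the appeal to (\ref{watib}) should be presented only as a consistency check, not as the conclusion. Second, in the orbit--stabiliser step the automorphism group of the subdivided, externally-tagged diagram is the \emph{setwise} stabiliser of $\ell$ in $\mathrm{Aut}(D_0)$, not the pointwise one, since an automorphism fixing the external half-edge may still swap the two internal half-edges of the new vertex and hence reverse $\ell$. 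A way to sidestep symmetry factors entirely is to run the bijection on labelled half-edge pairings: a non-tadpole 1PI one-leg pairing on $v{+}1$ labelled vertices corresponds to a 1PI vacuum pairing on $v$ labelled vertices together with a choice of label for the new vertex ($v{+}1$), a choice of edge ($3v/2$), and one of $6$ attachments, giving $M_1^{\mathrm{sc}}(v{+}1)=9v(v{+}1)\,M_0^{\mathrm{sc}}(v)$, which after dividing by the factorials is exactly the claimed equality of generating series.
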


\paragraph{Equation (\ref{eq1}).}
Now we move on to decode equation (\ref{eq1}). Let us use  formula (\ref{gam}) and rewrite (\ref{eq1}) in a more convenient form
\begin{align}\label{dd}
3i\alpha=\Gamma_1\left(1+3i\alpha\Gamma_1-9i\alpha^2\frac{\partial}{\partial\alpha} 
\Gamma_1\right).
\end{align}

In the previous section we have chosen the ansatz for $\Gamma_0$. According to the ansatz, the function $\Gamma_1$ is a set of diagrams with one external line.
Let us consider the last term of (\ref{dd}) for an arbitrary term $\gamma_1$ from $\Gamma_1$ in more detail. Using the arguments from the previous section, we can write out the chain of equalities
\begin{align}
9  i \alpha\bigg( i \alpha \frac{\partial}{\partial i \alpha}\bigg) \gamma_1&\nonumber
=9 i \alpha \big[\text{number of vertices in $\gamma_1$}\big]\cdot\gamma_1\\\label{dd1}
&=6 i \alpha \big[ \text{number of internal lines in $\gamma_1$}\big]\cdot\gamma_1 + 3 i \alpha \gamma_1\\
&=\mathbb{F}(\gamma_1) + 3 i \alpha \gamma_1.\nonumber
\end{align}

Further, after summing by all $\gamma_1$ we substitute the last relation into equality (\ref{dd}). Hence, we have
\begin{align}\label{dd2}
3i\alpha=\Gamma_1\cdot\big[1-\mathbb{F}(\Gamma_1)\big]
\Longrightarrow
\Gamma_1=3i\alpha\sum_{k\geqslant 0}\Big(\mathbb{F}(\Gamma_1)\Big)^k.
\end{align}

Let us prove one auxiliary relation and apply it to our situation.
\begin{lemma}\label{dd6}
Let us consider the vertex with three external lines \includegraphics[scale=0.6]{ex} and define the following object
\begin{equation}\label{dd3}
\pov{-0.57}{1}{0.7}=\mathbb{H}^{\mathrm{sc}}_2\left(e^{\includegraphics[scale=0.6]{ex}}  \right).
\end{equation}

Then, under the conditions described above, we have the equality, depicted in Figure \ref{pic}.
\begin{figure}[h]
\center{\includegraphics{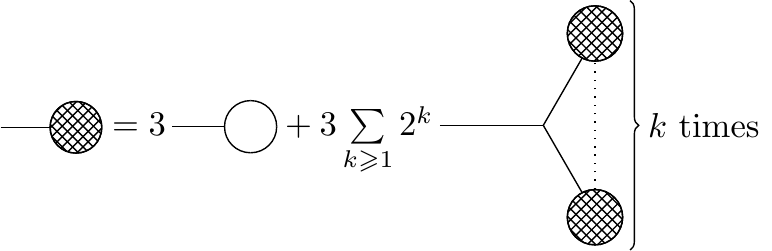}}
\caption{Decomposition of strongly connected diagrams.}
\label{pic}
\end{figure}
\end{lemma}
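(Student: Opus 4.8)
The plan is to establish the identity in Figure \ref{pic} by a counting argument that mirrors the structure of Theorem \ref{th1} and the operator $\mathbb{F}$ from Definition \ref{def7}, but applied to strongly connected (one-particle irreducible) diagrams with two external legs. The object $\pov{-0.57}{1}{0.7}=\mathbb{H}^{\mathrm{sc}}_2(e^{\includegraphics[scale=0.6]{ex}})$ is the sum of all 1PI diagrams with exactly two external lines. The claimed decomposition should express this sum in terms of simpler strongly connected pieces (the propagator-like blobs $\mlg{}{}{}$, $\gol{}{}{}$ from (\ref{d})) glued through a single cubic vertex, reflecting how any two-leg 1PI diagram is built: cutting one internal line of a vacuum 1PI diagram and reconnecting through \includegraphics[scale=0.6]{ex}, or equivalently how a two-leg 1PI graph arises from attaching legs to internal lines.

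First I would set up the bookkeeping exactly as in the proof of the earlier theorems: introduce the auxiliary parameter $q$ so that $\mathbb{H}^{\mathrm{sc}}_2(e^{q\zeta})=\sum_{j\geqslant 0}\frac{q^j}{j!}\mathbb{H}^{\mathrm{sc}}_2(\zeta^j)$, where $\zeta=\includegraphics[scale=0.6]{ex}$, and likewise expand every blob appearing in Figure \ref{pic} order by order in $q$ (equivalently in $i\alpha$, since each vertex carries a factor $i\alpha$). Then I would argue combinatorially that a diagram with $j$ vertices and two external legs is strongly connected iff a certain one-particle-irreducibility condition holds on the graph obtained after amputating the external legs, and count in how many ways each such diagram is produced by the operations on the right-hand side of Figure \ref{pic} — namely cutting an internal line and inserting a vertex ($\mathbb{F}$), or directly joining external legs to an internal line of a vacuum 1PI blob. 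The factors $6$, $3$, and $9$ that appeared in equations (\ref{gh1})--(\ref{dd2}) should reappear here as the number of ways to attach a cubic vertex to a cut line ($6$), the symmetry factor of the tadpole-type attachment ($3$), and the combination $6+3=9$ from (\ref{dd1}); I would reuse those identities verbatim rather than rederive them.

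The key steps in order: (i) fix notation and the $q$-expansion for $\mathbb{H}^{\mathrm{sc}}_2$ and each blob in the figure; (ii) prove a graph-theoretic lemma characterizing which two-leg diagrams are 1PI in terms of their amputated cores, so that the "strongly connected" restriction is compatible with the cut-and-insert operation $\mathbb{F}$; (iii) on the right-hand side, count the multiplicity with which each target diagram is generated, carefully tracking the multinomial coefficients for distributing vertices and the internal-symmetry divisors $S_{id}$ exactly as in (\ref{po1})--(\ref{q1}); (iv) match coefficients of $q^j$ on both sides. The main obstacle I expect is step (ii) together with the precise overcounting analysis in (iii): unlike the connected/disconnected split of Theorem \ref{th1}, the one-particle-irreducible condition is not preserved by naive gluing, so I must show that cutting an internal line of a 1PI two-leg diagram and the inverse reconnection operation are genuinely inverse bijections on the relevant diagram classes, and that no 1PI diagram is produced with the wrong multiplicity or from a non-1PI piece. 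Once the bijection and its multiplicities are pinned down, the identity follows by comparing generating functions, exactly in the spirit of the earlier proofs in this section.
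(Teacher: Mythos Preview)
Your proposal rests on a misreading of what Figure~\ref{pic} actually asserts. You treat the two-leg 1PI sum $\pov{-0.57}{1}{0.7}=\mathbb{H}^{\mathrm{sc}}_2(e^{\zeta})$ as the object being decomposed, and you aim to express it via $\mlg{-0.6}{1}{0.7}$ and $\gol{-0.6}{1}{0.7}{0}$ using the cut-and-insert operator $\mathbb{F}$. In the paper the direction is the opposite: the left-hand side of Figure~\ref{pic} is the \emph{one}-leg 1PI sum $\gol{-0.6}{1}{0.7}{0}=\mathbb{H}^{\mathrm{sc}}_1(e^{\zeta})$, and the identity expresses it through the newly defined two-leg object $\pov{-0.57}{1}{0.7}$. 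Concretely, the paper singles out the unique vertex carrying the external leg, so that
\[
\gol{-0.6}{1}{0.7}{0}
=3\,\chups{-0.6}{0.53}{0.7}{0}
+\sum_{n\geqslant 1}\frac{1}{n!}\,\mathbb{H}^{\mathrm{sc}}_1\Big(\includegraphics[scale=0.6]{ex}\;\mathbb{H}^{\mathrm{c}}_2(\includegraphics[scale=0.6]{ex}^{\,n})\Big),
\]
then observes that a connected two-leg diagram whose closure through one extra vertex is 1PI must be a \emph{chain} of 1PI two-leg blocks, and resums that chain using the same multinomial/$S_{id}$ bookkeeping as in Theorem~\ref{th1}. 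The result is the tadpole plus $\sum_{k\geqslant 1}\frac{1}{k!}\,\mathbb{H}^{\mathrm{sc}}_1\big(\includegraphics[scale=0.6]{ex}\,(\pov{-0.57}{1}{0.7})^{k}\big)$, which is the content of Figure~\ref{pic}.

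Your $\mathbb{F}$-based plan is essentially the content of Exercise~\ref{dd5} (namely $\pov{-0.61}{1}{0.7}=\tfrac{1}{2}\mathbb{F}(\gol{-0.61}{1}{0.7}{0})$) and of the manipulations (\ref{dd1})--(\ref{dd2}), not of this lemma. The missing idea in your outline is precisely the ``pull out the external vertex, then chain-decompose the connected two-leg remainder into 1PI two-leg pieces'' step; without it, the bijection you hope to set up in your step~(ii) has the wrong source and target, and the overcounting analysis in step~(iii) cannot be carried out because cutting internal lines of vacuum 1PI diagrams does not enumerate one-leg 1PI diagrams in the required way.
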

\begin{proof} Let us start from the left hand side of the equality, drawn in Figure \ref{pic}, and use definition (\ref{d}) for the sum of strongly connected diagrams with one external line.
After that we can choose one vertex, one end of which will remain external. Of course, we should take into account the corresponding coefficient, that follows from selecting one element among $n+1$ identical ones. So we get the following chain of equalities
\begin{equation}\label{dd4}
\gol{-0.6}{1}{0.7}{0}=\mathbb{H}^{\mathrm{sc}}_1\left(
e^{\includegraphics[scale=0.6]{ex}}  \right)
=\sum_{n\geqslant 0}\frac{1}{(n+1)!}\mathbb{H}^{\mathrm{sc}}_1\left(
\includegraphics[scale=0.6]{ex}^{n+1}  \right)=
3\,\chups{-0.6}{0.53}{0.7}{0}+
\sum_{n\geqslant 1}\frac{1}{n!}\mathbb{H}^{\mathrm{sc}}_1\Big(
\includegraphics[scale=0.6]{ex}\,\,\mathbb{H}^{\mathrm{c}}_2\left(
\includegraphics[scale=0.6]{ex}^n \right) \Big).
\end{equation}

Then, we are going to use the fact that any connected diagram $\zeta$ from $\mathbb{H}^{\mathrm{c}}_2\left(
\includegraphics[scale=0.6]{ex}^n \right)$ can be represented as a tree after all strongly connected subdiagrams are replaced by dots. Moreover, we are interested only in such $\zeta$ that $\mathbb{H}(\zeta)$ is a strongly connected diagram. It means that we should study only chains. Hence, taking into account necessary factors, the last part of (\ref{dd4}) can be rewritten in the form
\begin{equation}
3\,\chups{-0.6}{0.53}{0.7}{0}+
\sum_{n\geqslant 1}
\sum_{k\geqslant 1}
\sum_{\substack{n_k\geqslant\ldots\geqslant n_1\geqslant 1 \\ n_k+\ldots+n_1=n}}
\frac{C^n_{n_k\ldots n_1}}{n!S_{id}(n_k,\ldots,n_1)}\mathbb{H}^{\mathrm{sc}}_1\Big(
\includegraphics[scale=0.6]{ex}\,\, 
\mathbb{H}^{\mathrm{sc}}_2\left(\includegraphics[scale=0.6]{ex}^{n_k}\right)\cdot\ldots\cdot 
\mathbb{H}^{\mathrm{sc}}_2\left(\includegraphics[scale=0.6]{ex}^{n_1}\right) \Big).
\end{equation}

Then, applying the following resummation
\begin{equation}
\sum_{n\geqslant 1}
\sum_{\substack{n_k\geqslant\ldots\geqslant n_1\geqslant 1 \\ n_k+\ldots+n_1=n}}
\frac{C^n_{n_k\ldots n_1}}{n!S_{id}(n_k,\ldots,n_1)}
\to
\sum_{n_k,\ldots,n_1\geqslant 1}\frac{1}{k!n_k!\cdot\ldots\cdot n_1!},
\end{equation}
relation (\ref{q1}), and definition (\ref{dd3}), we obtain the formula
\begin{equation}
3\,\chups{-0.6}{0.53}{0.7}{0}+
\sum_{k\geqslant 1}
\frac{1}{k!}\mathbb{H}^{\mathrm{sc}}_1\left(
\includegraphics[scale=0.8]{ex}\,\,\left(\pov{-0.57}{1}{0.7}\right)^k \right),
\end{equation}
from which the statement of the lemma follows.
\end{proof}
\begin{lem}\label{dd5}
Verify that the following equality holds
$\pov{-0.61}{1}{0.7}=\frac{1}{2}
\mathbb{F}\Big(\gol{-0.61}{1}{0.7}{0}\Big)$.
\end{lem}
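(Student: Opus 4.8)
The plan is to prove the equality of the two diagram sums graph-by-graph, graded by the number of cubic vertices, reducing everything to a two-to-one correspondence between strongly connected two-point diagrams and triples (strongly connected one-point diagram, internal line, attached vertex). Throughout let $\zeta$ denote the cubic vertex.

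The algebraic part is essentially free. Running the computation that produced (\ref{dd1}) not for a single term $\gamma_1$ but for the whole sum $\mathbb{H}^{\mathrm{sc}}_1(e^{\zeta})$ at once — using that $i\alpha\,\partial_{i\alpha}=\alpha\partial_\alpha$ reads off the number of vertices of each diagram and that $3V-1=2L$ for a one-point diagram — gives
\begin{equation*}
\mathbb{F}\big(\mathbb{H}^{\mathrm{sc}}_1(e^{\zeta})\big)=9i\alpha^{2}\,\partial_\alpha\,\mathbb{H}^{\mathrm{sc}}_1(e^{\zeta})-3i\alpha\,\mathbb{H}^{\mathrm{sc}}_1(e^{\zeta}).
\end{equation*}
So the exercise is equivalent to $\pov{-0.61}{1}{0.7}=\tfrac12\big(9i\alpha^{2}\partial_\alpha\mathbb{H}^{\mathrm{sc}}_1(e^{\zeta})-3i\alpha\mathbb{H}^{\mathrm{sc}}_1(e^{\zeta})\big)$; however I would not lean on this reformulation but prove the diagrammatic identity directly, since that is what makes the factor $\tfrac12$ transparent.

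For the combinatorial core, split $\pov{-0.61}{1}{0.7}=\sum_{m\geqslant 1}\frac1{m!}\mathbb{H}^{\mathrm{sc}}_2(\zeta^{m})$ and $\mathbb{F}\big(\gol{-0.61}{1}{0.7}{0}\big)=\sum_{m\geqslant 1}\frac1{m!}\mathbb{F}\big(\mathbb{H}^{\mathrm{sc}}_1(\zeta^{m})\big)$, both sums starting at $m=1$ (a single cubic vertex cannot carry two external legs and stay strongly connected, and $\mathbb{H}^{\mathrm{sc}}1=0$ by Definition \ref{def5}). It then suffices to establish, for every $m\geqslant 1$,
\begin{equation*}
\mathbb{F}\big(\mathbb{H}^{\mathrm{sc}}_1(\zeta^{m})\big)=\frac{2}{m+1}\,\mathbb{H}^{\mathrm{sc}}_2(\zeta^{m+1}).
\end{equation*}
First I would record the structural fact: in a strongly connected two-point diagram $\delta$ the two external lines attach to two \emph{distinct} trivalent vertices $v_1\neq v_2$, each then having exactly two internal half-edges and no self-loop — otherwise the lone internal half-edge of a common boundary vertex, or the bridge separating a self-looped boundary vertex, would contradict Definition \ref{def4}/\ref{def5}. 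Given this, for $i=1,2$ define $\gamma^{(i)}$ by deleting $v_i$ together with its external line and splicing the two internal half-edges at $v_i$ into one internal line. One checks $\gamma^{(i)}$ is again strongly connected and one-point (if its spliced line were a bridge, one of the two internal lines at $v_i$ would be a bridge of $\delta$), and re-inserting a trivalent vertex on that spliced line returns $\delta$. Hence every $\delta$ on $m+1$ vertices is produced by $\mathbb{F}$ out of $\gamma^{(1)}$ and out of $\gamma^{(2)}$, and in no other way: this is the origin of the factor $2$; the factor $(m+1)^{-1}$ merely compensates that $\mathbb{H}^{\mathrm{sc}}_2(\zeta^{m+1})$ sums over $m+1$ interchangeable vertices whereas $\mathbb{F}\big(\mathbb{H}^{\mathrm{sc}}_1(\zeta^{m})\big)$ treats the freshly attached vertex as distinguished.

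The remaining, and hardest, step is the bookkeeping of weights in this $2:1$ correspondence: one must match the $3!=6$ ways of gluing the cut to the new trivalent vertex, summed over the internal lines of $\gamma^{(i)}$ and weighted by the symmetry factor of $\gamma^{(i)}$, against the symmetry factor of $\delta$ — including automorphisms of $\delta$ that transpose the two internal half-edges at a boundary vertex, or swap parallel internal lines created by the splice — and one must treat the degenerate cases where the spliced line becomes a self-loop or where $\gamma^{(1)}\cong\gamma^{(2)}$. As in the proofs of Theorem \ref{th} and Theorem \ref{th1}, I would do the entire count with all half-edges labelled (so that no symmetry factors enter), verify the correspondence there, and only at the very end divide by the relevant factorials, using the identity $k!=S_{id}S_{diff}$ exactly as in Lemma \ref{dd6}. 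As a consistency check one verifies the lowest order by hand: $\mathbb{H}^{\mathrm{sc}}_1(\zeta^{1})=3\zeta$ (the tadpole), and cutting its single internal line and attaching the second vertex in $6$ ways produces the bubble with weight $18\zeta^{2}=\mathbb{H}^{\mathrm{sc}}_2(\zeta^{2})$, so that $\tfrac12\mathbb{F}\big(\gol{-0.61}{1}{0.7}{0}\big)$ and $\pov{-0.61}{1}{0.7}$ indeed agree at two vertices, both equal to $9\zeta^{2}$.
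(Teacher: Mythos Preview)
The paper gives no proof of this statement --- it is stated as an \emph{Exercise} (the \texttt{lem} environment is defined as ``Exercise'') and is simply invoked in the paragraph that follows. So there is no paper-proof to compare against; the question is only whether your approach is sound.

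Your strategy is correct and is exactly in the spirit of the surrounding material (in particular the proof of Lemma~\ref{dd6}). The key identity
\[
\mathbb{F}\big(\mathbb{H}^{\mathrm{sc}}_1(\zeta^{m})\big)=\frac{2}{m+1}\,\mathbb{H}^{\mathrm{sc}}_2(\zeta^{m+1})
\]
is true, and your structural observation --- that in any strongly connected two--point cubic diagram the two external half-edges sit on two distinct vertices, neither of which carries a self-loop --- is both correct and the heart of the matter. The deletion/splice and insertion maps you describe are genuinely inverse to one another once everything is labelled, and the factor $2/(m+1)$ then drops out from the relabelling symmetry exactly as you say.

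One comment: the ``hardest step'' you flag is in fact not hard. In the fully labelled picture (all $3(m+1)$ half-edges distinguished), every strongly connected two--point configuration on vertices $\{1,\dots,m+1\}$ in which vertex $m+1$ is a boundary vertex arises from $\mathbb{F}$ applied to a unique labelled one--point configuration on $\{1,\dots,m\}$, via a unique internal line and a unique one of the six gluings --- there are no symmetry factors, no automorphisms, and no degenerate cases to track, because labels kill all of that. The cases you worry about (spliced line becoming a self-loop, or $\gamma^{(1)}\cong\gamma^{(2)}$) only look special if one works with unlabelled graphs; in the labelled count they are absorbed automatically. So your proposed route via labelled half-edges is not merely a consistency check but already the complete argument, and the paragraph of caveats can be dropped.

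A minor quibble: your parenthetical ``a single cubic vertex cannot carry two external legs and stay strongly connected'' is not quite the reason $\mathbb{H}^{\mathrm{sc}}_2(\zeta^1)=0$; it vanishes simply because one half-edge would be left unpaired (odd parity in Definition~\ref{def3}). This does not affect the argument.
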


Now we are ready to apply all relations described above to our problem. Indeed, by using the definition from Exercise \ref{dd5}, we can rewrite the right hand side of the equality depicted 
in Figure \ref{pic} in the form drawn in Figure \ref{picpic}.
\begin{figure}[h]
	\center{\includegraphics{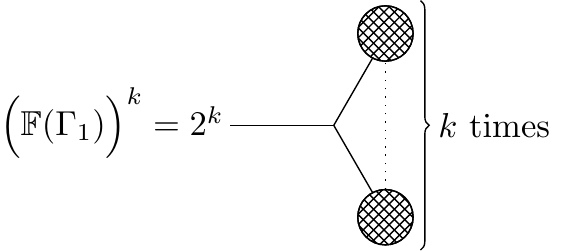}}
	\caption{Relation for the simplest cubic model.}
	\label{picpic}
\end{figure}

The last relation leads to the connection between the equation from Lemma \ref{dd6} and the equation from (\ref{dd2}). Hence, we obtain the following solutions
\begin{equation}
\Gamma_0=
\mlg{-0.6}{1}{0.7}\,,~~~~~~~~~
\Gamma_1=
\gol{-0.6}{1}{0.7}{0}\,.
\end{equation}

\section{Background field method}\label{back}
\subsection{The main idea}
Let us note, that due to the factorization (\ref{intI}) of the cubic model we can study only one-dimensional integral. The main object of this section is $I(\alpha)$ from the second formula of (\ref{intI}). The logarithm of $I(\alpha)$ is called an effective action 
\begin{equation}
J\left(\alpha \right)= \mathrm{ln}\left(\frac{1}{\sqrt{2 \pi}} \int e^{S[x]} dx\right),
\end{equation}
and plays an important role in physics, because it equals to sum of the classical action $S[y]$ from Definition \ref{def6} at some point $y$ and quantum corrections. Let us derive this representation.

First of all we shift the integration variable $x \rightarrow x+y$ and then scale it as follows $x\to x/\sqrt{1-6 i \alpha y}$. After that we obtain the decomposition
\begin{align}
	\label{fr}
	J\left(\alpha \right)= S[y]-\frac{1}{2}\mathrm{ln}(1-6i \alpha y )+\mathrm{ln}\left(\frac{1}{\sqrt{2 \pi}} \int e^{-\frac{x^2}{2}+is(\alpha,y)x^3 +t(\alpha,y)x} dx\right),
\end{align}
where
\begin{equation*}
t=t(\alpha,y)=\frac{-y+3i\alpha y^2}{(1-6 i \alpha y)^{1/2}},\,\,\,\,\,\,
s=s(\alpha,y)=\frac{\alpha}{(1-6 i \alpha y)^{3/2}}.
\end{equation*}

From formula (\ref{fr}) we see, that the quantum action $J\left(\alpha \right)$ equals to sum of the classical action $S[y]$, the first correction $-\frac{1}{2}\mathrm{ln}(1-6i \alpha y)$, and the high corrections.

We note, that  a free parameter $y$ appeared in the integral. It can be selected in a convenient way. Let us pay attention to the third term (with a logarithm)
\begin{equation*}
W(t,s)=W\bigg(\frac{-y+3i\alpha y^2}{(1-6 i \alpha y)^{1/2}},\frac{\alpha}{(1-6 i \alpha y)^{3/2}}\bigg).
\end{equation*}

Then, we are going to use the Lemma \ref{gammas}, that allows us to interpret $W(t,s)$ as the sum of strongly connected diagrams.  To do this, let us select the first argument of the function $W$ equal to the value of the function $\tilde{\beta}$ at the point $s$. This transition allows you to reduce the number of calculations in the general case.

Thus, we need to solve the nonlinear equation $t(\alpha,y)=\tilde{\beta}(s(\alpha,y))$. Let us use representation (\ref{eq2}) for the function $\tilde{\beta}$  and write the relation explicitly
\begin{equation} \label{eqfory}
	\frac{-y+3i\alpha y^2}{(1-6 i \alpha y)^{1/2}}=\sum_{k\geqslant 0} b_{2k+1} \left( \frac{\alpha}{(1-6 i \alpha y)^{3/2}}\right)^{2k+1} .
\end{equation}

Actually we need to find the solution $y$ of this equation as a function of the parameter $\alpha$. Then, substituting it into the formula (\ref{fr}), we get the value of the integral $J(\alpha)$. At the same time, if we consider only the first few strongly connected contributions to $W$, we get an approximate value of the integral. Note that for the calculation of strongly connected diagrams we must take into account, that each line corresponds to $1$, and each vertex
\includegraphics[scale=0.6]{ex}
corresponds to $i\alpha/(1-6i\alpha y_0)^{3/2}$.

In other words, using Lemmas \ref{lem6} and \ref{lem8}, we can write out the third term of formula (\ref{fr}) in the form
\begin{equation*}
W(\tilde{\beta}(s),s)=\Gamma_0(s)=\sum_{k\geqslant 1} c_{2k}s^{2k}=
\sum_{k\geqslant 1} c_{2k} \left( \frac{\alpha}{(1-6 i \alpha y)^{3/2}}\right)^{2k}.
\end{equation*}

\subsection{Example}
In this section we are going to find an approximation for the solutions of equation (\ref{eqfory}). We search for the solution in the form of a Taylor series $y(\alpha)=\sum_{k\geqslant 0}y_k\alpha^k$ in powers of the parameter $\alpha$. 

Then, substituting the following expansions
\begin{equation}
\frac{-y+3i\alpha y^2}{(1-6 i \alpha y)^{1/2}}=-y+\frac{9 \alpha^2 y^3}{2}+o\left(\alpha^2 \right)
,\,\,\,\,\,\,
\frac{\alpha}{(1-6 i \alpha y)^{3/2}}=\alpha +9 i \alpha^2y+ o\left(\alpha^2 \right),
\end{equation}
into equation (\ref{eqfory}) and using the ansatz for $y(\alpha)$, we obtaining the relations for the first three coefficients
\begin{equation*}
y_0=0,\,\,\,\,\,\,
y_1=-b_1=-3i,\,\,\,\,\,\,
y_2=\frac{1}{2} \left( 9 y_0^3-18 i b_1 y_0 \right)=0.
\end{equation*}

Hence, we have $y=-3i \alpha+o\left(\alpha^2 \right)$, and the integral in the first approximation
\begin{equation*}
	\begin{split}
		J\left( \alpha \right)=S[y]-\underbrace{\frac{1}{2}\mathrm{ln}(1-6 i \alpha y) }_{\frac{1}{2}\left(-6i\alpha y +o\left( \alpha^2 \right) \right)}+\underbrace{W\bigg(\frac{-y+3i\alpha y^2}{\sqrt{1-6 i \alpha y}},\frac{\alpha}{\sqrt{1-6 i \alpha y}^3}\bigg)}_{c_0+c_1\alpha+ c_2 \alpha^2+o \left( \alpha^2 \right)}=-\frac{15}{2} \alpha^2+o \left(\alpha^2 \right).
	\end{split}
\end{equation*}
gives the same result, obtained in Exercise \ref{d3}, where Lemmas \ref{lem6} and \ref{lem8} have been used.

\section{Conclusion}\label{conc}
Now we would like to give some comments on the transition from the simplest cubic model to an arbitrary one. First of all we note that the diagram equalities from Sections \ref{diag} and \ref{gen}, the Dyson--Schwinger equation, the Legendre transformation have a general nature. They are valid for all models. So to transit from one model to other we need to change only several basic blocks, that are presented in the  following table.
\renewcommand{\arraystretch}{1.5}
\begin{center}
\begin{tabular}{ | c | c | c | }
	\hline
	 & Cubic "toy" model & Arbitrary model \\ \hline
	Variable set & $k\in\mathbb{N}$ & $z\in\mathcal{M}$, $\mathcal{M}$ is a manifold \\
	\hline
	Field& $x_{(\cdot)}:\mathbb{N}\to\mathbb{R}$ & $\phi(\cdot)\in\mathcal{A}$, $\mathcal{A}$ is a functional space on $\mathcal{M}$\\
	\hline
	Classical action & $S[x]=\sum_{k\geqslant 1}(-x_k^2/2+i\alpha_k^{\phantom{3}} x_k^3)$& $S[\phi]$ is a functional on $\mathcal{A}$\\
	\hline
	Propagator & $\delta_{ij}$, $i,j\in\mathbb{N}$ & $G(z_1,z_2)$, $z_1,z_2\in\mathcal{M}$\\
	\hline
	Derivative & $\frac{\partial}{\partial x_k}$ & $\frac{\delta}{\delta\phi(z)}$ is a functional derivative\\
	\hline
	Path integral & $\lim\limits_{N\to+\infty}\int_{\mathbb{R}^n}\exp(S[\mathbb{X}_N])\,d\mathbb{X}_N$&
	$\int_{\mathcal{A}}\exp(S[\phi])\,\mathcal{D}\phi$, $\mathcal{D}\phi$ is a measure on $\mathcal{A}$\\
	\hline
	Vertices & $\includegraphics[scale=0.6]{ex}\sim i\alpha_k^{\phantom{3}}$ & $\zeta$ is a set of vertices \\
	\hline
\end{tabular}
\end{center}
\renewcommand{\arraystretch}{1}

Let us note that all calculations of Section \ref{res} are based on the model properties. For example, on the relation between \includegraphics[scale=0.6]{ex} and $i\alpha$. Hence, the calculations do not have general nature. However, the result obtained in Section \ref{res}, that $\Gamma_0$ and $\Gamma_1$ are the sums of strongly connected diagrams without and with one external line correspondingly, is the universal one.

Section \ref{back} is devoted to the background field method. Of course, we again did all calculations by using the simplest model. But the main concept, that the method consists in the shift of variable by a special type function, is preserved for all the examples.

It is also worth noting that these lecture notes contain only basic and elementary equalities. An analysis of more complex models and their properties can be found in the list of references given above.

\paragraph{Acknowledgements.}
This research is supported by the grant in the form of subsidies from the Federal budget for state support of creation and development world-class research centers, including international mathematical centers and world-class research centers that perform research and development on the priorities of scientific and technological development. The agreement is between MES and PDMI RAS from \textquotedblleft8\textquotedblright\,November 2019 \textnumero\ 075-15-2019-1620. Also, A.V. Ivanov is a winner of the Young Russian Mathematician award and would like to thank its sponsors and jury.

\end{document}